\DeclareMathOperator*{\argmax}{arg\,max}
\DeclareMathOperator*{\argmin}{arg\,min}
\def\BibTeX{{\rm B\kern-.05em{\sc i\kern-.025em b}\kern-.08em
    T\kern-.1667em\lower.7ex\hbox{E}\kern-.125emX}}
\newtheorem{theorem}{Theorem}
\newtheorem{remark}{Remark}
\newtheorem{definition}{Definition}
\newtheorem{lemma}{Lemma}
\def\proof{\noindent{\it Proof}. \ignorespaces}
\def\endproof{\vbox{\hrule height0.6pt\hbox{\vrule height1.3ex%
width0.6pt\hskip0.8ex\vrule width0.6pt}\hrule height0.6pt}}
\newcommand{\N}{\mathbb{N}}
\newcommand{\E}{\mathbb{E}}
\begin{document}

\title{Strategic Successive Refinement Coding for Bayesian Persuasion with Two Decoders}

% %%% Single author, or several authors with same affiliation:
 
 \author{\IEEEauthorblockN{Rony Bou Rouphael and Ma\"{e}l Le Treust%\IEEEauthorrefmark{1} 
\thanks{%\IEEEauthorrefmark{1}
Ma\"el Le Treust gratefully acknowledges financial support from INS2I CNRS, DIM-RFSI, SRV ENSEA, UFR-ST UCP, INEX Paris Seine Initiative and IEA Cergy-Pontoise. This research has been conducted as part of the project Labex MME-DII (ANR11-LBX-0023-01).} 
}\\
\IEEEauthorblockA{
ETIS UMR 8051, CY Cergy-Paris Université, ENSEA, CNRS,\\
6, avenue du Ponceau, 95014 Cergy-Pontoise CEDEX, FRANCE\\
Email: \{rony.bou-rouphael ; mael.le-treust\}@ensea.fr}\\
 }

\maketitle

%%%%%%
%% Abstract: 
%%
\begin{abstract}
We study the multi-user Bayesian persuasion game between one encoder and two decoders, where the first decoder is  better informed than the second decoder. We consider two perfect links, one to the first decoder only, and the other to both decoders. We consider that the encoder and both decoders are endowed with distinct and arbitrary distortion functions. We investigate the strategic source coding problem in which the encoder commits to an encoding while the decoders select the sequences of symbols that minimize their long-run respective distortion functions. We characterize the optimal encoder distortion value by considering successive refinement coding with respect to a specific probability distribution which involves two auxiliary random variables, and captures the incentive constraints of both decoders.
\end{abstract}

%\textit{A full version of this paper is accessible at:}
%\url{https://arxiv.org/pdf/2105.06201.pdf} 

%\lipsum[1-2]

\section{Introduction} 

The optimization of distinct and arbitrary distortion functions resulting from the communication between several autonomous devices with non-aligned objectives is under study. This problem was originally formulated in the game theory literature and referred to as the sender-receiver game, where the amount of information transmitted is generally unrestricted. In the seminal paper \cite{crawford1982}, Crawford and Sobel investigate the Nash equilibrium solution of the cheap talk game in which the encoder and the decoder have distinct objectives and choose their coding strategies simultaneously. In \cite{KamenicaGentzkow11}, Kamenica and Gentzkow formulate the Bayesian persuasion game
%Stackelberg version of the strategic communication game, 
in which the encoder is the Stackelberg leader and the decoder is the Stackelberg follower. More recently, Koessler et al. in \cite{KoesslerLaclauTomala21} investigate games of information design where multiple encoders influence the behavior of several decoders.
%This setting, referred to as the Bayesian persuasion game, is the one under study in this paper by considering two decoders. 
As a motivating example, one could think of a company trying to convince investors into putting money on a certain number of projects, or a job seeker trying to persuade recruiters to be hired.

This problem is an attractive multi-disciplinary subject of study. The Nash equilibrium solution is investigated for multi-dimensional sources and quadratic distortion functions in \cite{sar1}, \cite{SaritasFurrerGeziciLinderYukselISIT2019}, whereas the Stackelberg solution is studied in \cite{sar2}. The computational aspects of the persuasion game are considered in \cite{dughmi}. The strategic communication problem with a noisy channel is investigated in \cite{AkyolLangbortBasar15}, \cite{akyol2017information}, \cite{LeTreustTomala(Allerton)16}, \cite{jet}, and four different scenarios of strategic communication are studied in \cite{pointtopoint}. The case where the decoder privately observes a signal correlated to the state, also referred to as the Wyner-Ziv setting \cite{wyner-it-1976}, is studied in \cite{akyol2016role}, \cite{corsica2020} and \cite{LeTreustTomalaISIT21}. Vora and Kulkarni investigate the achievable rates for the strategic communication problem in \cite{voraachievable}, \cite{voraextraction} where the decoder is the Stackelberg leader.

In this paper, we investigate a Bayesian persuasion game with two decoders and restricted communication. We consider an i.i.d. source of information and we suppose that %extend the problem of strategic communication with restricted communication to a multi-receiver Bayesian persuasion game repeated $n$ times, in which 
the observation of the first decoder contains the observation of the second decoder, as in Fig.~\ref{fig:successive00}. %and we investigate the asymptotic behavior of the encoder's long-rundistortion.
More specifically, we assume that the encoder $\mathcal{E}$ selects and announces beforehand the compression scheme to be implemented. Upon receipt of the indices, the decoders $\mathcal{D}_1$ and $\mathcal{D}_2$ update their Bayesian beliefs over the source sequence and select the action sequences that minimizes their respective distortion functions. We characterize the optimal encoder distortion value obtained via the successive refinement coding with respect to the distribution that involves two auxiliary random variables, and that satisfies both decoders incentive constraints.

\begin{figure}
    \centering
    \begin{tikzpicture}[xscale=4,yscale=1.3]
%\draw[thick,-](0,0)--(0.5,0)--(0.5,0.5)--(0,0.5)--(0,0);
%\node[below, black] at (0.25,0.4) {$\mathcal{P}$};
\draw[thick,->](0.5,0.25)--(0.75,0.25);
\draw[thick,-](0.75,-0.3)--(1.25,-0.3)--(1.25,0.8)--(0.75,0.8)--(0.75,-0.3);
%\draw[thick,->](1.25,0.25)--(1.75,0.25);
%\draw[thick,-](1.75,0)--(2,0)--(2,0.5)--(1.75,0.5)--(1.75,0);
\node[below, black] at (1,0.48) {$\mathcal{E}$};
\node[above, black] at (1,1.15) {$d_e^n(\sigma,\tau_1,\tau_2)$};
\node[above, black] at (2.4,1.15) {$d_1^n(\sigma,\tau_1)$};
\node[above, black] at (2.4,-1.3) {$d_2^n(\sigma,\tau_2)$};
%\node[below, black] at (1.85,0.4) {$\mathcal{T}$};
%\draw[thick,-](2,0.25)--(2.10,0.25)--(2.10,1);
%\draw[thick, ->](2.10,1)--(2.25,1);
%\draw[thick,-](2,0.25)--(2.10,0.25)--(2.10,-0.6);
\draw[thick, ->](1.25,0.7)--(2.25,0.7);
\draw[thick, ->](2,-0.2)--(2,0.6)--(2.25,0.6);
\draw[thick, ->](1.25,-0.2)--(2.25,-0.2);
\draw[thick,-](2.25,0.5)--(2.55,0.5)--(2.55,1)--(2.25,1)--(2.25,0.5);
\node[below, black] at (2.4,1) { $\mathcal{D}_{1}$};
\draw[thick,-](2.25,-0.5)--(2.55,-0.5)--(2.55,0)--(2.25,0)--(2.25,-0.5);
\node[below, black] at (2.4,0) { $\mathcal{D}_{2}$};
\draw[thick,->](2.55,0.75)--(2.75,0.75);
\draw[thick,->](2.55,-0.25)--(2.75,-0.25);
\node[right, black] at (2.75,-0.25) {$V^n_2$};
\node[right, black] at (2.75,0.75) {$V^n_1$};
\node[above, black] at (0.62,0.25) {$U^n$};
%\node[above, black] at (1.5,0.25) {$X^n$};
%\node[right, black] at (2.4,0.4) {$Z_1^n$};
%\node[right, black] at (2.4,-1.3) {$Z_2^n$};
\node[above, black] at (1.78,0.66){$M_1\in \{1,..2^{\lfloor nR_1 \rfloor}\}$};
%\draw[thick][dashed,->](2.41,0.3)--(2.41,0.7);
%\draw[thick, dashed,->](2.4,-1.4)--(2.4,-1);
\node[below, black] at (1.78,-0.15){$M_2\in \{1,..2^{\lfloor nR_2 \rfloor}\}$};
  
    \end{tikzpicture}
    \caption{Strategic successive refinement source coding. %XXX add the distortion functions to the figure
    }
    \label{fig:successive00}
\end{figure}
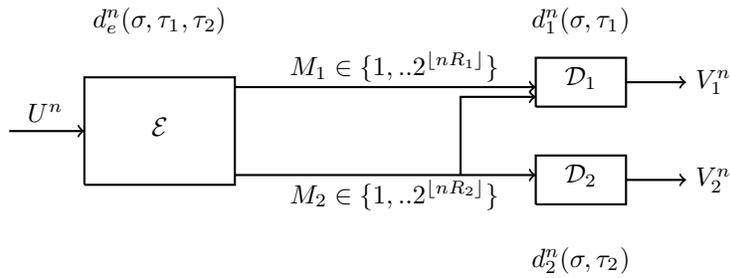

\subsection{Notations}

%\subsubsection{Notations}

Let $\mathcal{E}$ denote the encoder and $\mathcal{D}_i$ denote the decoder $i \in \{1,2\}$. Notations $U^n$ and $V^n_i$ denote the $n$-sequences of random variables of source information $u^n=(u_{1},...,u_{n}) \in \mathcal{U}^n $, and decoder $\mathcal{D}_i$ actions $v^n_i \in \mathcal{V}^n_i$ respectively for $i \in \{1,2\}$. Calligraphic fonts $\mathcal{U}$ 
 and $\mathcal{V}_i$ denote the alphabets and lowercase letters $u$ 
 and $v_i$ denote the realizations.
 %%such that $\mathcal{P}_{U^n,Z^n_1,Z_2^n} %= \mathcal{P}(U^n)\mathcal{P}(Z_1^n|U^n)\mathcal{P}(Z_2^n|Z_1,U^n).$
%The private observations $Z_1$ of decoder $\mathcal{D}_1$ and $Z_2$ of decoder $\mathcal{D}_2$ are correlated to $(U_0,U_1,U_2)$ according to the conditional probability distribution  $\mathcal{P}_{Z_1,Z_2|U}$.
%The i.i.d. memory-less channel distribution will be denoted by $\mathcal{T}_{Y_1,Y_2|X}$ such that $\mathcal{T}_{Y_1^n,Y_2^n|X^n}(Y_1^n,Y_2^n|X^n)=\prod_{t=1}^n \mathcal{T}_{Y_{1,t},Y_{2,t}|X_t}(Y_{1,t},Y_{2,t}|X_t).$\\
%Posterior beliefs of decoder $\mathcal{D}_i$ follow the probability distribution $\mathcal{Q}_i(u|z_i)$ over $\ \mathcal{U}\times\mathcal{Z}_i$. The joint probability distribution $\mathcal{Q}_i(u,z_1,z_2) \in \Delta(\mathcal{U}\times\mathcal{Z}_1\times\mathcal{Z}_2)$ decomposes as follows: $\mathcal{Q}_i(U^n,Z_1^n,Z_2^n) = \mathcal{Q}_i(U^n)\mathcal{Q}_i(Z_1^n|U^n)\mathcal{Q}_i(Z_2^n|Z_1,U^n).$\\
%Posterior beliefs of decoder $\mathcal{D}_1$ follow the probability distribution $\mathcal{Q}^1_{U^n|M_1M_2} \in \Delta(\mathcal{U}^n)$, and beliefs of decoder $\mathcal{D}_2$ follow the probability distribution $\mathcal{Q}^2_{U^n|M_2} \in \Delta(\mathcal{U}^n)$. 
\begin{comment}
The joint probability distribution $\mathcal{Q}^i_{U^n,W^n_1,W^n_2} \in \Delta(\mathcal{U}^n\times\mathcal{W}^n_1\times\mathcal{W}^n_2)$ decomposes as follows $\mathcal{Q}^i_{U^n,W_1^n,W_2^n} = \mathcal{Q}^i_{U^n}\mathcal{Q}^i_{W_1^nW_2^n|U^n}\mathcal{Q}^i_{W_2^n|Y^n_1,U^n}.$ \\
\end{comment}
For a discrete random variable $X,$ we denote by $\Delta(\mathcal{X})$ the probability simplex, i.e. the set of probability distributions over $\mathcal{X},$ and by $\mathcal{P}_{X}(x)$ the probability mass function $\mathbb{P}\{X=x\}$. Notation $X -\!\!\!\!\minuso\!\!\!\!- Y -\!\!\!\!\minuso\!\!\!\!- Z$ stands for the Markov chain property $\mathcal{P}_{Z|XY}=\mathcal{P}_{Z|Y}$. 

% \mlt{do not use "small", remove everywhere}
 \section{System Model}
In this section, we aim at formulating the coding problem. We assume that the information source $U$ follows the independent and identically distributed (i.i.d) probability distribution $\mathcal{P}_{U}\in\Delta(\mathcal{U})$.
\begin{definition}
Let $R_1,R_2 \in \mathbb{R}_{+}^2=[0,+\infty[^2$, where $[0,+\infty[$ denotes the set of non-negative real numbers, and $n\in\mathbb{N}^{\star}=\mathbb{N}\backslash\{0\}$. The encoding $\sigma$ and decoding $\tau_i$ strategies of the encoder $\mathcal{E}$ and decoders $\mathcal{D}_i$, $i \in \{1,2\}$ are defined by
\begin{align}
    \sigma:& U^n \longrightarrow \Delta(\{1,2,..2^{\lfloor nR_1 \rfloor}\}\times\{1,2,..2^{\lfloor nR_2\rfloor}\}),\\
\tau_1:& \{1,2,..2^{\lfloor nR_1 \rfloor}\}\times\{1,2,..2^{\lfloor nR_2\rfloor}\} %\times \mathcal{Z}_1^n 
\longrightarrow \Delta(\mathcal{V}_1^n),\\
\tau_2:& \{1,2,..2^{\lfloor nR_2\rfloor}\} %\times \mathcal{Z}_2^n 
\longrightarrow \Delta(\mathcal{V}_2^n),
 \end{align} 
where $\lfloor x \rfloor=\max\{m\in\mathbb{Z}, \  m \leq x\}$ for $x\in\mathbb{R}$. We denote by $\mathcal{S}(n,R_1,R_2)$ the set of coding triplets ($\sigma,\tau_1,\tau_2)$.
\end{definition}
The stochastic coding strategies ($\sigma,\tau_1,\tau_2) \in \mathcal{S}(n,R_1,R_2)$ induce a joint probability distribution $\mathcal{P}^{\sigma,\tau_1,\tau_2} \in \Delta(U^n \times  \{1,2,..2^{\lfloor nR_1 \rfloor}\}\times\{1,2,..2^{\lfloor nR_2\rfloor}\} \times V^n_1\times V^n_2)$ defined by
\begin{align}
&\forall (u^n,m_1,m_2,v_1^n,v_2^n), \quad \mathcal{P}^{\sigma,\tau_1,\tau_2}(u^n,m_1,m_2,v_1^n,v_2^n) =%\nonumber\\
&\bigg(\prod_{t=1}^n\mathcal{P}_{U}(u_t)\bigg)\sigma(m_1,m_2|u^n)\tau_1(v_1^n|m_1,m_2)\tau_2(v_2^n|m_2). \label{jointproba}
\end{align}

\begin{definition} \label{singlelongrun} 
We consider arbitrary single-letter distortion functions $d_e:\mathcal{U}\times\mathcal{V}_1\times\mathcal{V}_2 \longrightarrow \mathbb{R}$ for the encoder $\mathcal{E}$, $d_1:\mathcal{U}\times\mathcal{V}_1 \longrightarrow \mathbb{R}$ for the decoder $\mathcal{D}_1$ and $d_2:\mathcal{U}\times\mathcal{V}_2 \longrightarrow \mathbb{R}$ for the decoder $\mathcal{D}_2$.
%of the encoder $d_e$ and both decoders $d_i$ for $\ i \in \{1,2\}$ are defined by
% \begin{align}
%      &d_e:\mathcal{U}\times\mathcal{V}_1\times\mathcal{V}_2 \longrightarrow \mathbb{R},\\
%      &d_1:\mathcal{U}\times\mathcal{V}_1 \longrightarrow \mathbb{R}, \\
%      &d_2:\mathcal{U}\times\mathcal{V}_2 \longrightarrow \mathbb{R}.
% \end{align}
 %\end{definition} 
%\mlt{never use "footnotesize" or "tiny" or other along the paper}
%\normalfont \mlt{remove the commands "normalfont" after you modify the definition environments evreywhere}
%\begin{definition}\label{def:longrun}
The long-run distortion functions are defined by  
\begin{align*}
&d_e^n(\sigma,\tau_1,\tau_2)= \mathbb{E}_{\sigma,\tau_1,\tau_2}\Bigg[\frac{1}{n}\sum_{t=1}^n d_e(U_t,V_{1,t},V_{2,t})\Bigg]  \\
&=\sum_{u^n, v_1^n, v_2^n} \mathcal{P}_{U^nV_1^nV_2^n}^{\sigma,\tau_1,\tau_2}(u^n, v_1^n, v_2^n)\cdot\Bigg[\frac{1}{n}\sum_{t=1}^n d_e(u_t,v_{1,t},v_{2,t})\Bigg], \\
&d^n_{1}(\sigma,\tau_1)= %\mathbb{E}_{\sigma,\tau_1}\Bigg[\frac{1}{n}\sum_{t=1}^n d_1(U_t,V_{1,t})\Bigg]
\sum_{u^n, v_1^n} \mathcal{P}_{U^nV_1^n}^{\sigma,\tau_1}(u^n, v_1^n)\cdot\Bigg[\frac{1}{n}\sum_{t=1}^n d_1(u_t,v_{1,t})\Bigg], \nonumber\\ 
&d^n_{2}(\sigma,\tau_2)=\sum_{u^n, v_2^n} \mathcal{P}_{U^nV_2^n}^{\sigma,\tau_2}(u^n, v_2^n)\cdot\Bigg[\frac{1}{n}\sum_{t=1}^n d_2(u_t,v_{2,t})\Bigg].
\label{dn2} 
\end{align*}
\end{definition}

 In the above equations, $\mathcal{P}^{\sigma,\tau_1,\tau_2}_{U^nV_1^nV_2^n}$, $\mathcal{P}^{\sigma,\tau_1}_{U^n V_1^n}$ and $\mathcal{P}^{\sigma,\tau_2}_{U^n V_2^n}$ denote the marginal distributions of $\mathcal{P}^{\sigma,\tau_1,\tau_2}$ defined in \eqref{jointproba} over %the n-sequences 
 $(U^n,V_1^n,V_2^n)$, $(U^n,V_1^n)$, and $(U^n,V_2^n)$ respectively.  
 %-------
 %We consider the strategic communication game in which the encoder select $\sigma $ that minimizes $d_e^n(\sigma,\tau_1,\tau_2)$ and the decoders choose $\tau_i$, $i \in \{1,2\}$ in order to minimize the long-run distortion functions ${d_i}^n(\sigma,\tau_i)$,  $i \in \{1,2\}$. 
\begin{definition} For any encoding strategy $\sigma,$ the set of best-response strategies of decoder $i \in \{1,2\}$ is defined by
\begin{equation}
    BR_{i}(\sigma)=\Big\{\tau_i, {d_i}^n(\sigma,\tau_i) \leq {d_i}^n(\sigma,\Tilde{\tau_i}), \forall \ \Tilde{\tau_i} \Big\}. 
\end{equation}
\end{definition}

If several pairs of best-response strategies $(\tau_1,\tau_2)\in BR_{1}(\sigma)\times BR_{2}(\sigma)$ are available, we assume that the worst pair $(\tau_1,\tau_2)$, from the encoder perspective, is selected. Therefore, the solution is robust to the exact specification of the decoding strategies. For $(R_1,R_2)\in \mathbb{R}_{+}^{2}$ and $n \in \mathbb{N}^{\star}$, the coding problem under study is 
\begin{equation}
D_e^n(R_1,R_2)=\underset{\sigma}{\inf}\underset{\tau_1 \in BR_{1}(\sigma), \atop \tau_2 \in BR_{2}(\sigma) }{\max} d_e^n(\sigma,\tau_1,\tau_2). \label{LP}
\end{equation}

\begin{remark}
Suppose that the decoders choose, among their best-response strategies, the pair that also minimizes the encoder distortion. This ``optimistic'' coding problem writes
\begin{equation}
D_o^n(R_1,R_2)=\underset{\sigma}{\min}\underset{\tau_1 \in BR_{1}(\sigma), \atop \tau_2 \in BR_{2}(\sigma) }{\min} d_e^n(\sigma,\tau_1,\tau_2). 
\end{equation}
For generic problems $D_o^n(R_1,R_2)=D_e^n(R_1,R_2)$ \cite[pp. 8]{jet}.
 \end{remark}
The operational significance of %game theoretic framework of our problem characterizes the operational significance of the encoder's long-runoptimal distortion given in 
\eqref{LP} corresponds to the persuasion game that is played in the following steps:   %The LP version of the encoder's long-runoptimal distortion given in \eqref{LP}, recalls the game theoretic framework which characterizes its operational significance.  
%Listed below are the steps in which the persuasion game takes place:  
%We now discuss the operational significance of $D_e^n(R_1,R_2)$.
%In this framework, we assume that the strategic communication takes place as follows:
\begin{itemize}
\item Encoder $\mathcal{E}$ chooses, announces the encoding $\sigma$.
\item Sequence $U^n$ is drawn i.i.d with distribution $\mathcal{P}_U$.
\item Messages $(M_1,M_2)$ are encoded according to $\mathcal{P}^{\sigma}_{M_1M_2|U^n}$.
\item Knowing $\sigma$, %the 
decoder $\mathcal{D}_1$ observes $(M_1,M_2)$ and draws %a sequence 
$V_1^n$ according to %the strategy
$\tau_1\in BR_{1}(\sigma)$, and %. Similarly, 
decoder $\mathcal{D}_2$ observes  $M_2$ and draws %a sequence 
$V_2^n$ according to %the strategy 
$\tau_2\in BR_{2}(\sigma)  $.
    \item Distortion values are $d_e^n(\sigma,\tau_1,\tau_2)$, $d_1^n(\sigma,\tau_1)$, $d_2^n(\sigma,\tau_2) $.
\end{itemize}

\begin{lemma}\label{lemma:subadditive}
The sequence $\big(n D_e^n(R_1,R_2)\big)_{n\in \N^{\star}}$ is sub-additive. %i.e: $\forall n,m \in \mathbb{Z}, (n+m) D_e^{n+m}(R_1,R_2) \leq n D_e^n(R_1,R_2) + m D_e^m(R_1,R_2).$
\end{lemma}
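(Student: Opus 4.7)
The plan is to establish, for all $n_1, n_2 \in \mathbb{N}^\star$, the inequality
\[(n_1+n_2)\, D_e^{n_1+n_2}(R_1,R_2) \;\leq\; n_1\, D_e^{n_1}(R_1,R_2) + n_2\, D_e^{n_2}(R_1,R_2),\]
by concatenating near-optimal block encodings of lengths $n_1$ and $n_2$. First, I would fix $\epsilon>0$ and pick $\sigma^{(j)} \in \mathcal{S}(n_j,R_1,R_2)$ achieving worst-case best-response distortion at most $D_e^{n_j}(R_1,R_2)+\epsilon$ for $j \in \{1,2\}$. Using $\lfloor n_1 R_i\rfloor + \lfloor n_2 R_i\rfloor \leq \lfloor (n_1+n_2)R_i\rfloor$ for $i \in \{1,2\}$, I would embed the product of the two block-message alphabets into the concatenated code's message alphabets, thereby defining $\sigma \in \mathcal{S}(n_1+n_2,R_1,R_2)$ that splits $U^{n_1+n_2}$ into two sub-blocks and applies $\sigma^{(1)}$ and $\sigma^{(2)}$ independently to produce the combined message pair $(M_1,M_2)$.

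Next, I would exploit the block-wise conditional independence of $\mathcal{P}^{\sigma,\tau_1,\tau_2}$: since $\mathcal{P}_U$ is i.i.d.\ and the two sub-encodings act on disjoint portions of the source, given the messages the joint laws on the two blocks factorize. This yields, for any $(\tau_1,\tau_2)$, the decomposition
\[(n_1+n_2)\, d_e^{n_1+n_2}(\sigma,\tau_1,\tau_2) \;=\; \sum_{j=1}^{2} n_j\, d_e^{n_j}\bigl(\sigma^{(j)},\hat{\tau}_1^{(j)},\hat{\tau}_2^{(j)}\bigr),\]
and the analogous identities for $d_1^{n_1+n_2}$ and $d_2^{n_1+n_2}$, where $\hat{\tau}_i^{(j)}$ denotes the block-$j$ marginal of $\tau_i$ after averaging over the other block's messages. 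I would then argue that if $\tau_i \in BR_i(\sigma)$ then $\hat{\tau}_i^{(j)} \in BR_i(\sigma^{(j)})$ for $j \in \{1,2\}$, because any block-$j$ deviation strictly improving $d_i^{n_j}(\sigma^{(j)},\cdot)$ would, by the additive identity, strictly improve $d_i^{n_1+n_2}(\sigma,\cdot)$ and contradict the global best-response property.

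Combining these two ingredients, every $(\tau_1,\tau_2) \in BR_1(\sigma) \times BR_2(\sigma)$ satisfies
\[(n_1+n_2)\, d_e^{n_1+n_2}(\sigma,\tau_1,\tau_2) \;\leq\; n_1\bigl(D_e^{n_1}(R_1,R_2)+\epsilon\bigr) + n_2\bigl(D_e^{n_2}(R_1,R_2)+\epsilon\bigr).\]
Taking the max over best responses, then the infimum over encoders (bounded by the specific $\sigma$ constructed above), and finally letting $\epsilon \to 0$ would yield sub-additivity. The main obstacle is the careful handling of the $\max$ over $BR_1(\sigma)\times BR_2(\sigma)$ of the concatenated code: one has to verify that every worst-case best-response pair projects, block-by-block, onto pairs of best-responses for the sub-codes, which is precisely what the conditional block-independence and the additivity of the long-run distortions deliver.
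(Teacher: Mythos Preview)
Your proposal is correct and follows the same concatenation strategy as the paper's proof in Appendix~A. The paper's argument is terser: it asserts that concatenations of block best-responses are best-responses to the concatenated encoder, then writes the chain of (in)equalities directly. Your version is actually more careful on the crucial direction, namely that every $(\tau_1,\tau_2)\in BR_1(\sigma)\times BR_2(\sigma)$ for the concatenated $\sigma$ has block marginals lying in $BR_i(\sigma^{(j)})$; this is precisely what is needed to bound the $\max$ over $BR(\sigma)$ from above by the sum of the block maxima, and the paper leaves this step implicit.
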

The proof is stated in Appendix \ref{Appendix A}.

\section{Main Result}

In this section, we characterize the asymptotic behaviour of $D_e^n(R_1,R_2)$. Our solution combines the decoders incentive constraints with the information constraints of the successive refinement source coding.

%a full characterization of the target distributions and the encoder's optimal distortion is provided in addition to the main theorem consisting of an achievability statement describing the limiting behaviour of the encoder's long-rundistortion and a converse statement providing its lower bound. 
  
%Correlation between random variables $(U,Z_1,Z_2)$ is fixed; whereas 
\begin{definition}\label{def:characterization}
We consider two auxiliary random variables  $W_1\in\mathcal{W}_1$ and $W_2\in\mathcal{W}_2$ with  $|\mathcal{W}_i|=|\mathcal{V}_i|$, for $i \in \{1,2\}$.  For $(R_1,R_2) \in \mathbb{R}^{2}_{+}$, we define 
\begin{align}
\mathbb{Q}_0(R_1,R_2) = \big\{& \mathcal{Q}_{W_1W_2|U},\quad   R_2 \geq I(U;W_2), \nonumber \\ &\quad R_1+R_2 \geq I(U;W_1,W_2) \big\}.\label{srqq0}
\end{align}
For every distribution $\mathcal{Q}_{W_1W_2|U}\in\Delta(\mathcal{W}_1\times \mathcal{W}_2)^{|\mathcal{U}|}$, we define
\begin{align}
  \mathbb{Q}_{1}(\mathcal{Q}_{W_1W_2|U})=&\argmin_{\mathcal{Q}_{V_1| W_1W_2}}\mathbb{E}_{\mathcal{Q}_{W_1W_2|U}\atop \mathcal{Q}_{V_1|W_1W_2}}\Big[d_{1}(U ,V_1)\Big],\\
\mathbb{Q}_{2}(\mathcal{Q}_{W_2|U})=&\argmin_{\mathcal{Q}_{V_2| W_2}}\mathbb{E}_{\mathcal{Q}_{W_2|U} \atop\mathcal{Q}_{V_2| W_2}}\Big[d_{2}(U,V_2)\Big].
 \end{align}
Note that $\mathbb{Q}_{1}(\mathcal{Q}_{W_1W_2|U}) \in \Delta(\mathcal{V}_1)^{|\mathcal{W}_1\times\mathcal{W}_2|}$ and $\mathbb{Q}_{2}(\mathcal{Q}_{W_2|U}) \in \Delta(\mathcal{V}_2)^{|\mathcal{W}_2|}$. The encoder's optimal distortion is defined by
 \begin{align}
&D^{\star}_e(R_1,R_2) \nonumber\\
=&\underset{\mathcal{Q}_{W_1W_2|U}\atop\in\mathbb{Q}_0(R_1,R_2)}{\inf}\underset{\mathcal{Q}_{V_1|W_1W_2}\in \mathbb{Q}_{1}(\mathcal{Q}_{W_1W_2|U}) \atop \mathcal{Q}_{V_2|W_2}\in \mathbb{Q}_{2}(\mathcal{Q}_{W_2|U})}{\max}\mathbb{E} \Big[    d_e(U,V_1,V_2)\Big], \label{optdistoooo1}
\end{align} 
where the expectation in \eqref{optdistoooo1} is evaluated with respect to $\mathcal{P}_{U} \mathcal{Q}_{W_1W_2|U}\mathcal{Q}_{V_1|W_1W_2}\mathcal{Q}_{V_2|W_2}$. 
\end{definition}

\begin{remark}
The random variables $U,W_1,W_2,V_1,V_2$ satisfy
\begin{align*}
     (U,V_2)   -\!\!\!\!\minuso\!\!\!\!-  (W_1,W_2)  -\!\!\!\!\minuso\!\!\!\!- V_1, \quad  
     (U,W_1,V_1)  -\!\!\!\!\minuso\!\!\!\!-  W_2  -\!\!\!\!\minuso\!\!\!\!- V_2.
 \end{align*}
% \mlt{define above the notation for the Markov chain}
\end{remark}

%\begin{definition}\label{posbel18} 
Given $\mathcal{Q}_{W_1W_2|U}$, we denote by $\mathcal{Q}_{U|W_1W_2} \in \Delta(\mathcal{U})^{|\mathcal{W}_1\times\mathcal{W}_2|}$ and $\mathcal{Q}_{U|W_2} \in \Delta(\mathcal{U})^{|\mathcal{W}_2|}$ the posterior beliefs of decoders $\mathcal{D}_1$ and $\mathcal{D}_2$. %that are defined for all $u,w_1,w_2$ by
%%\begin{align}
 %%   \mathcal{Q}(u|w_1,w_2)&= \frac{\mathcal{P}(u)\mathcal{Q}(w_1,w_2|u)}{\sum_{u'}\mathcal{P}(u')\mathcal{Q}(w_1,w_2|u')}, \quad \forall u,w_1,w_2, \\
  %%   \mathcal{Q}(u|w_2)&= \frac{\mathcal{P}(u)\sum_{w_1}\mathcal{Q}(w_1,w_2|u)}{\sum_{u'}\mathcal{P}(u')\mathcal{Q}(w_2|u')},\quad \forall u,w_2.
%%\end{align}
%\begin{align}
 %   \mathcal{Q}_{U|W_1W_2}(u|w_1,w_2)&= \frac{\mathcal{P}_U(u)\mathcal{Q}_{W_1W_2|U}(w_1,w_2|u)}{\sum_{u'}\mathcal{P}_{U}(u')\mathcal{Q}_{W_1W_2|U}(w_1,w_2|u')}, %\quad \forall u,w_1,w_2, 
  %  \\
   %  \mathcal{Q}_{U|W_2}(u|w_2)&= \frac{\mathcal{P}_U(u)%\sum_{w_1}\mathcal{Q}_{W_1W_2|U}(w_1,w_2|u)
    % \mathcal{Q}_{W_2|U}(w_2|u)}{\sum_{u'}\mathcal{P}_U(u')\mathcal{Q}_{W_2|U}(w_2|u')} %\quad \forall u,w_2 
     %.
%\end{align}
Moreover, for $(w_1,w_2) \in \mathcal{W}_1 \times \mathcal{W}_2$, we introduce the notations $\mathcal{Q}^{w_1w_2}_U =\mathcal{Q}_{U|W_1W_2}(.|w_1,w_2) \in \Delta(\mathcal{U})$ and $\mathcal{Q}^{w_2}_U =\mathcal{Q}_{U|W_2}(.|w_2)\in \Delta(\mathcal{U})$.
%\end{definition}

%\section{Main Result}
\begin{theorem}\label{main result}
Let $(R_1,R_2) \in \mathbb{R}_{+}^2$, we have
\begin{align*} 
\forall \varepsilon>0,  \exists \hat{n} \in \mathbb{N},   \forall n \geq \hat{n},  D^n_e(R_1,R_2) &\leq D_e^{\star}(R_1,R_2) + \varepsilon,\\%\label{eq:mainachie} \\
\forall n \in \mathbb{N},  D_e^n(R_1,R_2) &\geq D_e^{\star}(R_1,R_2).  %\label{eq:maincon} 
\end{align*}
\end{theorem}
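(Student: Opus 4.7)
The theorem has two directions: an achievability bound $D_e^n \leq D_e^{\star}+\varepsilon$ for $n$ large, and a converse bound $D_e^n \geq D_e^{\star}$ for every $n$. My plan is to treat achievability via a successive-refinement random coding scheme combined with a robustness argument against the decoders' tie-breaking, and the converse via single-letterization of the rate constraints together with an incentive-compatibility argument. The subadditivity result in Lemma \ref{lemma:subadditive} is essentially free: once I prove an asymptotic achievability, Fekete's lemma will upgrade it into a nonasymptotic bound if needed.

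\textbf{Achievability.} Fix $\mathcal{Q}_{W_1W_2|U}\in\mathbb{Q}_0(R_1,R_2)$ that is $\eta$-close to the infimum in \eqref{optdistoooo1}. I would use the classical Equitz--Cover successive refinement codebook: generate $2^{\lfloor nR_2\rfloor}$ i.i.d.\ sequences $W_2^n$ from $\mathcal{Q}_{W_2}$, and for each of them generate $2^{\lfloor nR_1\rfloor}$ i.i.d.\ sequences $W_1^n$ from $\mathcal{Q}_{W_1|W_2}$. Given $U^n$, encode by joint typicality to $(M_1,M_2)$; the rate constraints in $\mathbb{Q}_0$ guarantee that with high probability a jointly typical $(U^n,W_1^n,W_2^n)$ exists. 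Each decoder recovers its target by standard arguments. The delicate part is specifying what the decoders \emph{actually play}. For a fixed encoding $\sigma$, a best-response $\tau_i$ of decoder $i$ chooses, at each time $t$, $V_{i,t}$ in the argmin of the posterior expected distortion given its received message. Because of joint typicality, the posterior belief of decoder $1$ on $U_t$ given $(M_1,M_2)$ concentrates on $\mathcal{Q}^{w_{1,t}w_{2,t}}_U$, and similarly for decoder $2$ on $\mathcal{Q}^{w_{2,t}}_U$, so every best response is essentially a single-letter map into $\mathbb{Q}_1(\mathcal{Q}_{W_1W_2|U})$, resp.\ $\mathbb{Q}_2(\mathcal{Q}_{W_2|U})$. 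To handle the worst-case $\max$ over best responses, I would perturb $\mathcal{Q}_{W_1W_2|U}$ by splitting each symbol of $\mathcal{W}_i$ into many copies corresponding to different target actions so that, on the perturbed distribution, $\argmin$s become singletons matching the maximizers in \eqref{optdistoooo1}. The induced encoder distortion is then within $O(\varepsilon)$ of $\mathbb{E}[d_e(U,V_1,V_2)]$ evaluated under the target joint law.

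\textbf{Converse.} For any admissible $(\sigma,\tau_1,\tau_2)$, I would run Fano-free single-letterization. From $nR_2\geq H(M_2)\geq I(U^n;M_2)$ and $n(R_1+R_2)\geq H(M_1,M_2)\geq I(U^n;M_1,M_2)$, expanding by the chain rule and using i.i.d.\ source gives
\begin{align*}
nR_2 &\geq \sum_{t=1}^n I(U_t;M_2,U^{t-1}), \\
n(R_1+R_2) &\geq \sum_{t=1}^n I(U_t;M_1,M_2,U^{t-1}).
\end{align*}
I would then set $W_{2,t}=(M_2,U^{t-1})$ and $W_{1,t}=(M_1,U^{t-1})$, introduce a uniform time-sharing random variable $T$, absorb it into $(W_1,W_2)$, and obtain $R_2\geq I(U;W_2)$ and $R_1+R_2\geq I(U;W_1,W_2)$, hence $\mathcal{Q}_{W_1W_2|U}\in\mathbb{Q}_0(R_1,R_2)$.

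\textbf{Matching the incentive constraints.} The genuinely delicate step, which I expect to be the main obstacle, is that decoder $2$'s best response is computed against the posterior of $U_t$ given $M_2$ alone, not given $(M_2,U^{t-1})$. The standard route is to argue that the extra information $U^{t-1}$ only enlarges the class of best responses, so that by considering the worst one we remain inside $\mathbb{Q}_2(\mathcal{Q}_{W_2|U})$; equivalently, I would verify that $V_{2,T}$, viewed as a function of $W_{2,T}=(M_2,U^{T-1},T)$, can be taken to minimize $\mathbb{E}[d_2(U,V_2)\mid W_2]$ without increasing the encoder distortion, because any such extra randomization is resolvable by the $\max$ in \eqref{LP}. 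The same argument applies to $V_{1,T}$ and $\mathbb{Q}_1$. Combining the rate and incentive conditions shows that the single-letter distribution lies in the feasible set of \eqref{optdistoooo1}, and the worst-case decoder choice yields $d_e^n(\sigma,\tau_1,\tau_2)\geq D_e^{\star}(R_1,R_2)$. Taking $\inf_\sigma \max_{\tau}$ preserves the inequality and gives the converse.
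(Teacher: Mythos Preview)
Your achievability sketch is broadly aligned with the paper's: successive-refinement random coding, posterior beliefs concentrating (in KL) on the target posteriors $\mathcal{Q}_U^{w_1w_2}$ and $\mathcal{Q}_U^{w_2}$, and a perturbation/density argument to make the decoders' argmins singletons so that the worst-case tie-breaking is harmless. The paper formalizes the last point as a lemma showing that restricting to distributions with unique worst optimal action pairs does not change the infimum; your ``split each symbol into copies'' idea is a slightly different but equally valid way to force uniqueness.

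The converse, however, has a genuine gap precisely where you flagged it. Your choice $W_{2}=(M_2,U^{T-1},T)$ does give $R_2\geq I(U;W_2)$, but it breaks the correspondence between $BR_2(\sigma)$ and $\mathbb{Q}_2(\mathcal{Q}_{W_2|U})$. The set $\mathbb{Q}_2(\mathcal{Q}_{W_2|U})$ consists of single-letter best responses for a decoder who \emph{observes} $W_2$, i.e.\ who sees $U^{T-1}$ in addition to $M_2$. Since $U_t$ and $U^{t-1}$ are in general conditionally dependent given $M_2$, this fictitious decoder can strictly outperform the real decoder $\mathcal{D}_2$. Hence the distribution $\mathcal{P}^{\tau_2}_{V_2|W_2}$ induced by an actual $\tau_2\in BR_2(\sigma)$ is typically \emph{not} in $\mathbb{Q}_2(\mathcal{Q}_{W_2|U})$, and your claim that ``extra information only enlarges the class of best responses'' goes the wrong way: more information shrinks (refines) the argmin, it does not enlarge it. There is no monotonicity of $d_e$ that lets you swap the real $M_2$-measurable best response for a $W_2$-measurable one without changing the encoder's distortion in an uncontrolled direction, so the inequality $\max_{\tau\in BR}d_e^n\geq \max_{\mathbb{Q}_1\times\mathbb{Q}_2}\mathbb{E}[d_e]$ is unjustified. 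The same issue hits decoder $1$ through $(W_1,W_2)\supset U^{T-1}$.

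The fix is simple but essential, and it is what the paper does: take $W_1=(M_1,T)$ and $W_2=(M_2,T)$ \emph{without} $U^{T-1}$. The rate bounds survive because $I(U_T;M_2,U^{T-1},T)\geq I(U_T;M_2,T)$, so $R_2\geq I(U;W_2)$ and $R_1+R_2\geq I(U;W_1,W_2)$ still hold. Now $(W_1,W_2)$ is exactly the information $\mathcal{D}_1$ has (up to an independent $T$), and $W_2$ is exactly what $\mathcal{D}_2$ has. One then proves the key identity that $\mathbb{Q}_i(\mathcal{P}^\sigma_{\cdot})$ equals the set of single-letter marginals $\mathcal{P}^{\tau_i}_{V_i|\cdot}$ as $\tau_i$ ranges over $BR_i(\sigma)$; this is done by constructing, for any $\mathcal{Q}_{V_1|W_1W_2}\in\mathbb{Q}_1$, the product strategy $\tilde\tau_1(v_1^n|m_1,m_2)=\prod_t \mathcal{Q}_{V_1|W_1W_2}(v_{1,t}|m_1,m_2,t)$ and checking it lies in $BR_1(\sigma)$, and conversely. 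With this exact correspondence in hand, $\max_{\tau\in BR}d_e^n=\max_{\mathbb{Q}_1\times\mathbb{Q}_2}\mathbb{E}[d_e]\geq D_e^\star$ follows immediately.
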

The proof of Theorem \ref{main result} is stated in Sec. \ref{sec:converse} and \ref{sec:achievability}. Together with Fekete's Lemma for the sub-additive sequence $\big(n D_e^n(R_1,R_2)\big)_{n\in \N^{\star}}$ (see Lemma \ref{lemma:subadditive}),
we obtain %this result describes two features of the asymptotic behavior of the encoder's long-rundistortion function $D^n_e(R_1,R_2)$: \begin{enumerate}
    %\item It converges to the optimal distortion $D^{\star}_e(R_1,R_2)$,
    %\item  It does not go below $D^{\star}_e(R_1,R_2).$
%\end{enumerate} 
%In other words,
\begin{align}
\lim_{n\rightarrow \infty} D^n_e(R_1,R_2)= \underset{n\in\mathbb{N}^{\star}}{\inf} D^n_e(R_1,R_2) = D^{\star}_e(R_1,R_2).
\end{align}

%%%%%%%%%%%%%%%%%%%%%%%%%%%%%%%%%%%%%%%%
\section{Converse Proof of Theorem \ref{main result}}\label{sec:converse} 
%In the following, %we prove the converse statement of our main result. 
%we provide identifications of our auxiliary random variables satisfying Markov chain properties, and we show they satisfy the information constraints for any given rate pair.

Let $(R_1,R_2) \in \mathbb{R}^2_{+}$ and $n\in\N^{\star}$. We consider $(\sigma,\tau_1,\tau_2)\in\mathcal{S}(n,R_1,R_2)$ and a random variable $T$ uniformly distributed over $\{1,2,...,n\}$ and independent of $(U^n,M_1,M_2,V_1^n,V_2^n)$. We introduce the auxiliary random variables $W_1 =(M_1,T)$, $W_2 =(M_2,T)$, $(U,V_1,V_2)=(U_T,V_{1,T},V_{2,T})$, distributed according to $\mathcal{P}_{UW_1W_2V_1V_2}^{\sigma\tau_1\tau_2}$ defined for all $(u,w_1,w_2,v_1,v_2) = (u_t,m_1,m_2,t,v_{1,t},v_{2,t})$ by
\begin{align}
  &\mathcal{P}_{UW_1W_2V_1V_2}^{\sigma\tau_1\tau_2} (u,w_1,w_2,v_1,v_2)\nonumber\\
  =& \mathcal{P}_{U_TM_1M_2TV_{1T}V_{2T}}^{\sigma\tau_1\tau_2}  (u_t,m_1,m_2,t,v_{1,t},v_{2,t})\nonumber\\
 =&\frac1n\sum_{u^{t-1}\atop u_{t+1}^n}\sum_{v_1^{t-1},v_{1,t+1}^n\atop v_2^{t-1},v_{2,t+1}^n}\!\!\!\!\!\!
\bigg(\prod_{t=1}^n\mathcal{P}_{U}(u_t)\bigg)\sigma(m_1,m_2|u^n)%\nonumber\\
 %&
 \times\tau_1(v_1^n|m_1,m_2)\tau_2(v_2^n|m_2). \label{eq:auxRV_converse}
\end{align}
\begin{lemma}\label{lemma:decomposition}
The distribution $\mathcal{P}_{UW_1W_2V_1V_2}^{\sigma\tau_1\tau_2}$ has marginal on $\Delta(\mathcal{U})$ given by $\mathcal{P}_U$ and satisfies the Markov chain properties
\begin{align*}
     (U,V_2)   -\!\!\!\!\minuso\!\!\!\!-  (W_1,W_2)  -\!\!\!\!\minuso\!\!\!\!- V_1, \quad  
     (U,W_1,V_1)  -\!\!\!\!\minuso\!\!\!\!-  W_2  -\!\!\!\!\minuso\!\!\!\!- V_2.
 \end{align*}
\end{lemma}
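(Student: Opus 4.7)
The plan is to deduce both the marginal identity and the two Markov chains directly from the factorization in \eqref{jointproba} of the joint $\mathcal{P}^{\sigma\tau_1\tau_2}$, combined with the fact that $T$ is uniform on $\{1,\ldots,n\}$ and independent of $(U^n,M_1,M_2,V_1^n,V_2^n)$. First I would fix $t\in\{1,\ldots,n\}$ and marginalize \eqref{jointproba} over every source and action coordinate except the $t$-th. Since $\sigma(m_1,m_2|u^n)$ is the only term coupling $U^n$ with $(M_1,M_2)$, while $\tau_1(v_1^n|m_1,m_2)$ and $\tau_2(v_2^n|m_2)$ each factor independently of $u^n$, the per-letter marginal factors as
\begin{align*}
\mathcal{P}^{\sigma\tau_1\tau_2}_{U_t M_1 M_2 V_{1,t} V_{2,t}}(u,m_1,m_2,v_1,v_2) = \mathcal{P}^{\sigma}_{U_t M_1 M_2}(u,m_1,m_2)\,\tau_1^{(t)}(v_1|m_1,m_2)\,\tau_2^{(t)}(v_2|m_2),
\end{align*}
where $\tau_i^{(t)}$ denotes the $t$-th coordinate marginal induced by $\tau_i$.

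The marginal identity on $U$ is then immediate: because $U^n$ is i.i.d.\ and $T$ is independent of $U^n$, one has $\mathcal{P}^{\sigma\tau_1\tau_2}_U(u) = \frac{1}{n}\sum_{t=1}^n \mathcal{P}_U(u) = \mathcal{P}_U(u)$.

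For the Markov chains, I would use that $W_1=(M_1,T)$ and $W_2=(M_2,T)$ both contain $T$, so that adding $T$ to the conditioning is harmless. The factorization above shows that, for every $t$, the conditional law of $V_{1,t}$ given $(U_t,M_1,M_2,V_{2,t})$ equals $\tau_1^{(t)}(v_1|m_1,m_2)$, depending only on $(m_1,m_2,t)$; analogously the conditional law of $V_{2,t}$ given $(U_t,M_1,M_2,V_{1,t})$ equals $\tau_2^{(t)}(v_2|m_2)$, depending only on $(m_2,t)$. Averaging over the uniform $T$ and relabeling $(U,V_1,V_2)=(U_T,V_{1,T},V_{2,T})$ then yields $(U,V_2) -\!\!\!\!\minuso\!\!\!\!- (W_1,W_2) -\!\!\!\!\minuso\!\!\!\!- V_1$ and $(U,W_1,V_1) -\!\!\!\!\minuso\!\!\!\!- W_2 -\!\!\!\!\minuso\!\!\!\!- V_2$.

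The main obstacle is essentially careful bookkeeping: one must verify that introducing $T$ into both sides of each chain preserves conditional independence. This is automatic here because $T$ is deterministically embedded in both $W_1$ and $W_2$, and because the per-letter factorization above is uniform in $t$. No coding or information-theoretic inequality is needed; the argument reduces to a direct manipulation of conditional probability mass functions.
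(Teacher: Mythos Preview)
Your proof is correct and follows essentially the same route as the paper: both arguments derive the marginal $\mathcal{P}_U$ from the i.i.d.\ property of the source and obtain the two Markov chains directly from the factorization \eqref{jointproba}, noting that $\tau_1$ depends only on $(m_1,m_2)$ and $\tau_2$ only on $m_2$, with $T$ absorbed into $W_1,W_2$. You simply make explicit the per-letter marginalization that the paper leaves implicit in its one-line appeal to ``the definition of the decoding functions.''
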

\begin{proof}[Lemma \ref{lemma:decomposition}]
The i.i.d. property of the source ensures that the marginal distribution is $\mathcal{P}_U$. By the definition of the decoding functions $\tau_1$ and $\tau_2$ we have 
\begin{align*}
     &(U_T,V_{2,T})   -\!\!\!\!\minuso\!\!\!\!-  (M_1,M_2,T)  -\!\!\!\!\minuso\!\!\!\!- V_{1,T},\\
     &(U_T,M_1,V_{1,T})  -\!\!\!\!\minuso\!\!\!\!-  (M_2,T)  -\!\!\!\!\minuso\!\!\!\!- V_{2,T}.
 \end{align*}
Therefore $\mathcal{P}_{UW_1W_2V_1V_2}^{\sigma\tau_1\tau_2} = \mathcal{P}_U\mathcal{P}_{W_1W_2|U}^{\sigma}\mathcal{P}_{V_1|W_1W_2}^{\tau_1}\mathcal{P}_{V_2|W_2}^{\tau_2}$.
\end{proof}

\begin{lemma}\label{lemma:belongtoQ0}
For all $\sigma$, the distribution $\mathcal{P}_{W_1W_2|U}^{\sigma}\in \mathbb{Q}_0$.
\end{lemma}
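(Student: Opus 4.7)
The plan is to verify the two mutual-information constraints defining $\mathbb{Q}_0(R_1,R_2)$ via the standard single-letterization argument from lossy source coding, adapted to the auxiliaries $W_1=(M_1,T)$ and $W_2=(M_2,T)$ introduced in \eqref{eq:auxRV_converse}.

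First, I would use the cardinality of the message alphabets: since $M_2 \in \{1,\ldots,2^{\lfloor nR_2\rfloor}\}$ and $(M_1,M_2)\in\{1,\ldots,2^{\lfloor nR_1\rfloor}\}\times \{1,\ldots,2^{\lfloor nR_2\rfloor}\}$, we have
\begin{align*}
nR_2 \geq H(M_2) \geq I(U^n;M_2), \qquad n(R_1+R_2) \geq H(M_1,M_2) \geq I(U^n;M_1,M_2).
\end{align*}

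Next, I would exploit the i.i.d.\ structure of the source. Writing $H(U^n)=\sum_{t=1}^n H(U_t)$ and bounding $H(U^n|M_2)=\sum_{t=1}^n H(U_t|U^{t-1},M_2)\leq \sum_{t=1}^n H(U_t|M_2)$ by the fact that conditioning reduces entropy gives $I(U^n;M_2)\geq \sum_{t=1}^n I(U_t;M_2)$. The same chain-rule computation, applied to $(M_1,M_2)$ in place of $M_2$, yields $I(U^n;M_1,M_2)\geq \sum_{t=1}^n I(U_t;M_1,M_2)$.

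Then I would fold the time index into the auxiliaries using the uniform random variable $T$. By construction of $T$ (uniform on $\{1,\ldots,n\}$ and independent of $(U^n,M_1,M_2)$) we get $\tfrac{1}{n}\sum_{t=1}^n I(U_t;M_2)=I(U_T;M_2|T)$ and $\tfrac{1}{n}\sum_{t=1}^n I(U_t;M_1,M_2)=I(U_T;M_1,M_2|T)$. Because the source is i.i.d.\ with common marginal $\mathcal{P}_U$, the variable $U_T=U$ is independent of $T$, so $I(U_T;T)=0$ and therefore
\begin{align*}
I(U_T;M_2|T) = I(U_T;M_2,T) = I(U;W_2), \qquad I(U_T;M_1,M_2|T) = I(U;W_1,W_2).
\end{align*}
Chaining the three steps proves $R_2 \geq I(U;W_2)$ and $R_1+R_2 \geq I(U;W_1,W_2)$, which is exactly membership in $\mathbb{Q}_0(R_1,R_2)$ as defined in \eqref{srqq0}.

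This lemma is entirely routine once the single-letterization recipe is in place; there is no genuine obstacle. The only care needed is to match the chain-rule decomposition to the particular definition of $W_1,W_2$ used in \eqref{eq:auxRV_converse}, and to invoke the independence $U_T\perp T$ (a direct consequence of the i.i.d.\ assumption on $U^n$) to replace the conditional mutual information by an unconditional one.
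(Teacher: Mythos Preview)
Your proof is correct and follows essentially the same single-letterization argument as the paper: cardinality bound on the messages, chain rule on $I(U^n;M_2)$ (resp.\ $I(U^n;M_1,M_2)$), exploitation of the i.i.d.\ source, and folding in the uniform time variable $T$ with the identification $W_2=(M_2,T)$, $W_1=(M_1,T)$. The only cosmetic difference is that the paper retains the past $U^{t-1}$ through the time-averaging step and discards it afterward via $I(U_T;M_2,U^{T-1},T)\geq I(U_T;M_2,T)$, whereas you drop $U^{t-1}$ immediately via ``conditioning reduces entropy''; both routes are standard and equivalent.
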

\begin{proof}[Lemma \ref{lemma:belongtoQ0}]
We consider an encoding strategy $\sigma$, then 
\begin{align}
&n R_2 \geq H(M_2) \geq I(M_2;U^n)  \label{e:Ide1} \\
&= \sum_{t=1}^n I(U_t;M_2|U^{t-1}) \label{memorylesssourceyyy} 
\\
&= n I(U_T;M_2|U^{T-1},T)\label{e:Ide2} \\
&= n I(U_T;M_2,U^{T-1},T) \label{e:Ide3} 
\\
&\geq n I(U_T;M_2,T)\label{e:Ide4}\\
&= n I(U;W_2).\label{e:Ide5}
\end{align}
In fact, \eqref{e:Ide2} follows from the introduction of the uniform random variable $T\in\{1,\ldots,n\}$, \eqref{e:Ide3} comes 
 from the i.i.d. property of the source, and \eqref{e:Ide5} follows from the identification of the auxiliary random variables $(U,W_2)$ and the independence between $T$ and $U_T$. Similarly,
\begin{align} 
&n(R_1+R_2) \geq  H(M_1,M_2)  \geq I(U^n;M_1,M_2) \label{1lossy source} \nonumber\\ 
=& \sum_{t=1}^nI(U_t;M_1,M_2|U^{t-1}) \\
=& nI(U_T;M_1,M_2|U^{T-1},T) \nonumber \\ 
\geq& nI(U_T;M_1,M_2,T) \\
=& nI(U;W_1,W_2). 
\end{align} \end{proof}

\begin{lemma}\label{lemma:singleletterdistortion}
For all $(\sigma,\tau_1,\tau_2)$ and $i\in\{1,2\}$, we have
%\begin{align}
   $d_e^n(\sigma,\tau_1,\tau_2) = \E   \big[d_e(U,V_1,V_2)\big] $  and %\\
   $d_i^n(\sigma,\tau_i) = \E
   \big[d_i(U,V_i)\big]$ %\label{lem329},
%\end{align}
evaluated with respect to $\mathcal{P}_U\mathcal{P}_{W_1W_2|U}^{\sigma}\mathcal{P}_{V_1|W_1W_2}^{\tau_1}\mathcal{P}_{V_2|W_2}^{\tau_2}$. Moreover, for all $\sigma$, we have
\begin{align}
\mathbb{Q}_{1}(\mathcal{P}^{\sigma}_{W_1W_2|U}) =& \Big\{ \mathcal{Q}_{V_1|W_1W_2},\;%\nonumber\\
%&\qquad 
\qquad \exists \tau_1 \in BR_1(\sigma),\; \mathcal{Q}_{V_1|W_1W_2} = \mathcal{P}^{\tau_1}_{V_1|W_1W_2}\Big\},\label{eq:lemmaBRset}\\
\mathbb{Q}_{2}(\mathcal{P}^{\sigma}_{W_2|U}) =& \Big\{ \mathcal{Q}_{V_2|W_2},\;%\nonumber\\
%&\qquad
\qquad \exists \tau_2 \in BR_2(\sigma),\; \mathcal{Q}_{V_2|W_2} = \mathcal{P}^{\tau_2}_{V_2|W_2}\Big\}.\label{eq:lemmaBRsets}
\end{align}
\end{lemma}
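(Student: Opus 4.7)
The plan is to split the lemma into two independent pieces: the single-letterization of all three distortions, and the two set equalities \eqref{eq:lemmaBRset}--\eqref{eq:lemmaBRsets}. I would handle the distortion identities first, since they are a direct consequence of the construction of the auxiliary variables in \eqref{eq:auxRV_converse}. Because $T$ is uniform on $\{1,\dots,n\}$ and independent of $(U^n,M_1,M_2,V_1^n,V_2^n)$, conditioning on $T=t$ and marginalising out the other time indices yields $\mathcal{P}_{UV_1V_2|T=t}^{\sigma\tau_1\tau_2}=\mathcal{P}_{U_tV_{1,t}V_{2,t}}^{\sigma\tau_1\tau_2}$. Thus $\E[d_e(U,V_1,V_2)]=\frac{1}{n}\sum_{t=1}^n\E[d_e(U_t,V_{1,t},V_{2,t})]=d_e^n(\sigma,\tau_1,\tau_2)$, and similarly for $d_1$ and $d_2$. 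The same expectation evaluated under the distribution of Lemma \ref{lemma:decomposition} coincides with the one induced by $(\sigma,\tau_1,\tau_2)$.

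For the inclusion $\supseteq$ in \eqref{eq:lemmaBRsets}, I would take $\tau_2\in BR_2(\sigma)$ and show that $\mathcal{P}^{\tau_2}_{V_2|W_2}$ must minimize $\E[d_2(U,V_2)]$ over all kernels $\mathcal{Q}_{V_2|W_2}$. Suppose, toward a contradiction, that some kernel $\mathcal{Q}^{\star}_{V_2|W_2}$ strictly decreases the single-letter distortion. Using the first part of the lemma, $\E[d_2(U,V_2)]=d_2^n(\sigma,\tau_2)$ under $\tau_2$; on the other hand, the product strategy $\tilde\tau_2(v_2^n|m_2)=\prod_{t=1}^n\mathcal{Q}^{\star}_{V_2|W_2}(v_{2,t}|(m_2,t))$ achieves, again by the single-letterization, $d_2^n(\sigma,\tilde\tau_2)=\E_{\mathcal{Q}^{\star}}[d_2(U,V_2)]<\E_{\mathcal{P}^{\tau_2}}[d_2(U,V_2)]=d_2^n(\sigma,\tau_2)$, contradicting $\tau_2\in BR_2(\sigma)$. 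Hence $\mathcal{P}^{\tau_2}_{V_2|W_2}\in\mathbb{Q}_2(\mathcal{P}^{\sigma}_{W_2|U})$. The same argument, using $W_1=(M_1,T)$ and $W_2=(M_2,T)$ jointly, handles \eqref{eq:lemmaBRset}.

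For the reverse inclusion $\subseteq$, I would start from $\mathcal{Q}_{V_2|W_2}\in\mathbb{Q}_2(\mathcal{P}^{\sigma}_{W_2|U})$ and build the memoryless decoder $\tau_2(v_2^n|m_2)=\prod_{t=1}^n\mathcal{Q}_{V_2|W_2}(v_{2,t}|(m_2,t))$, for which $\mathcal{P}^{\tau_2}_{V_2|W_2}=\mathcal{Q}_{V_2|W_2}$. To prove $\tau_2\in BR_2(\sigma)$, I decompose any competitor $\tilde\tau_2$ at each time step: $d_2^n(\sigma,\tilde\tau_2)=\frac{1}{n}\sum_{t,m_2,u_t,v_{2,t}}\mathcal{P}_{M_2}(m_2)\mathcal{P}_{U_t|M_2}(u_t|m_2)\tilde{\mathcal{P}}_{V_{2,t}|M_2}(v_{2,t}|m_2)d_2(u_t,v_{2,t})$, which in the $(U,W_2,V_2)$ notation is $\E_{\mathcal{P}_U\mathcal{P}^{\sigma}_{W_2|U}\tilde{\mathcal{Q}}_{V_2|W_2}}[d_2(U,V_2)]$. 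This is minimized precisely over $\mathbb{Q}_2(\mathcal{P}^{\sigma}_{W_2|U})$, so $\tau_2$ is optimal. The decoder~1 case is analogous, the only difference being that the argmin is over kernels $\mathcal{Q}_{V_1|W_1W_2}$ and the per-letter information at time $t$ is exactly $(m_1,m_2,t)=(w_1,w_2)$.

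The main obstacle I anticipate is keeping the roles of $(M_1,M_2,T)$ and $(W_1,W_2)$ aligned when translating between the $n$-letter best-response condition and its single-letter counterpart, especially showing that a per-time-slot minimization genuinely minimizes the $n$-letter distortion. This is where I would be most careful: the optimal value of $V_{2,t}$ given $M_2$ indeed depends only on the marginal $\mathcal{P}_{U_t|M_2}$, which through the identification $W_2=(M_2,T)$ and the independence of $T$ from everything else becomes $\mathcal{P}^{\sigma}_{U|W_2}$, so the single-letter and $n$-letter minimizers coincide. Once this correspondence is cleanly established for decoder 2, the decoder 1 statement follows by an identical argument with $(W_1,W_2)$ replacing $W_2$.
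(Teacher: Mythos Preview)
Your proposal is correct and follows essentially the same approach as the paper: the single-letterization via the uniform time variable $T$, and then the two inclusions of \eqref{eq:lemmaBRset}--\eqref{eq:lemmaBRsets} established by constructing product decoders $\tilde\tau_i(v_i^n|\cdot)=\prod_t\mathcal{Q}_{V_i|\cdot}(v_{i,t}|\cdot,t)$ and using contradiction (or, equivalently, your direct minimization argument) to match the $n$-letter and single-letter best responses. Your treatment is in fact slightly more explicit than the paper's, which sketches the $\supseteq$ direction and leaves it to ``similar arguments''.
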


\begin{proof}[Lemma \ref{lemma:singleletterdistortion}] By Definition \ref{singlelongrun} and  \eqref{jointproba}, \eqref{eq:auxRV_converse}, we have
\begin{align}
d_e^n(\sigma,\tau_1,\tau_2) =& \sum_{u^n, m_1,m_2,\atop v_1^n,v_2^n}\bigg(\prod_{t=1}^n\mathcal{P}_{U}(u_t)\bigg)\sigma(m_1,m_2|u^n)%\nonumber\\
\times \tau_1(v_1^n|m_1,m_2)\tau_2(v_2^n|m_2)\Bigg[\frac{1}{n}\sum_{t=1}^n d_e(u_t,v_{1,t},v_{2,t})\Bigg]\nonumber\\
=& \sum_{t=1}^n \sum_{u_t,m_1,m_2,\atop v_{1,t}, v_{2,t}} \mathcal{P}^{\sigma,\tau_1,\tau_2}(u_t,m_1,m_2,t, v_{1,t}, v_{2,t})%\nonumber\\
\times d_e(u_t,v_{1,t},v_{2,t}) \nonumber\\ 
=&\E\big[d_e(U_T,V_{1,T},V_{2,T})\big]=\E\big[d_e(U,V_1,V_2)\big] .
\end{align} 
%Let $\sigma$ be given, we proceed by double inclusion. should there be a T after M_1M_2 in the subscript of the lhs  
%Similarly, one can show \eqref{lem329}. 
Now we prove the second part of lemma \ref{lemma:singleletterdistortion}. For any $\sigma$ and any $\mathcal{Q}_{V_1| W_1W_2}\in \mathbb{Q}_{1}(\mathcal{P}^{\sigma}_{W_1W_2|U})$, we define $\tilde{\tau}_1$ by
\begin{align}
\tilde{\tau}_1(v_1^n|m_1,m_2) = \prod_{s=1}^n \mathcal{Q}_{V_1| W_1W_2}(v_{1,s}|m_1,m_2,s), \label{eq:T}
\end{align} 
$ \forall (m_1,m_2,v_1^n)$. Then $\forall (w_1,w_2,v_1)=(m_1,m_2,t,v_{1,t})$
\begin{align}
&\mathcal{P}^{\tilde{\tau}_1}_{V_1|W_1W_2}(v_1|w_1,w_2) = \mathcal{P}^{\tilde{\tau}_1}_{V_1|W_1W_2}(v_{1,t}|m_1,m_2,t) \nonumber\\
=& \sum_{v_1^{t-1},v_{1,t+1}^n} \tilde{\tau}_1(v_{1}^n|m_1,m_2)\nonumber \\
=& \sum_{v_1^{t-1},v_{1,t+1}^n} \prod_{s=1}^n \mathcal{Q}_{V_1| W_1W_2}(v_{1,s}|m_1,m_2,s)\nonumber\\
=&  \mathcal{Q}_{V_1| W_1W_2}(v_{1,t}|m_1,m_2,t) %\nonumber \\ 
\times 
\sum_{v_1^{t-1},v_{1,t+1}^n} \prod_{s\neq t} \mathcal{Q}_{V_1| W_1W_2}(v_{1,s}|m_1,m_2,s)\nonumber\\
=&  \mathcal{Q}_{V_1| W_1W_2}(v_{1,t}|m_1,m_2,t)= \mathcal{Q}_{V_1| W_1W_2}(v_{1}|w_1,w_2).
\label{eq:T2}
\end{align} 
Moreover assume that $\tilde{\tau}_1 \notin BR_1(\sigma)$, then there exists $\bar{\tau}_1\neq \tilde{\tau}_1$ such that 
\begin{align}
&\E_{\mathcal{P}^{\sigma}_{W_1W_2|U}\atop \mathcal{P}^{\bar{\tau}_1}_{V_1| W_1W_2}}\big[d_1(U,V_1)\big] = d_1^n(\sigma,\bar{\tau}_1)<d_1^n(\sigma,\tilde{\tau}_1)\nonumber\\
&= \E_{\mathcal{P}^{\sigma}_{W_1W_2|U}\atop \mathcal{P}^{\tilde{\tau}_1}_{V_1| W_1W_2}}\big[d_1(U,V_1)\big]= \E_{\mathcal{P}^{\sigma}_{W_1W_2|U}\atop \mathcal{Q}_{V_1| W_1W_2}}\big[d_1(U,V_1)\big],
\end{align} 
which contradicts  $\mathcal{Q}_{V_1| W_1W_2}\in \mathbb{Q}_{1}(\mathcal{P}^{\sigma}_{W_1W_2|U})$. Therefore, $\tilde{\tau}_1\in BR_1(\sigma)$ and thus $\mathcal{Q}_{V_1| W_1W_2}$ belongs to the right-hand side of \eqref{eq:lemmaBRset}. For the other inclusion, we assume that $\mathcal{Q}_{V_1| W_1W_2}$ belongs to the right-hand side of \eqref{eq:lemmaBRset} and does not belong to $\mathbb{Q}_{1}(\mathcal{P}^{\sigma}_{W_1W_2|U})$, then we show that it leads to a contradiction. Similar arguments imply \eqref{eq:lemmaBRsets}. 
\end{proof}

For any strategy $\sigma$, we have
\begin{align}
&\underset{\tau_1 \in BR_1(\sigma),\atop\tau_2 \in BR_2(\sigma)}{\max} \;  d_e^n(\sigma,\tau_1,\tau_2) \nonumber \\ 
=&\underset{\tau_1\in BR_1(\sigma),\atop \tau_2 \in BR_2(\sigma)}{\max} \; \mathbb{E}_{\mathcal{P}^{\sigma}_{W_1W_2|U}\atop\mathcal{P}^{\tau_1}_{V_{1}|W_1W_2}  \mathcal{P}^{\tau_2}_{V_{2}|W_2}}\Big[d_e(U,V_{1},V_{2})\Big] \label{zachi1}\\ 
=&\underset{\mathcal{Q}_{V_{1}|W_1W_2}\in \mathbb{Q}_{1}(\mathcal{P}^{\sigma}_{W_1W_2|U}) \atop \mathcal{Q}_{V_{2}|W_2}\in \mathbb{Q}_{2}(\mathcal{P}^{\sigma}_{W_2|U})}{\max}\mathbb{E}_{\mathcal{P}^{\sigma}_{W_1W_2|U} \atop \mathcal{Q}_{V_{1}|W_1W_2} \ \mathcal{Q}_{V_{2}|W_2}}\Big[d_e(U,V_{1},V_{2})\Big] \label{zachi2}\\ 
\geq&\underset{\mathcal{Q}_{W_1W_2|U}\atop\in\mathbb{Q}_0(R_1,R_2)}{\inf}\underset{\mathcal{Q}_{V_{1}|W_1W_2}\in \mathbb{Q}_{1}(\mathcal{Q}_{W_1W_2|U}) \atop \mathcal{Q}_{V_{2}|W_2}\in \mathbb{Q}_{2}(\mathcal{Q}_{W_2|U})}{\max}\mathbb{E}
%_{\mathcal{Q}_{W_1W_2|U} \atop \mathcal{Q}_{V_{1}|W_1W_2} \ \mathcal{Q}_{V_{2}|W_2}}
\Big[d_e(U,V_{1},V_{2})\Big] \label{zachi3}\\ 
=&D^{\star}_e(R_1,R_2). \label{optdistoooo}
\end{align} 
Equations \eqref{zachi1} and \eqref{zachi2} come from Lemma \ref{lemma:singleletterdistortion}, whereas \eqref{zachi3} comes from Lemma \ref{lemma:belongtoQ0} and \eqref{optdistoooo} follows from \eqref{optdistoooo1}. Therefore, $D^{\star}_e(R_1,R_2) \leq \underset{\sigma}{\inf}\underset{\tau_1 \in BR_{1}(\sigma), \atop \tau_2 \in BR_{2}(\sigma) }{\max} d_e^n(\sigma,\tau_1,\tau_2) = D_e^n(R_1,R_2)$, $\forall n$. %as desired. %This concludes the converse proof of Theorem \ref{main result}.

\section{Sketch of Achievability Proof of Theorem \ref{main result}}\label{sec:achievability} 
% This proof consists of three major parts. An alternative formulation of $D_e^{\star}(R_1,R_2)$ is provided in the first subsection. In the second part, we prove the achievability of the rate pairs. In the third part, we investigate the corner cases. 

 %The complete proof can be found in the full version of the paper. 
 %This proof consists of three major parts. An alternative formulation of $D_e^{\star}(R_1,R_2)$ is provided in the first subsection. %--------
 %in which the optimization is subjected to information constraints with strict inequalities and action pairs excluding the indifference region. 
 %In the second part, we prove the achievability of the rate pairs.
 %-------
 %and show that the target distributions over auxiliary random variables and the beliefs induced by the coding are close on average. %is investigated for corner cases and for the general case. 
%In the third part, we investigate the corner cases. %the proof is concluded by showing that the target distributions over auxiliary random variables and the beliefs induced by the coding are close on average.      
\subsection{Alternative Formulation}

\begin{comment}
\definition We define best-response action $\mathcal{V}^{\star}_1(p_1)$ of decoder $\mathcal{D}_1$, and $\mathcal{V}^{\star}_2(p_2)$ of decoder $\mathcal{D}_2$ as a functions of their respective belief parameters $q_1,q_2 \in \Delta(\mathcal{U})$ as follows:
\begin{align}
    \mathcal{V}^{\star}_1(q_1)=&\argmax_{v_1\in\argmin\mathbb{E}_{q_1}(d_1(U,v_1))}\mathbb{E}_{\mathcal{P}_U\mathcal{Q}_{W_1W_2|U}\mathcal{Q}_{W_1W_2|U}\mathcal{Q}_{V_1|W_1W_2}\mathcal{Q}_{V_2|W_2}}\Big[d_e(U,v_1,V_2)\Big], \\
     \mathcal{V}^{\star}_2(q_2)=&\argmax_{v_2\in\argmin\mathbb{E}_{q_2}(d_2(U,v_2))}\mathbb{E}_{\mathcal{P}_U\mathcal{Q}_{W_1W_2|U}\mathcal{Q}_{V_1|W_1W_2}\mathcal{Q}_{V_2|W_2}}\Big[d_e(U,V_1,v_2)\Big], 
\end{align}
The above definition assumes that decoders will always choose the action that maximizes the distortion of the encoder. 
\end{comment}

\begin{definition}  For any distributions $q_1 \in \Delta(\mathcal{U})$ and $q_2 \in \Delta(\mathcal{U})$, we denote by $V_1^{\star}(q_1)$ and $V_2^{\star}(q_2)$, the sets of optimal actions of decoders $\mathcal{D}_1$ and $\mathcal{D}_2$. 
\begin{align}
    V_1^{\star}(q_1)=\argmin_{v_1 \in V_1}\sum_{u}q_1(u)d_1(u,v_1),\\
    V_2^{\star}(q_2)=\argmin_{v_2 \in V_2}\sum_{u}q_2(u)d_2(u,v_2).
\end{align}
\end{definition}

\begin{definition} 
Given a strategy $\mathcal{Q}_{W_1W_2|U}$ and symbols $(w_{1},w_{2})$, we denote by $\mathcal{Q}_{U}^{w_{1}w_{2}}\in \Delta(\mathcal{U})$ and $\mathcal{Q}_{U}^{w_{2}}\in \Delta(\mathcal{U})$ the Bayesian posterior beliefs
defined by
\begin{align}
\mathcal{Q}_{U}^{w_{1}w_{2}}(u) =& \frac{\mathcal{P}_U(u)\mathcal{Q}_{W_1W_2|U}(w_1,w_2|u)}{\sum_{u'}\mathcal{P}_U(u')\mathcal{Q}_{W_1W_2|U}(w_1,w_2|u')}.
\end{align}
Among the set of optimal actions of $\mathcal{D}_1$ and $\mathcal{D}_2$, we denote the worst pairs for the encoder distortion by
%The set of action pairs $(v_1,v_2)$ that are optimal for the decoders and worst for the encoder.
%We denote by $\tilde{A}(\mathcal{Q}_{W_1W_2|U},w_{1},w_{2})\subset \mathcal{V}_1\times\mathcal{V}_2$ 
%This set is defined by
\begin{align}
    \tilde{A}(\mathcal{Q}_{W_1W_2|U},w_{1},w_{2}) %\\
    = %\nonumber \\ &
    \underset{(v_1,v_2) \in V_1^{\star}(\mathcal{Q}_{U}^{w_{1}w_{2}}) \times \atop V_2^{\star}(\mathcal{Q}_{U}^{w_{2}})}{\argmax}\Big\{\sum_{u}\mathcal{Q}_U^{w_{1},w_{2}}(u)d_e(u,v_1,v_2)\Big\}.
\end{align}
\end{definition}
%\begin{align}
 %   \Tilde{A}(\mathcal{Q}_{W_1W_2|U})=\argmax\{\sum_{u,w_1,w_2}\mathcal{Q}(u|w_1w_2)\mathcal{Q}(u|w_2)d_e(u,v_1,v_2): (v_1,v_2) \in V_1^{\star}(q_1)\times V_2^{\star}(q_2)\}.
%\end{align}
\begin{definition}
Given $(R_1,R_2) \in \mathbb{R}^{2}_{+}$, we define 
\begin{align}
&\tilde{\mathbb{Q}}_0(R_1,R_2) = \Big\{\mathcal{Q}_{W_1W_2|U} \ s.t. \  R_2 > I(U;W_2) \ , \nonumber \\ 
&R_1+R_2 > I(U;W_1,W_2), %\nonumber \\ 
\max_{w_1,w_2}| \tilde{A}(\mathcal{Q}_{W_1W_2|U},w_{1},w_{2})|=1 \Big\}. \nonumber
%\label{zzsrqq0}
\end{align}
\end{definition}
\begin{definition} Consider the following problem
 \begin{align}
 %\color{red}
 \tilde{D}_e(R_1,R_2)= %\nonumber \\
 \underset{\mathcal{Q}_{W_1W_2|U}\atop \in\tilde{\mathbb{Q}}_0(R_1,R_2)}{\inf}\underset{\mathcal{Q}_{V_1|W_1W_2}\in \mathbb{Q}_{1}(\mathcal{Q}_{W_1W_2|U}) \atop \mathcal{Q}_{V_2|W_2}\in \mathbb{Q}_{2}(\mathcal{Q}_{W_2|U})}{\max}\mathbb{E}\Big[
    d_e(U,V_1,V_2)\Big], %\label{optdisto} 
    \nonumber
 \end{align} 
 where the expectation %in \eqref{optdisto} 
 is evaluated with respect to $\mathcal{P}_{U} \mathcal{Q}_{W_1W_2|U}\mathcal{Q}_{V_1|W_1W_2}\mathcal{Q}_{V_2|W_2}$. \end{definition}
%\begin{align}
 %   \tilde{D}_e^{\star}(R_1,R_2)=\inf\{\sum_{u,w_1,w_2}\mathcal{P}(u)\mathcal{Q}(w_1w_2|u)\Psi_e(\tau_1,\tau_2), \ \mathrm{ s.t. }  \ \  |\Tilde{A}(\tau_1,\tau_2)|=1 \}. 
%\end{align}

\begin{lemma}
\label{lemmmaa5} For $(R_1,R_2) \in \mathbb{R}_{+}^2$, %we have
 %\begin{align}
  $ D_e^{\star}(R_1,R_2) = \tilde{D}_e(R_1,R_2).$
%\end{align} 
\end{lemma}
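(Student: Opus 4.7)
The plan is to prove the two inequalities separately. The inclusion $\tilde{\mathbb{Q}}_0(R_1,R_2)\subseteq\mathbb{Q}_0(R_1,R_2)$ is immediate, since the only additions in $\tilde{\mathbb{Q}}_0$ are the strict rate inequalities and the uniqueness condition on $\tilde{A}$; the infimum over the smaller set is therefore at least the infimum over the larger one, and $\tilde{D}_e(R_1,R_2)\geq D_e^{\star}(R_1,R_2)$ follows for free. The content of the lemma is the reverse inequality, which I would establish by a density plus upper-semicontinuity argument.

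For the nontrivial direction, given $\varepsilon>0$, I would pick $\mathcal{Q}_{W_1W_2|U}\in\mathbb{Q}_0(R_1,R_2)$ whose objective value in \eqref{optdistoooo1} is within $\varepsilon/2$ of $D_e^{\star}(R_1,R_2)$, and build a nearby $\tilde{\mathcal{Q}}_{W_1W_2|U}\in\tilde{\mathbb{Q}}_0(R_1,R_2)$ in two stages. Stage~A enforces the strict rate inequalities by convexly perturbing toward an independent channel, $\mathcal{Q}^\delta_{W_1W_2|U}=(1-\delta)\mathcal{Q}_{W_1W_2|U}+\delta\,\mathcal{Q}_{W_1}\mathcal{Q}_{W_2}$; convexity of the mutual information in the channel gives $I^\delta(U;W_2)<I(U;W_2)\leq R_2$ and $I^\delta(U;W_1,W_2)<R_1+R_2$, opening a strict slack that scales with $\delta$. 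Stage~B enforces $|\tilde{A}|=1$ everywhere by splitting each pair $(w_1,w_2)$ into finitely many copies whose conditional laws are microscopic perturbations of $\mathcal{Q}^\delta(\cdot,\cdot|u)$, chosen so that the resulting posteriors $\mathcal{Q}_U^{w_1w_2}$ and $\mathcal{Q}_U^{w_2}$ are pushed off the (finitely many) lower-dimensional tie-breaking hyperplanes in $\Delta(\mathcal{U})$ where $|V_1^{\star}|>1$ or $|V_2^{\star}|>1$. For a splitting small enough, the extra mutual information incurred in Stage~B remains strictly below the slack opened by Stage~A, so $\tilde{\mathcal{Q}}\in\tilde{\mathbb{Q}}_0(R_1,R_2)$.

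The analytical fact I would invoke to close the argument is upper semi-continuity of the objective in $\mathcal{Q}_{W_1W_2|U}$. Applying Berge's maximum theorem to the argmin correspondences $q_1\mapsto V_1^{\star}(q_1)$ and $q_2\mapsto V_2^{\star}(q_2)$ yields upper hemi-continuity, i.e., $\limsup_n V_i^{\star}(q_n)\subseteq V_i^{\star}(q)$ whenever $q_n\to q$. Combined with continuity of the expectation of $d_e$ in the joint distribution, this gives upper semi-continuity of the inner max over best responses as a function of $\mathcal{Q}_{W_1W_2|U}$. If both perturbation parameters are chosen small enough, it follows that the objective evaluated at $\tilde{\mathcal{Q}}$ exceeds that at $\mathcal{Q}$ by at most $\varepsilon/2$, hence $\tilde{D}_e(R_1,R_2)\leq D_e^{\star}(R_1,R_2)+\varepsilon$; letting $\varepsilon\downarrow 0$ finishes.

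The main obstacle I expect is quantitatively coordinating the two perturbations: Stage~B must simultaneously break every tie in every argmin set at every $(w_1,w_2)$ while contributing an amount of extra mutual information strictly less than the slack Stage~A has created. Since the bad posteriors $q$ with $|V_i^{\star}(q)|>1$ form a finite union of affine subspaces of $\Delta(\mathcal{U})$ of codimension at least one, arbitrarily small generic perturbations do break ties; but matching the scales of the two stages cleanly, and verifying that the uniqueness condition $|\tilde{A}(\tilde{\mathcal{Q}},w_1,w_2)|=1$ persists jointly across all the split symbols, is the delicate part of the construction.
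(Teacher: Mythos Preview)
Your overall strategy---show $\tilde{\mathbb{Q}}_0(R_1,R_2)$ is dense in $\mathbb{Q}_0(R_1,R_2)$ and then invoke upper semi-continuity of the inner $\max$---is exactly the paper's approach, and Stage~A is essentially identical to the paper's treatment of the strict-inequality part (the paper perturbs toward the marginal $\mathcal{P}_{W_1W_2}$ rather than the product $\mathcal{Q}_{W_1}\mathcal{Q}_{W_2}$, but either choice makes the channel independent of $U$ and the convexity argument goes through unchanged).

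Stage~B, however, has a gap as written. By Definition~\ref{def:characterization} the auxiliary alphabets are fixed at $|\mathcal{W}_i|=|\mathcal{V}_i|$, and both $\mathbb{Q}_0(R_1,R_2)$ and $\tilde{\mathbb{Q}}_0(R_1,R_2)$ are subsets of $\Delta(\mathcal{W}_1\times\mathcal{W}_2)^{|\mathcal{U}|}$ for those fixed alphabets. Splitting each $(w_1,w_2)$ into several copies enlarges the alphabets, so the resulting $\tilde{\mathcal{Q}}$ no longer lies in $\tilde{\mathbb{Q}}_0(R_1,R_2)$ as defined, and you cannot directly conclude that $\tilde{D}_e(R_1,R_2)$ is bounded by the objective at $\tilde{\mathcal{Q}}$. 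Closing this would require a separate cardinality-reduction step (showing the infima are unchanged under alphabet enlargement), which you have not supplied and which is not entirely free.

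The paper sidesteps this by perturbing within the fixed alphabet rather than splitting. It observes that the ``bad'' posteriors---those $q\in\Delta(\mathcal{U})$ at which some decoder is indifferent between two non-equivalent actions, or the encoder is indifferent between two non-equivalent action pairs---lie in a finite union of affine hyperplanes of dimension at most $|\mathcal{U}|-2$; pulling this back through the continuous surjection $(\lambda_{w_1w_2},\mathcal{Q}_U^{w_1w_2})_{w_1,w_2}\mapsto \mathcal{Q}_{W_1W_2|U}$ shows the set of channels with $\max_{w_1,w_2}|\tilde{A}(\mathcal{Q}_{W_1W_2|U},w_1,w_2)|=1$ is open and dense in $\Delta(\mathcal{W}_1\times\mathcal{W}_2)^{|\mathcal{U}|}$. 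A generic small perturbation of your $\mathcal{Q}^\delta$ inside the same space therefore already lands in $\tilde{\mathbb{Q}}_0(R_1,R_2)$, with no alphabet change and no need to coordinate the scale of a splitting against the Stage~A slack; your upper semi-continuity step then finishes the argument exactly as you wrote it.
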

Similarly to the proof of \cite[Lemma A.5]{jet}, this proof relies on showing that $\tilde{\mathbb{Q}}_0(R_1,R_2)$ is dense in $\mathbb{Q}_0(R_1,R_2)$. It is provided in the full version of the paper \cite[Lemma 6]{rouphael2021strategic}.

\subsection{Achievability Scheme}

\vspace{-0.5cm}

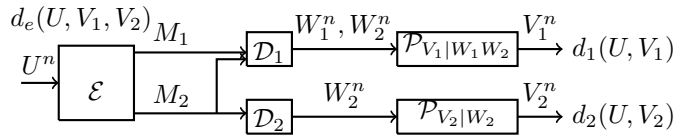
\begin{figure}[h!]
    \centering
    \begin{tikzpicture}[xscale=2,yscale=0.9]
\draw[thick,->](0.5,0.25)--(0.75,0.25);
\draw[thick,-](0.75,-0.3)--(1.25,-0.3)--(1.25,0.8)--(0.75,0.8)--(0.75,-0.3);
\node[below, black] at (1,0.4) {$\mathcal{E}$};
\draw[thick, ->](1.25,0.7)--(2,0.7);
\draw[thick, ->](1.8,-0.2)--(1.8,0.6)--(2,0.6);
\draw[thick, ->](1.25,-0.2)--(2,-0.2);
\draw[thick,-](2,0.5)--(2.30,0.5)--(2.30,1)--(2,1)--(2,0.5);
\node[below, black] at (2.15,1) { $\mathcal{D}_{1}$};
\draw[thick,-](2,-0.5)--(2.30,-0.5)--(2.30,0)--(2,0)--(2,-0.5);
\node[below, black] at (2.15,0) { $\mathcal{D}_{2}$};
%\draw[thick,->](2.55,0.75)--(2.75,0.75);
%\draw[thick,->](2.55,-0.25)--(2.75,-0.25);
%\node[right, black] at (2.75,-0.25) {$W^n_2$};
%\node[right, black] at (2.75,0.75) {$W^n_1,W^n_2$};
\node[above, black] at (0.62,0.25) {$U^n$};
%\node[above, black] at (1.5,0.25) {$X^n$};
%\node[right, black] at (2.4,0.4) {$Z_1^n$};
%\node[right, black] at (2.4,-1.3) {$Z_2^n$};
\node[above, black] at (1.5,0.67){$M_1$};
\node[above, black] at (1.5,-0.25){$M_2$};
  \draw[thick,-](3,-0.5)--(3.8,-0.5)--(3.8,0)--(3,0)--(3,-0.5);
\draw[thick,-](3,0.5)--(3.8,0.5)--(3.8,1)--(3,1)--(3,0.5);
\draw[thick,->](2.30,0.75)--(3,0.75);
%\draw[thick,->](3.25,0.75)--(3.8,0.75);
%\draw[thick,->](3.25,-0.25)--(3.8,-0.25);
\node[above, black] at (3.95,0.75) {$V_1^n$};
\node[above, black] at (3.95,-0.25) {$V_2^n$};
\draw[thick,->](3.8,0.75)--(4.1,0.75);
\draw[thick,->](3.8,-0.25)--(4.1,-0.25);
\node[above, black] at (3.4,0.45) {$\mathcal{P}_{V_1|W_1W_2}$};
\node[above, black] at (3.4,-0.55) {$\mathcal{P}_{V_2|W_2}$};
\draw[thick,->](2.30,-0.25)--(3,-0.25);
\node[above, black] at (0.9,0.85) {$d_e(U,V_1,V_2)$};
\node[above, black] at (2.64,0.75) {$W_1^n,W_2^n$};
\node[above, black] at (2.65,-0.25) {$W_2^n$};
\node[right, black] at (4.1,-0.25) {$d_2(U,V_2)$};
\node[right, black] at (4.1,0.75) {$d_1(U,V_1)$};
    \end{tikzpicture}
    \caption{Achievability of successive refinement source coding.}
    \label{fig:successive}
\end{figure}
\subsubsection{$R_1>0, R_2>0$ %\in ]0,+\infty[^2$
}
Fix a conditional probability distribution $\mathcal{Q}_{W_1,W_2|U}$. There exists $\eta>0$ such that \begin{align}
    R_2 = I(U;W_2) + \eta, \quad \label{eqwwe}
    R_1 = I(U;W_1|W_2) + \eta. %\label{eqwwee}
\end{align} 
%Encoding strategy $\sigma$ and decoding strategies $\tau_1,\tau_2$ are given as follows: \\
Codebook generation: Randomly and independently generate $2^{\lfloor nR_{2}\rfloor}$ sequences $w_2^n(m_{2})$ for $m_{2} \in [1:2^{\lfloor{nR_{2}}\rfloor}]$, according to the i.i.d distribution $\mathcal{P}_{W_2^n}=\Pi_{t=1}^n\mathcal{P}_{W_2}(w_{2t})$. For each $(m_{1},m_{2}) \in [1:2^{\lfloor nR_{1} \rfloor}] \times [1:2^{\lfloor nR_{2} \rfloor}]$ generate a sequence 
$w_1^n(m_{1},m_{2})$ randomly and conditionally independently according to the i.i.d conditional distribution  $\mathcal{P}_{W_1^n|M_1W_2^n}=\Pi_{t=1}^n\mathcal{P}_{W_1|M_1W_2}(w_{1t}|m_1,w_{2t}(m_{2}))$. 

Encoding strategy $\sigma$: Encoder $\mathcal{E}$ observes $u^n$ and looks in the codebook for a pair $(m_{1},m_{2})$ such that \\ $(u^n, w_1^n(m_{1},m_{2}),w^n_2(m_{2})) \in \mathcal{T}_{\delta}^n(\mathcal{P}_U\mathcal{P}_{W_1W_2|U})$, i.e. the sequences are jointly typical with tolerance parameter $\delta>0$. If such a jointly typical tuple doesn't exist, the source encoder sets $(m_{1},m_{2})$ to $(1,1)$.
Then, it sends  $m_{2}$ to decoder $\mathcal{D}_{2}$, and $(m_{1},m_{2})$ to decoder $\mathcal{D}_{1}$. 

Here comes the main difference with the successive refinement coding, which is due to the strategic nature of the problem. Instead of declaring $w_1^n(m_{1},m_{2})$ and $w_2^n(m_{2})$ and selecting $V_1^n$ and $V_2^n$ i.i.d. with respect to $\mathcal{Q}_{V_1|W_1W_2}\in \mathbb{Q}_{1}(\mathcal{Q}_{W_1W_2|U})$ and $ \mathcal{Q}_{V_2|W_2}\in \mathbb{Q}_{2}(\mathcal{Q}_{W_2|U})$, at each stage $t\in\{1,\ldots,n\}$ the decoders $\mathcal{D}_1$ and $\mathcal{D}_2$ compute their Bayesian posterior beliefs $\mathcal{P}^{\sigma}_{U_t|M_1M_2}(\cdot|m_1,m_2)$ and $\mathcal{P}^{\sigma}_{U_t|M_2}(\cdot|m_2)$ and select the actions  $v_{1,t}\in V_1^{\star}(\mathcal{P}^{\sigma}_{U_t|M_1M_2})$ and $v_{2,t}\in V_2^{\star}(\mathcal{P}^{\sigma}_{U_t|M_2})$ that minimize their own distortion function. If several pairs are available, they select the worst one for the encoder distortion.

\begin{comment}
We consider two auxiliary decoding functions $g_1$ and $g_2$ given as follows:
\begin{align}
        g_1:& \{1,2,..2^{\lfloor nR_1 \rfloor}\}\times\{1,2,..2^{\lfloor nR_2 \rfloor}\} \longrightarrow \mathcal{W}_1^n
        \\ g_2:&\{1,2,..2^{\lfloor nR_2 \rfloor}\} \longrightarrow \mathcal{W}_2^n
    \end{align}
  We assume that decoder $\mathcal{D}_{1}$ applies both decoding functions $g_1$ and $g_2$ in order to declare $(W_1^n,W_2^n)$ i.e for $M_1,M_2 \in \{1,2,..2^{\lfloor nR_1 \rfloor}\}\times\{1,2,..2^{\lfloor nR_2 \rfloor}\}$,  $\tau_1(M_1,M_2)=(g_1(M_1,M_2),g_2(M_2))$. However, for $M_2 \in \{1,2,..2^{\lfloor nR_2 \rfloor}\}$, decoder $\mathcal{D}_{2}$'s strategy $\tau_2(M_2)=g_2(M_2) \in \mathcal{W}_2^n$.
\end{comment}

Error Event: Given a tolerance $\delta>0$, the error event is given by $\mathcal{F}=\{(U^n,W^n_2(m_{2}),W_1^n(m_{2},m_1) \notin \mathcal{T}_{\delta}^n \}$. We have by the union of events bound $\mathcal{P}(\mathcal{F})\leq \mathcal{P}(\mathcal{F}_{1}) + \mathcal{P}(\mathcal{F}_{2}(M_2) \cap \mathcal{F}_{1}^c)$, where 
%\begin{align}
$\mathcal{F}_{1} = \{(U^n,W_2^n(m_{2})) \notin \mathcal{T}_{\delta}^n \  \forall m_{2}\}$, %\nonumber \\
$\mathcal{F}_{2}(m_2) = \{(U^n,W^n_2(m_{2}),W_1^n(m_{2},m_1)) \notin \mathcal{T}_{\delta}^n \ \forall m_{1}\}$. % \nonumber
%\end{align}
By \cite[Lemma 3.3, pp. 62]{elgamal},  $\mathcal{P}(\mathcal{F}_{1})$ tends to zero as $n \to \infty$ if 
%\begin{align}
$R_{2} > \  I(U;W_2) + \eta.$ %\nonumber %\label{ee1}
%\end{align}
By \cite[Lemma 3.3, pp. 62]{elgamal}, $\mathcal{P}(\mathcal{F}_{1}^c \cap \mathcal{F}_{2}(M_2))$ goes to zero if %\begin{align}
$R_1 + R_{2} > \ I(U;W_1,W_2)  +\eta.$ %\nonumber %\label{ee2}
%\end{align} 

Since the expected error probability evaluated with respect to the random codebook is small, we have that for all $\varepsilon_2 >0$, for all $\eta>0$, there exists $\Bar{\delta}>0$, for all $\delta\leq\Bar{\delta}$, there exists $\Bar{n} \in \mathbb{N}$ such that for all $n \geq \Bar{n}$, we have
\begin{align}
&\mathbb{E}\big[\mathcal{P}(\mathcal{F}_{1})\big] \leq \varepsilon_2, \label{eqarr} \quad
&\mathbb{E}\big[\mathcal{P}(\mathcal{F}_{2}(m_2))\big] \leq \varepsilon_2.  %\label{eqarrr}
\end{align} 

\subsubsection{Control of beliefs} %remains to show that $\mathcal{P}_{U^n|Y_1^n} \sim \mathcal{P}_{U|W1W_2}$ and $\mathcal{P}_{u^n|Y_2^n} \sim \mathcal{P}_{U|W_2}$. 
\begin{comment}

Let $\mathcal{P}^{m_1,m_2}_{t} \in \Delta(\mathcal{U})$ denote the belief on $u_t$ conditional to $m_1,m_2$ and $\mathcal{P}^{m_2}_{t} \in \Delta(\mathcal{U})$ denote the belief on $u_t$ conditional to $m_2$ defined by 
\begin{align}
    \mathcal{P}^{m_1,m_2}_{t}(u) =& \mathcal{P}^{f}(U_t=u|M_1=m_1,M_2=m_2) \ \forall u\in \mathcal{U}, \\
    \mathcal{P}^{m_2}_{t}(u) =& \mathcal{P}^{f}(U_t=u|M_2=m_2) \  \forall u\in \mathcal{U}.
\end{align}

Given $\mathcal{Q}_{W_1W_2|U}$, the conditional distributions $\mathcal{P}_{U|W_1W_2}$ and $P_{U|W_2}$ are computed as follows
\begin{align}
    \mathcal{P}_{U|W_1W_2}=&\frac{\mathcal{P}_{UW_1W_2}}{\mathcal{P}_{W_1W_2}}=\frac{\mathcal{P}_U\mathcal{Q}_{W_1W_2|U}}{\sum_{U'}\mathcal{P}_{U'}\mathcal{Q}_{W_1W_2|U'}}, \\
     \mathcal{P}_{U|W_2}=&\frac{\mathcal{P}_{UW_2}}{\mathcal{P}_{W_2}}=\frac{\mathcal{P}_U\mathcal{Q}_{W_2|U}}{\sum_{U'}\mathcal{P}_{U'}\mathcal{Q}_{W_2|U'}}. 
\end{align}
%The Kullback-Leiber (KL) Divergence for distributions $P$ and $Q$ on $\Delta(U)$ with respective supports $\mathrm{supp} P$ and $\mathrm{supp} Q$ is given by
%\begin{align}
  %  D(P || Q) = \begin{cases} \sum_{u \in \mathrm{supp} P}P(u)\log_2\frac{P(u)}{Q(u)}&, \text{ if $\mathrm{supp} Q \ \subset \  \mathrm{supp} P$}. \\
   %+ \infty &, \text{ otherwise.} 
    %  \end{cases}
%\end{align} 
\end{comment}
We introduce the indicator of error events $E^1_{\delta} \in \{0,1\}$ for decoder $\mathcal{D}_1$
%, and  $E^2_{\delta} \in \{0,1\}$ for decoder $\mathcal{D}_2$ 
defined as follows
\begin{align}
  E^1_{\delta}=&\begin{cases}
    1, & \text{if $(u^n,w_1^n,w_2^n) \notin \mathcal{T}_{\delta}^n(\mathcal{P}_{U}\mathcal{Q}_{W_1W_2|U}) $}.\\
    0, & \text{otherwise}.
  \end{cases} %\\
 % E^2_{\delta}=&\begin{cases}
 %   1, & \text{if $(u^n,w_2^n) \notin \mathcal{T}_{\delta}^n(\mathcal{P}_{U}\mathcal{Q}_{W_2|U}) $}.\\
  %  0, & \text{otherwise}.
  %\end{cases}
\end{align}
\begin{comment}
\begin{remark} Note that $E_{\delta}^1=0 \iff (u^n,w_1^n,w_2^n) \in \mathcal{T}_{\delta}^n(\mathcal{P}_{U}\mathcal{Q}_{W_1W_2|U})  \implies (u^n,w_2^n) \in \mathcal{T}_{\delta}^n(\mathcal{P}_{U}\mathcal{Q}_{W_2|U}) \iff E^2_{\delta}=0$. Conversely, $E_{\delta}^2=1 \iff (u^n,w_2^n) \notin \mathcal{T}_{\delta}^n(\mathcal{P}_{U}\mathcal{Q}_{W_2|U})  \implies    (u^n,w_1^n,w_2^n) \notin \mathcal{T}_{\delta}^n(\mathcal{P}_{U}\mathcal{Q}_{W_1W_2|U}) \iff E_{\delta}^1=1$ Moreover, $\mathcal{P}(E_{\delta}^1=0)\leq \mathcal{P}(E_{\delta}^2=0)$ and $\mathcal{P}(E_{\delta}^1=1)\geq \mathcal{P}(E_{\delta}^2=1)$  
\end{remark}
\end{comment}

%The decoders are strategic, at stage $t\in \{1,\ldots, n\}$ they select their actions $(v_{1,t},v_{2,t})$ with respect to their respective 
We denote the Bayesian posterior beliefs $\mathcal{P}^{\sigma}_{U_t|M_1M_2}(\cdot|m_1,m_2)\in\Delta(\mathcal{U})$ and $\mathcal{P}^{\sigma}_{U_t|M_2}(\cdot|m_2)\in\Delta(\mathcal{U})$ by $\mathcal{P}^{m_1,m_2}_{t}$ and $\mathcal{P}^{m_2}_{t}$. 
%induced by the coding strategy $\sigma$. 
We show that on average, the Bayesian beliefs are close in KL distance to the target beliefs $\mathcal{Q}_{U|W_1W_2}$ and $\mathcal{Q}_{U|W_2}$ induced by the single-letter distribution $\mathcal{Q}_{W_1W_2|U}$. Assuming the distribution $\mathcal{Q}_{U|W_1W_2}$ is fully supported, the beliefs of decoder $\mathcal{D}_{1}$ are controlled as follows

%XXX change $\mathcal{Q}_{U|W_{1}W_{2}}$ instead of $\mathcal{P}_{U|W_{1}W_{2}}$ 
\begin{align}
       &\mathbb{E} \Big[\frac{1}{n}\sum_{t=1}^n D(\mathcal{P}^{m_1,m_2}_{t} ||\mathcal{Q}_{U|W_{1}W_{2}}(\cdot | W_{1t},W_{2t})) \Big|E^1_{\delta}=0\Big] \nonumber \\
   % =& \sum_{m_1,m_2,\atop w_1^n,w_2^n}\mathcal{P}^{\sigma,\tau_1,\tau_2}(m_1,m_2,w_1^n,w_2^n\Big| E^1_{\delta}=0) \nonumber \\& \times \frac{1}{n}\sum_{t=1}^n D(\mathcal{P}^{m_1,m_2}_{t} ||\mathcal{P}_{U|W_{1}W_{2}}(\cdot | W_{1t},W_{2t})) \label{b2}\\ 
     =& \sum_{m_1,m_2,\atop w_1^n,w_2^n}\mathcal{P}^{\sigma,\tau_1,\tau_2}(m_1,m_2,w_1^n,w_2^n\Big| E^1_{\delta}=0) \nonumber \\ &\times \frac{1}{n}\sum_{t=1}^n\sum_{u}\mathcal{P}_{t}^{m_1m_2}(u)\log_2\frac{\mathcal{P}_{t}^{m_1m_2}(u)}{\mathcal{Q}_{U|W_{1}W_{2}}(u| w_{1t},w_{2t})} %\label{zbb2} 
     \nonumber 
     \\
     =& \sum_{m_1,m_2,\atop w_1^n,w_2^n}\mathcal{P}^{\sigma,\tau_1,\tau_2}(m_1,m_2,w_1^n,w_2^n\Big| E^1_{\delta}=0) \nonumber \\ &\times \frac{1}{n}\sum_{t=1}^n\sum_{u}\mathcal{P}_{t}^{m_1m_2}(u)\log_2 \frac{1}{\mathcal{Q}_{U|W_{1}W_{2}}(u| w_{1t},w_{2t})} \nonumber \\
     &- \sum_{m_1,m_2,\atop w_1^n,w_2^n}\mathcal{P}^{\sigma,\tau_1,\tau_2}(m_1,m_2,w_1^n,w_2^n\Big| E^1_{\delta}=0)\times \nonumber \\ &\frac{1}{n}\sum_{t=1}^n\sum_{u}\mathcal{P}_{t}^{m_1m_2}(u)\log_2\frac{1}{\mathcal{P}_{t}^{m_1m_2}(u)} %\label{zbbb2}  
     \nonumber \\
    %=& \frac{1}{n}  \sum_{m_1,m_2,\atop w_1^n,w_2^n}\mathcal{P}^{\sigma,\tau_1,\tau_2}(m_1,m_2,w_1^n,w_2^n\Big| E^1_{\delta}=0) \nonumber \\ &\times \frac{1}{n}\sum_{t=1}^n\sum_{u}\mathcal{P}_{t}^{m_1m_2}(u)\log_2\frac{1}{\mathcal{P}_{U|W_{1}W_{2}}(u| w_{1t},w_{2t})} \nonumber \\ &-\frac{1}{n}\sum_{t=1}^n H(U_t|M_1,M_2, E^1_{\delta}=0) \label{b3} \\
   % =& \frac{1}{n}\sum_{u^n,\atop w_1^n,w_2^n}\mathcal{P}^{\sigma,\tau_1,\tau_2}(u^n,w_1^n,w_2^n\Big| E^1_{\delta}=0)\cdot \log_2\frac{1}{\Pi_{t=1}^n\mathcal{P}_{U|W_{1}W_{2}}(u_t| w_{1t},w_{2t})} \nonumber \\ &-\frac{1}{n}\sum_{t=1}^n H(U_t|M_1,M_2, E^1_{\delta}=0) \label{b4}\\
  %  =& \frac{1}{n}\sum_{u^n,\atop w_1^n,w_2^n \in \mathcal{T}_{\delta}^n}\mathcal{P}^{\sigma,\tau_1,\tau_2}(u^n,w_1^n,w_2^n\Big| E^1_{\delta}=0)\cdot \log_2\frac{1}{\Pi_{t=1}^n\mathcal{P}_{U|W_{1}W_{2}}(u_t| w_{1t},w_{2t})} \nonumber \\ &-\frac{1}{n}\sum_{t=1}^n H(U_t|M_1,M_2, E^1_{\delta}=0) \label{b4444}\\
     % \leq& \frac{1}{n}\sum_{u^n,\atop w_1^n,w_2^n\in \mathcal{T}_{\delta}^n}\mathcal{P}^{\sigma,\tau_1,\tau_2}(u^n,w_1^n,w_2^n\Big| E^1_{\delta}=0)\cdot n \cdot \big(H(U|W_1,W_2) + \delta \big) \nonumber \\ &-\frac{1}{n}H(U^n|M_1,M_2, E^1_{\delta}=0)  \label{b5}\\
    \leq& \frac{1}{n}I(U^n;M_1,M_2\Big| E^1_{\delta}=0)-I(U;W_1,W_2)+\delta \nonumber \\ &+\frac{1}{n}+\log_2|\mathcal{U}|\cdot\mathcal{P}^{\sigma,\tau_1,\tau_2}(E^1_{\delta}=1) %\label{b6} 
    \nonumber \\
   \leq& \eta + \delta +\frac{1}{n}+\log_2|\mathcal{U}|\cdot\mathcal{P}^{\sigma,\tau_1,\tau_2}(E^1_{\delta}=1)\label{b7}. 
\end{align}

\subsubsection{Conclusion} By combining the equations \eqref{eqarr}, %\eqref{eqarrr}, 
\eqref{b7} with \cite[Lemma A.21, equations (40)-(46), Lemma A.8 ]{jet}, we obtain $\forall \varepsilon>0$, $\exists \hat{n}$, $\forall n \geq\hat{n}$,  $D_e^n(R_1,R_2)\leq D^{\star}_e(R_1,R_2)+\varepsilon$. More details are provided in Appendix \ref{Appendix B}.

\subsection{Special Cases}

%For the achievability proof, we will begin by investigating some particular cases where at least one of the rates equals zero. Then we will prove our result for the general case and we control the beliefs of our decoders. \\
%Let $Q^n_{\sigma,\tau_1,\tau_2}$ denote a distribution over $\mathcal{U} \times \mathcal{V}_1 \times \mathcal{V}_2$ that averages the probability of occurrence of $(u,v_1,v_2)$ in a triplet of sequences $(U^n,V_1^n,V_2^n)$ with respect to coding pair $\sigma,\tau_1,\tau_2$ defined as follows: 
% {\small \begin{align}
 %       Q^n_{\sigma,\tau_1,\tau_2}(u,v_1,v_2)=\frac{1}{n} \sum_{t=1}^n \mathcal{P}\{(U_t,V_{1,t},V_{2,t})=(u,v_1,v_2)\}, \ \ \ \forall (u,v_1,v_2).
  %  \end{align}}
   %   Since the source is memory-less we have
%${Q^n_{\sigma,\tau_1,\tau_2}}_{U}=\mathcal{P}_U.$ 

%\begin{comment}

%\end{comment}

\subsubsection{$R_1=R_2=0$} 
The auxiliary random variables $(W_1,W_2)$ are independent of $U$. The message sets are singletons, and the only possible encoding strategy $\sigma_0$ is given by $\sigma_0:\mathcal{U}^n\longrightarrow\{1\}\times\{1\}$. The codebook consists of two sequences $W_2^n(1)$ and $W_1^n(1,1)$ only.
%Let $(v_{1,0},v_{2,0})$ denote the action pair that corresponds to the decoders' prior $\mathcal{P}_U$ and maximizes the encoder's long-rundistortion. This action pair will be played at each repetition of the game, i.e $v^n_{1,0}=(v_{1,0},v_{1,0},...,v_{1,0})$ and $v^n_{2,0}=(v_{2,0},v_{2,0},...,v_{2,0})$. The corresponding pair of decoding strategies is denoted by $(\tau_{1,0},\tau_{2,0}) \in BR_1(\sigma_0)\times BR_2(\sigma_0)$.
%The set of target distributions is given by $\mathbb{Q}_0(0,0)=\{\mathcal{Q}_{W_1W_2|U} \ \mathrm{ s.t. } \ I(U;W_2)=I(U;W_1,W_2)=0\}$. This means that random variables $W_1$ and $W_2$ are independent from $U$, i.e $\mathcal{Q}_{W_1W_2|U}=\mathcal{Q}_{W_1W_2}$ and no information can be communicated to the decoders. 
Therefore, $\forall n \in \mathbb{N}^{\star}$, %the following result holds:
%\begin{lemma}
 % \label{zerorates}  
 $D_e^{\star}(0,0)=D_e^n(0,0).$ 
%\end{lemma}
\begin{comment}
\proof%{(of lemma \ref{zerorates})}
 {\footnotesize \begin{align}
D_e^n(0,0) =& \underset{\sigma_0}{\inf}\underset{\tau_{1,0} \in BR_{1}(\sigma_0), \atop \tau_{2,0} \in BR_{2}(\sigma_0) }{\max}d_e^n(\sigma,\tau_{1},\tau_{2}) \\
=& d_e^n(\sigma_0,\tau_{1,0},\tau_{2,0})\\
        =& \sum_{u^n,v_{1,0}^n,v_{2,0}^n}\mathcal{P}^{\sigma,\tau_1,\tau_2}(u^n,v_{1,0}^n,v_{2,0}^n)\frac{1}{n}\sum_{t=1}^n d_e(u_t,v_{1,0},v_{2,0}) \\
        =& \frac{1}{n}\sum_{t=1}^n\sum_{u_t,v_{1,0},v_{2,0}}\mathcal{P}^{\sigma_0,\tau_{1,0},\tau_{2,0}}(u_t,v_{1,0},v_{2,0}) d_e(u_t,v_{1,0},v_{2,0}) \\
        =& \sum_{u,v_{1,0},\atop v_{2,0}}\sum_{t=1}^n\frac{1}{n}\mathcal{P}%^{\sigma_0,\tau_{1,0}, \tau_{2,0}}
        (U_t=u,V_{1,t}=v_{1,0},V_{2,t}=v_{2,0}) d_e(u,v_{1,0},v_{2,0}) \\
        =&\sum_{u,v_{1,0},\atop v_{2,0}}{Q^n_{\sigma_0,\tau_{1,0},\tau_{2,0}}}_{UV_1V_2}(u,v_{1,0},v_{2,0})d_e(u,v_{1,0},v_{2,0}) \\
        =&  \mathbb{E}_{\mathcal{P}_U%{Q^n_{\sigma_0,\tau_{1,0},\tau_{2,0}}}_{U}
        }[d_e(U,v_{1,0},v_{2,0})] \\
       %=& \inf_{\sigma_0}\max_{\tau_{1,0},\tau_{2,0}}  \mathbb{E}_{{Q^n_{\sigma,\tau_1,\tau_2}}_{UV_1V_2}}[d_e(U,V_1,V_2)] \label{labb1w} \\
        %=&\mathbb{E}_{\mathcal{P}_{U}}\Big[ d_e(U,v_{1,0},v_{2,0})\Big] \\
       =&\underset{\mathcal{Q}_{W_1W_2}\in\mathbb{Q}_0(0,0)}{\inf}\underset{\mathcal{Q}_{V_1|W_1W_2}\in \mathbb{Q}_{1}(\mathcal{Q}_{W_1W_2}) \atop \mathcal{Q}_{V_2|W_2}\in \mathbb{Q}_{2}(\mathcal{Q}_{W_2})}{\max}\mathbb{E}%_{\mathcal{P}_{U} \mathcal{Q}_{W_1W_2}\atop \mathcal{Q}_{V_1|W_1W_2}\mathcal{Q}_{V_2|W_2}}
       \Big[
    d_e(U,V_1,V_2)\Big]\\
        =&D^{\star}_e(0,0).
      % \geq& \inf_{\sigma}\max_{\tau_1,\tau_2} \underset{\mathcal{Q}_{W_2|U}\in\mathbb{Q}_0}{\inf}\underset{\mathcal{Q}_{V_1|W_2}\in \mathbb{Q}_{1}(\mathcal{Q}_{W_2|U}) \atop \mathcal{Q}_{V_2|W_2}\in \mathbb{Q}_{2}(\mathcal{Q}_{W_2|U})}{\max}\mathbb{E}_{\mathcal{P}_U \times \atop \mathcal{Q}_{V_1V_2|W_2}}\Bigg[
%    d_e(U,V_1,V_2)\Bigg] \\ %=&\min_{\mathcal{P}(V_1,V_2|U) | I(U;V_1,V_2)\leq C}\mathbb{E}_{\mathcal{Q}(u,w)\times \atop \mathcal{Q}(v_1,v_2|w)}[d_e(U,V_1,V_2)]  \\ 
    \end{align}}
 \endproof{}
 \end{comment}
 \subsubsection{$R_1>0 \And R_2=0$}

 Random variables $W_2$ and $U$ are independent for $R_1>0$ and $R_2=0$, i.e. $\mathcal{Q}_{W_1W_2|U}=\mathcal{Q}_{W_2}\mathcal{Q}_{W_1|W_2U}$. This means that decoder $\mathcal{D}_2$ will repeatedly chose the action $v_{2,0} \in V^{\star}(\mathcal{P}_U)$ that corresponds to its prior belief $\mathcal{P}_U$ and maximizes the encoder's distortion. The persuasion game is thus reduced to the point-to-point problem with one decoder $\mathcal{D}_1$, as in \cite{jet}. %The single-letter distortions of the encoder and the decoder $d_e, d_1$ are defined as follows

\subsubsection{$R_1=0 \And R_2>0$ } 

The auxiliary random variable $W_1$ is independent of $U$. Hence, the encoder transmits the same index to both decoders. Therefore, both decoders will have the same posterior belief $\mathcal{Q}^{w_2}_U \in \Delta(\mathcal{U})$, $\forall w_2 \in \mathcal{W}_2$. 
%-------------
%The output sequences $V^n_1$ and $V^n_2$ are drawn i.i.d. according to the optimal $\mathcal{Q}_{V_1|W_2}\in \mathbb{Q}_1(\mathcal{Q}_{W_2|U})$ and $\mathcal{Q}_{V_2|W_2}\in \mathbb{Q}_2(\mathcal{Q}_{W_2|U})$ respectively. This case is also reduced to the point-to-point problem solved in \cite{jet}, by considering a unique decoder that selects the optimal pair of outputs $(v_1,v_2)$.

 %\begin{definition}
   %A family of pairs  %$(\lambda_{w_2},\mathcal{Q}^{w_2}_U)_{w_2\i%n\mathcal{W}_2} \in ([0,1] \times %\Delta(\mathcal{U}))^{|\mathcal{W}_2|}$ is %a splitting of $\mathcal{P}_U$ if
%%\begin{align}
%$\sum_{w_2\in\mathcal{W}_2}\lambda_{w_2}=1, %\label{splitting1}$ and 
    %$\sum_{w_2\in\mathcal{W}_2}\lambda_{w_2}\m%athcal{Q}^{w_2}_U =\mathcal{P}_U. $ %%\label{splitting2} 
%%\end{align}  
%\end{definition}

%---------
In that case, the optimal distortion can be reformulated in terms of a convexification of its expected distortion as in \cite{jet}, 
  %\begin{align}
$ D^{\star}_e(0,R_2) = \underset{(\lambda_{w_2},\mathcal{Q}_U^{w_2})_{w_2\in\mathcal{W}_2}}{\inf}\sum_{w_2\in\mathcal{W}_2}\lambda_{w_2}\Psi_e(\mathcal{Q}_U^{w_2})$ %\end{align}
where $\Psi_e(q)=\underset{(v_1,v_2) \in \atop V_1^{\star}(q) \times  V_2^{\star}(q)}{\max}\mathbb{E}_q\big[d_e(U,v_1,v_2)\big].$

%\appendices{Proof of Lemma \ref{lemma:subadditive}}\label{sec:app}
\appendices{}
%\section{Appendix A} \label{Appendix A} 

%{Proof of Lemma \ref{lemma:subadditive}}\label{sec:app}

%XXX best-response instead of best-response
\section{Proof of Lemma \ref{lemma:subadditive}}
\label{Appendix A} 
\begin{proof}[Lemma \ref{lemma:subadditive}]
Let $n,m \in \mathbb{Z}$. We denote by ${\sigma}_c^{n+m}$, the concatenation of the strategies ${\sigma}^n$, $\sigma^m$ where ${\sigma}^n$ is implemented over the first $n$ stages and ${\sigma}^m$ is implemented over the last $m$ stages. For decoder $i \in \{1,2\}$, consider the best responses ${\tau_i}^n \in BR_{i}({\sigma}^n)$ and ${\tau_i}^m \in BR_{i}({\sigma}^m)$. Then, the concatenation $\tau_{i,c}^{n+m}$ of ${\tau_i}^n$ and ${\tau_i}^m$ is also a best response $\tau_{i,c}^{n+m}\in BR_i({\sigma}_c^{n+m})$. Therefore, we have the inequality
\begin{align}
&n D_e^n(R_1,R_2) + m D_e^m(R_1,R_2)\nonumber\\
=&\underset{\sigma^{n}}{\inf}\underset{{\tau_1}^{n} \in BR_{1}(\sigma^{n}), \atop {\tau_2}^{n} \in BR_{2}(\sigma^{n}) }{\max} \mathbb{E}%_{\sigma^{n},{\tau_1}^{n},\atop {\tau_2}^{n}}
  \Big[\sum_{t=1}^{n} d_e(U_t,V_{1,t},V_{2,t})\Big] \nonumber \\ 
  &+ \underset{\sigma^{m}}{\inf}\underset{{\tau_1}^{m} \in BR_{1}(\sigma^{m}), \atop {\tau_2}_{m} \in BR_{2}(\sigma^{m}) }{\max} \mathbb{E}%_{\sigma^{m},{\tau_1}^{m},\atop {\tau_2}^{m}}
  \Big[\sum_{t=1}^{m} d_e(U_t,V_{1,t},V_{2,t})\Big] \\
=&\underset{\sigma^{n+m}_c}{\inf}\underset{\tau_{1}^{n+m} \in BR_{1}(\sigma_c^{n+m}), \atop \tau_{2}^{n+m} \in BR_{2}(\sigma_c^{n+m}) }{\max} \mathbb{E}%_{\sigma^{n+m},{\tau_1}^{n+m},\atop {\tau_2}^{n+m}}
  \Big[\sum_{t=1}^{n+m} d_e(U_t,V_{1,t},V_{2,t})\Big] \\
\geq&\underset{\sigma^{n+m}}{\inf}\underset{{\tau_1}^{n+m} \in BR_{1}(\sigma^{n+m}), \atop {\tau_2}^{n+m} \in BR_{2}(\sigma^{n+m}) }{\max} \mathbb{E}%_{\sigma^{n+m},{\tau_1}^{n+m},\atop {\tau_2}^{n+m}}
  \Big[\sum_{t=1}^{n+m} d_e(U_t,V_{1,t},V_{2,t})\Big] \\
=&(n+m) D_e^{n+m}(R_1,R_2),
\end{align}
where the notation $\sigma^{n+m}_c$%, $\tau_{1,c}^{n+m}$, $\tau_{2,c}^{n+m}$ 
stands for the encoding strategies obtained by concatenation.\end{proof}

\ifCLASSINFOpdf
  % \usepackage[pdftex]{graphicx}
  % declare the path(s) where your graphic files are
  % \graphicspath{{../pdf/}{../jpeg/}}
  % and their extensions so you won't have to specify these with
  % every instance of \includegraphics
  % \DeclareGraphicsExtensions{.pdf,.jpeg,.png}
\else
  % or other class option (dvipsone, dvipdf, if not using dvips). graphicx
  % will default to the driver specified in the system graphics.cfg if no
  % driver is specified.
  % \usepackage[dvips]{graphicx}
  % declare the path(s) where your graphic files are
  % \graphicspath{{../eps/}}
  % and their extensions so you won't have to specify these with
  % every instance of \includegraphics
  % \DeclareGraphicsExtensions{.eps}
\fi

\definition [KL Divergence] The Kullback-Leiber (KL) Divergence for distributions $P$ and $Q$ on $\Delta(U)$ with respective supports $\mathrm{supp} P$ and $\mathrm{supp} Q$ is given by
\begin{align}
    D(P || Q) = \begin{cases} \sum_{u \in \mathrm{supp} P}P(u)\log_2\frac{P(u)}{Q(u)}&, \text{ if $\mathrm{supp} Q \ \subset \  \mathrm{supp} P$}. \\
   + \infty &, \text{ otherwise.} 
      \end{cases}
\end{align} 
 \definition [Typical Sequences]
Let $\mathcal{X}$ be a finite alphabet and $x^n$ a sequence in $\mathcal{X}^n,$ and let $\pi_{x^n}$ the empirical probability mass function over $\mathcal{X}$ corresponding to the relative frequency of symbols in $x^n$, i.e. $\pi_{x^n}(x)=\frac{|t:x_t=x|}{n}$ for $x \in \mathcal{X}.$  \\
The sequence $x^n$ is said to be $\delta-$typical with respect to a probability distribution $P_X$ on $\mathcal{X}$ if \begin{align}
    \sum_{x \in X}|\pi_{x^n}(x)-\mathcal{P}_X(x)|\leq\delta.
\end{align} 
We denote by $\mathcal{T}_{\delta}^n(P_X)$ the set of all $\delta-$typical sequences corresponding to $P_X.$ This definition can be extended to $K-$tuples of sequences $(x_1^n,x_2^n,...x_k^n) \in \mathcal{X}_1^n \times \mathcal{X}_2^n \times ... \times \mathcal{X}_k^n,$ that are jointly $\delta-$typical with respect to the joint probability $P_{X_1...X_k}.$ The set of all such $k-$tuples is denoted by $\mathcal{T}_{\delta}^n(\mathcal{P}_{X_1...X_k}).$
%\section{Appendix B} \label{Appendix B}
%\appendices{}
\section{Proof of Achievability of Theorem \ref{main result} }
\label{Appendix B}
\subsubsection{Alternative Formulation}
\definition  We denote by $V_1^{\star}(q_1)$ and $V_2^{\star}(q_2)$, the respective action sets of decoders $\mathcal{D}_1$ and $\mathcal{D}_2$ for belief parameters $q_1 \in \Delta(\mathcal{U})$ and $q_2 \in \Delta(\mathcal{U})$. 
\begin{align}
    V_1^{\star}(q_1)=\argmin_{v_1 \in V_1}\sum_{u}q_1(u)d_1(u,v_1),\\
    V_2^{\star}(q_2)=\argmin_{v_2 \in V_2}\sum_{u}q_2(u)d_2(u,v_2).
\end{align}

\definition Fix a strategy $\mathcal{Q}_{W_1W_2|U}$. Let $\tilde{A}(\mathcal{Q}_{W_1W_2|U},w_{1},w_{2})$ 
denote the set of action pairs $(v_1,v_2)$ that are optimal for the decoders and worst for the encoder. This set is given by:
\begin{align}
    \tilde{A}(\mathcal{Q}_{W_1W_2|U},w_{1},w_{2})=\underset{(v_1,v_2) \in V_1^{\star}(\mathcal{Q}_{U}^{w_{1}w_{2}}) \times \atop V_2^{\star}(\mathcal{Q}_{U}^{w_{2}})}{\argmax}\Big\{\sum_{u}\mathcal{Q}^{w_{1},w_{2}}(u)d_e(u,v_1,v_2)\Big\} \subset \mathcal{V}_1\times\mathcal{V}_2.
\end{align}

%\begin{align}
 %   \Tilde{A}(\mathcal{Q}_{W_1W_2|U})=\argmax\{\sum_{u,w_1,w_2}\mathcal{Q}(u|w_1w_2)\mathcal{Q}(u|w_2)d_e(u,v_1,v_2): (v_1,v_2) \in V_1^{\star}(q_1)\times V_2^{\star}(q_2)\}.
%\end{align}

The set $\tilde{\mathbb{Q}}_0(R_1,R_2)$ of target probability distributions for %a fixed i.i.d distribution $\mathcal{P}_U$ and 
$(R_1,R_2) \in \mathbb{R}^{2}_{+}$ is given by:
%%\begin{equation}
%%{\color{red}{\mathbb{Q}_0 = \{\mathcal{P}(u,z_1,z_2)\times\mathcal{Q}(w_1,w_2|u) \ s.t. \  \max_{\mathcal{P}(x)}I(X;Y_1,Y_2)-I(U;W_1,W_2|Z_1,Z_2) \geq 0 \}.}}
 %%   \end{equation}
    \begin{equation}
%%{%\color{red}{
\tilde{\mathbb{Q}}_0(R_1,R_2) = \Big\{%\mathcal{P}_U%,%z_1,z_2
\mathcal{Q}_{W_1W_2|U} \ s.t. \  R_2 > I(U;W_2) \ , \ 
    R_1+R_2 > I(U;W_1,W_2), \ \max_{w_1,w_2}| \tilde{A}(\mathcal{Q}_{W_1W_2|U},w_{1},w_{2})
    |=1 \Big\}.%}} 
   \label{zzsrqq0}
    \end{equation}

\definition Consider the following program:
 \begin{align}
 {%\color{red}
 {\tilde{D}_e(R_1,R_2)=\underset{\mathcal{Q}_{W_1W_2|U}\in\tilde{\mathbb{Q}}_0(R_1,R_2)}{\inf}\underset{\mathcal{Q}_{V_1|W_1W_2}\in \mathbb{Q}_{1}(\mathcal{Q}_{W_1W_2|U}) \atop \mathcal{Q}_{V_2|W_2}\in \mathbb{Q}_{2}(\mathcal{Q}_{W_2|U})}{\max}\mathbb{E}_{\mathcal{P}_{U} \mathcal{Q}_{W_1W_2|U}\atop \mathcal{Q}_{V_1|W_1W_2}\mathcal{Q}_{V_2|W_2}}\Bigg[
    d_e(U,V_1,V_2)\Bigg].}} \label{optdisto}
\end{align} 
%\begin{align}
 %   \tilde{D}_e^{\star}(R_1,R_2)=\inf\{\sum_{u,w_1,w_2}\mathcal{P}(u)\mathcal{Q}(w_1w_2|u)\Psi_e(\tau_1,\tau_2), \ \mathrm{ s.t. }  \ \  |\Tilde{A}(\tau_1,\tau_2)|=1 \}. 
%\end{align}

\lemma \label{lemmmaa5} For $(R_1,R_2) \in \mathbb{R}_{+}^2$, we have
\begin{align}
   D_e^{\star}(R_1,R_2) = \tilde{D}_e(R_1,R_2)
\end{align}
\proof{of lemma \ref{lemmmaa5}}
Consider the following sets:
\begin{align}
    \mathbb{Q}_0(R_1,R_2)=&\{%\mathcal{P}_U%,%z_1,z_2
\mathcal{Q}_{W_1W_2|U} \ s.t. \  R_2 \geq I(U;W_2) \ , \ 
    R_1+R_2 \geq I(U;W_1,W_2) \}, \\
    \mathbb{Q}_{01}=&\{%\mathcal{P}_U%,%z_1,z_2
\mathcal{Q}_{W_1W_2|U} \ s.t. \ \max_{w_1,w_2} \  | \tilde{A}(\mathcal{Q}_{W_1W_2|U},w_{1},w_{2})
|=1 \}, \\
    \mathbb{Q}_{02}(R_1,R_2)=&\{%\mathcal{P}_U%,%z_1,z_2
\mathcal{Q}_{W_1W_2|U} \ s.t. \  R_2 > I(U;W_2) \ , \ 
    R_1+R_2 > I(U;W_1,W_2) \}.
    \end{align}

We will show that $\mathbb{Q}_{01} \cap \mathbb{Q}_{02}(R_1,R_2) = \tilde{\mathbb{Q}}_0(R_1,R_2)$ is dense in $\mathbb{Q}_0(R_1,R_2)$.
We first show that  $\mathbb{Q}_{01} \cap \mathbb{Q}_0(R_1,R_2)$ is open and dense in $\mathbb{Q}_0(R_1,R_2)$. %$\tilde{A}(\mathcal{Q}_{W_1W_2|U},w_{1,t},w_{2,t})
%$ is a singleton for an open and dense set of action pairs $(v_1,v_2) \in V_1^{\star}(q_1)\times V_2^{\star}(q_2)$. \vspace{0.2cm}
\definition [Equivalent actions]
Two action $v_{i}$ and $\tilde{v}_{i}$ for decoder $\mathcal{D}_i, \ i \in \{1,2\}$ are said to be equivalent if:
$d_i(u,v_{i})=d_i(u,\tilde{v}_{i})$ for all $u\in\mathcal{U},\ \ i \in \{1,2\}$. 
We denote this equivalence relation by $\sim_{i}$. We use $\nsim_{i}$ for non equivalent actions $v_{i}$ and $\tilde{v}_{i}$, i.e. there exists $u\in\mathcal{U}$, such that $d_i(u,v_{i}) \neq d_i(u,\tilde{v}_{i})$ for $i \in \{1,2\}$.\\
Two action pairs $(v_1,v_2)$ and $(\tilde{v}_1,\tilde{v}_2)$ are equivalent for the encoder $\mathcal{E}$ if : $d_e(u,v_1,v_2)=d_e(u,\tilde{v}_1,\tilde{v}_2)$ for all $u\in\mathcal{U}$. We denote this equivalence relation by $\sim_{e}$. We use $\nsim_{e}$ for non equivalent action pairs $(v_1,v_2)$ and $(\tilde{v}_1,\tilde{v}_2)$ i.e. there exists $u\in\mathcal{U}$, such that $d_e(u,v_1,v_2) \neq d_e(u,\tilde{v}_1,\tilde{v}_2)$.
We say that two pairs of actions $(v_1,v_2)$ and $(\tilde{v}_1,\tilde{v}_2)$ are completely equivalent if: \begin{enumerate}
    \item $(v_1,v_2) \sim_e (\tilde{v}_1,\tilde{v}_2)$,
    \item $v_1 \sim_1 \tilde{v}_1$,
    \item $v_2 \sim_2 \tilde{v}_2$.
\end{enumerate}
Without loss of generality we can assume that no pairs of actions are completely equivalent, otherwise we can merge them into one action and reduce the set of actions.
\definition For a fixed i.i.d distribution $\mathcal{P}_U \in \Delta(\mathcal{U})$, we denote by $\mathbb{Q}_{i}$ for $i \in \{1,2\}$, the set of distributions $\mathcal{Q}_{W_1W_2|U} \in \Delta(\mathcal{W}_1\times\mathcal{W}_2)^{|\mathcal{U}|}$ for which decoder $\mathcal{D}_i$ is indifferent between two actions $v_i$ and $\tilde{v}_i$ that are not equivalent, 
\begin{align}
    \mathbb{Q}_{1}=\Bigg\{\mathcal{Q}_{W_1W_2|U}%\mathcal{Q}_{V_1|W_1W_2}\mathcal{Q}_{V_2|W_2} 
    %\in \Delta(\mathcal{U}%\times\mathcal{V}_1\times\mathcal{V}_2
    %)
    , &\exists v_{1} \nsim_{1} \tilde{v}_{1}, \exists w_1,w_2, \nonumber \\  &\mathbb{E}_{\mathcal{Q}_{U}^{w_1w_2}}[d_1(U,v_{1}) ]=\mathbb{E}_{\mathcal{Q}_{U}^{w_1w_2}}[(d_1(U,\tilde{v}_{1})] \Bigg\}, \\
  %  \mathbb{Q}_e=\Bigg\{&\mathcal{Q}_{U}%\mathcal{Q}_{V_1|W_1W_2}\mathcal{Q}_{V_2|W_2} 
   % \in \Delta(\mathcal{U}%\times\mathcal{V}_1\times\mathcal{V}_2
    %), \exists (v_{1},v_{2}) \nsim_e (\tilde{v}_{1},\tilde{v}_{2}) \    \mathbb{E}_{\mathcal{Q}_U}[d_e(U,v_{1},v_{2})] =\mathbb{E}_{\mathcal{Q}_U}[d_e(U,\tilde{v}_{1},\tilde{v}_{2})] \Bigg\}.
    \mathbb{Q}_{2}=\Bigg\{\mathcal{Q}_{W_1W_2|U}%\mathcal{Q}_{V_1|W_1W_2}\mathcal{Q}_{V_2|W_2} 
    %\in \Delta(\mathcal{U}%\times\mathcal{V}_1\times\mathcal{V}_2
    %)
    , &\exists v_{2} \nsim_{2} \tilde{v}_{2}, \exists w_2, \nonumber \\  &\mathbb{E}_{\mathcal{Q}_{U}^{w_2}}[d_2(U,v_{2}) ]=\mathbb{E}_{\mathcal{Q}_{U}^{w_2}}[(d_2(U,\tilde{v}_{2})] \Bigg\},
\end{align}
and by $\mathbb{Q}_e$, the set of distributions $\mathcal{Q}_{W_1W_2|U} \in \Delta(\mathcal{W}_1\times\mathcal{W}_2)^{|\mathcal{U}|}$ for which the encoder $\mathcal{E}$ is indifferent between two action pairs $(v_1,v_2)$ and $(\tilde{v}_1,\tilde{v}_2)$ that are not equivalent:
\begin{align}
     \mathbb{Q}_e=\Bigg\{\mathcal{Q}_{W_1W_2|U}%\mathcal{Q}_{V_1|W_1W_2}\mathcal{Q}_{V_2|W_2} 
    %\in \Delta(\mathcal{U}%\times\mathcal{V}_1\times\mathcal{V}_2
   % )
   , &\exists (v_{1},v_{2}) \nsim_e (\tilde{v}_{1},\tilde{v}_{2}), \exists w_1,w_2 \nonumber \\    &\mathbb{E}_{\mathcal{Q}_{U}^{w_1w_2}}[d_e(U,v_{1},v_{2})] =\mathbb{E}_{\mathcal{Q}_{U}^{w_1w_2}}[d_e(U,\tilde{v}_{1},\tilde{v}_{2})] \Bigg\}.
\end{align}

Let $\mathbb{Q}^c=\Delta(\mathcal{W}_1\times\mathcal{W}_2)^{|\mathcal{U}|} \backslash \Big(\mathbb{Q}_e \cup \mathbb{Q}_{1} \cup \mathbb{Q}_{2} \Big)$ the set of distributions $\mathcal{Q}_{W_1W_2|U}$ where for all $w_1,w_2$, at least one of the following statements hold: i) The encoder is not indifferent between any two pairs of actions, ii) At least one of the decoders is not indifferent between any two actions. %Note that $\mathbb{Q}^c$ is not the complement of the distributions that induce indifferent actions to all players. The complement in $\Delta((\mathcal{W}_1\times\mathcal{W}_2)^{|\mathcal{U}|})$ is given as follows: $\Delta((\mathcal{W}_1\times\mathcal{W}_2)^{|\mathcal{U}|}) \backslash \Big(\mathbb{Q}_e \cap (\mathbb{Q}_{d_1} \cup \mathbb{Q}_{d_2}) \Big)$.
\lemma \label{singletonqc} For each distribution $\mathcal{Q}_{W_1W_2|U}$ in $\mathbb{Q}^c$,  the set $\tilde{A}(\mathcal{Q}_{W_1W_2|U},w_{1,t},w_{2,t})$ is a singleton. \\
\proof{of lemma \ref{singletonqc}} 
We proceed by contradiction. %Without loss of generality we can assume that no actions are completely equivalent, otherwise we can merge them into one action and reduce the sets of actions.
Let $\mathcal{Q}_{W_1W_2|U} \in \mathbb{Q}^c$ and suppose there exists $(w_1,w_2) \in \mathcal{W}_1\times\mathcal{W}_2$ such that  $|\tilde{A}(\mathcal{Q}_{W_1W_2|U},w_{1},w_{2})| =2
$. This means there exists two distinct action pairs $(v_{1},v_{2}) \neq (\tilde{v}_{1},\tilde{v}_{2})$ with $v_1, \tilde{v}_1 \in V^{\star}(\mathcal{Q}^{w_1w_2}_U)$ and $v_2, \tilde{v}_2 \in V^{\star}(\mathcal{Q}^{w_2}_U)$ such that: \begin{align}
   \mathbb{E}_{\mathcal{Q}_{U}^{w_1w_2}}[d_e(U,v_{1},v_{2})] =&\mathbb{E}_{\mathcal{Q}_{U}^{w_1w_2}}[d_e(U,\tilde{v}_{1},\tilde{v}_{2})],\\
   \mathbb{E}_{\mathcal{Q}_{U}^{w_1w_2}}[d_1(U,v_{1}) ]=&\mathbb{E}_{\mathcal{Q}_{U}^{w_1w_2}}[(d_1(U,\tilde{v}_{1})], \\
\mathbb{E}_{\mathcal{Q}_{U}^{w_2}}[d_2(U,v_{2}) ]=&\mathbb{E}_{\mathcal{Q}_{U}^{w_2}}[(d_2(U,\tilde{v}_{2})]. 
\end{align} 
By hypothesis, $(v_{1},v_{2})$ and $(\tilde{v}_{1},\tilde{v}_{2})$ are not completely equivalent. Therefore, we must have either $(v_{1},v_{2}) \nsim_e (\tilde{v}_{1},\tilde{v}_{2})$, or $v_1 \nsim_1 \tilde{v}_1$, or $v_2 \nsim_2 \tilde{v}_2$, which imply that  $\mathcal{Q}_{W_1W_2|U} \in \mathbb{Q}_e \cup \mathbb{Q}_{1} \cup \mathbb{Q}_{2}$. This contradicts the hypothesis $\mathcal{Q}_{W_1W_2|U} \in \mathbb{Q}^c$. Thus,   
$\tilde{A}(\mathcal{Q}_{W_1W_2|U},w_{1,t},w_{2,t})
$ is a singleton.

%Hence, either $(v_{1},v_{2}) \nsim_e (\tilde{v}_{1},\tilde{v}_{2})$ and $\mathcal{Q}_{W_1W_2|U} \in \mathbb{Q}_e$, or $v_1 \nsim_1 \tilde{v}_1$ and $\mathcal{Q}_{W_1W_2|U} \in \mathbb{Q}_1$, or $v_2 \nsim_2 \tilde{v}_2$ and $\mathcal{Q}_{W_1W_2|U} \in \mathbb{Q}_2$.
%But this is absurd since $\mathcal{Q}_{W_1W_2|U} \in \mathbb{Q}^c= \Delta(\mathcal{W}_1\times\mathcal{W}_2)^{|\mathcal{U}|} \backslash \Big(\mathbb{Q}_e \cup \mathbb{Q}_{1} \cup \mathbb{Q}_{2} \Big)$.
%Thus, if $\mathcal{Q}_{W_1W_2|U} \in \mathbb{Q}^c$,  $\tilde{A}(\mathcal{Q}_{W_1W_2|U},w_{1,t},w_{2,t})
%$ is a singleton. 
\endproof{}
\lemma \label{lemmmaa}The set $\mathbb{Q}^c$ is open and dense in $\Delta(\mathcal{U})$. %$\Delta(\mathcal{W}_1\times\mathcal{W}_2)^{|\mathcal{U}|}$. 

%and for each distribution $\mathcal{Q}_{W_1W_2|U}$ in $\mathbb{Q}^c$,  the set $\tilde{A}(\mathcal{Q}_{W_1W_2|U},w_{1,t},w_{2,t})$ is a singleton. 
%((((((( the set $E=\big\{\mathcal{Q}_{W_1W_1|U}, \forall w_1,w_2 \ \mathcal{Q}_{U|W_1W_2} \in \mathbb{Q}^c, \ \mathcal{Q}_{U|W_2} \in \mathbb{Q}^c \textrm{ is dense in } \Delta(\mathcal{U}) \big\}$ is dense in $\Delta(\mathcal{W}_1\times\mathcal{W}_2)^{|\mathcal{U}|}$. )))))))\\ %and for each distribution in $\mathbb{Q}^c$,  the set $\tilde{A}(\mathcal{Q}_{W_1W_2|U},w_{1,t},w_{2,t})$ is a singleton. \\
\proof{of lemma \ref{lemmmaa}} 
For each  $v_{i} \nsim_{i} \tilde{v}_{i}, \ i \in \{1,2\} $, and pairs $(v_{1},v_{2}) \nsim_{e} (\tilde{v}_{1},\tilde{v}_{2})$ each  set 
\begin{align}
    \mathbb{Q}(v_{i},\tilde{v}_{i})=\Bigg\{&\mathcal{Q}_{U}%\mathcal{Q}_{V_1|W_1W_2}\mathcal{Q}_{V_2|W_2} 
    \in \Delta(\mathcal{U}%\times\mathcal{V}_1\times\mathcal{V}_2
    ), \mathbb{E}_{\mathcal{Q}_{U}}[d_i(U,v_{i}) ]=\mathbb{E}_{\mathcal{Q}_{U}}[d_i(U,\tilde{v}_{i})] \Bigg\},  \ \ i \in \{1,2\},\\
    \mathbb{Q}(v_{1},v_{2},\tilde{v}_{1},\tilde{v}_{2})=\Bigg\{&\mathcal{Q}_{U}%\mathcal{Q}_{V_1|W_1W_2}\mathcal{Q}_{V_2|W_2} 
    \in \Delta(\mathcal{U}%\times\mathcal{V}_1\times\mathcal{V}_2
    ), \mathbb{E}_{\mathcal{Q}_{U}}[(d_e(U,v_{1},v_{2})] =\mathbb{E}_{\mathcal{Q}_{U}}[(d_e(U,\tilde{v}_{1},\tilde{v}_{2})] \Bigg\},
\end{align}
is a closed hyperplane of dimension dim$(\mathbb{Q}(v_{i},\tilde{v}_{i})=$ dim$\mathbb{Q}(v_{1},v_{2},\tilde{v}_{1},\tilde{v}_{2})=|\mathcal{U}|-2.$
Consider the set $B=\Big(\bigcup_{v_1,\tilde{v}_1}\mathbb{Q}(v_{1},\tilde{v}_{1})\Big) \cup \Big(\bigcup_{v_2,\tilde{v}_2}\mathbb{Q}(v_{2},\tilde{v}_{2})\Big) \cup \Big(\bigcup_{v_1,v_2,\tilde{v}_1,\tilde{v}_2}\mathbb{Q}(v_{1},v_2,\tilde{v}_{1},\tilde{v}_2)\Big)$.
The set $B$ is a finite union of hyperplanes of dimension at most $|\mathcal{U}|-2$. Hence, $\Delta(\mathcal{U})\backslash B$ is dense in $\Delta(\mathcal{U})$. If we consider the set $A_0 := ([0,1] \times \Delta(\mathcal{U}))^{|\mathcal{W}_1\times\mathcal{W}_2|}$, %Hence, $\Sigma(\mathcal{Q}_{W_1W_2|U})$ given by Definition \ref{splittingdef} is a subset of $A_0$. 
it follows that the set $A := ([0,1] \times (\Delta(\mathcal{U})\backslash B))^{|\mathcal{W}_1\times\mathcal{W}_2|}$ %A_0 \backslash B$ 
is a dense subset of $A_0$. \\
Let $\Psi: A_0 %\mathbb{Q}^c %\times \mathcal{W}_1 \times \mathcal{W}_2
\mapsto \Delta(\mathcal{W}_1\times \mathcal{W}_2)^{|\mathcal{U}|}$
a continuous and onto function such that $\Psi((\lambda_{w_1w_2},\mathcal{Q}^{w_1w_2}_U)_{w_1,w_2}) =\frac{\lambda_{w_1w_2}\mathcal{Q}_U^{w_1w_2}}{\mathcal{P}_U}. %= \mathcal{Q}_{W_1W_2|U}
$
%In fact, for any distribution $\mathcal{Q}_{W_1W_2|U}$, there exists a splitting $s=(\lambda_{w_1w_2},\mathcal{Q}^{w_1w_2}_U) \in A_0$ and a sequence  $s_n=(\lambda_{w_1w_2},\mathcal{Q}^{w_1w_2}_U)_n \in A$ converging to $s \in A_0$ by the density of $A$ in $A_0$. Therefore 
Let $\Psi(A)$ denote the image of $A$ under $\Psi$. We show that $\Psi(A)$ is dense in  $\Delta(\mathcal{W}_1\times \mathcal{W}_2)^{|\mathcal{U}|}$. 
Take a distribution $\mathcal{Q}_{W_1W_2|U} \in \Delta(\mathcal{W}_1 \times \mathcal{W}_2)^{|\mathcal{U}|}$. 
Since $A$ is dense in $A_0$, for each distribution $\mathcal{Q}^{w_1w_2}_{U} \in \Delta(U)$, there exists a sequence $(\mathcal{Q}^{w_1w_2}_{U})_{(w_1,w_2)} \in \Delta(U\backslash B)$ that converges to it under the KL-divergence. By the continuity of $\Psi$, the image $\Psi((\mathcal{Q}^{w_1w_2}_{U})_{(w_1,w_2))} \in \Psi(A)$ of $(\mathcal{Q}^{w_1w_2}_{U})_{(w_1,w_2)}$, is a sequence that converges to $\Psi(\mathcal{Q}_{W_1W_2|U}) \in \Delta(\mathcal{W}_1\times \mathcal{W}_2)^{|\mathcal{U}|}$. Therefore, $\Psi(A)$ is dense in $\Delta(\mathcal{W}_1\times \mathcal{W}_2)^{|\mathcal{U}|}$. 

It follows that $\mathbb{Q}^c \cap \mathcal{Q}_0(R_1,R_2)=\mathbb{Q}_{01}\cap\mathcal{Q}_0(R_1,R_2)$ is open and dense in $\mathbb{Q}_0(R_1,R_2)\cap \Delta(\mathcal{W}_1\times\mathcal{W}_2)^{|\mathcal{U}|} = \mathbb{Q}_0(R_1,R_2)$ as desired. \endproof{} \\

\lemma \label{lemmaa7} If $(R_1,R_2)\in\mathbb{R}^2_{+}$, the set $\mathbb{Q}_{02}(R_1,R_2)$ is nonempty, open and dense in $\mathbb{Q}_0(R_1,R_2)$.  \\
\proof{of lemma \ref{lemmaa7}} 
For $(R_1,R_2) \in ]0,+\infty[$, the sets $\mathbb{Q}_{0}(R_1,R_2)$ and $\mathbb{Q}_{02}(R_1,R_2)$ are non-empty. Moreover, the set $\mathbb{Q}_{02}(R_1,R_2)$ is open being defined with strict inequalities on the continuous mutual information function, which means its complement $\mathbb{Q}^c_{02}(R_1,R_2)= \{\mathcal{Q}_{W_1W_2|U} \ s.t. \  R_2 \leq I(U;W_2) \ , \ 
    R_1+R_2 \leq I(U;W_1,W_2)\}$ is closed.
Take a feasible distribution $\mathcal{Q}_{W_1W_2|U} \in \mathbb{Q}_0(R_1,R_2)$ such that  $ R_2 \geq I(U;W_2)$ and $R_1+R_2 \geq I(U;W_1,W_2)$. 
Consider the distributions $\mathcal{P}_{W_1W_2}(w_1,w_2)=\sum_{u}\mathcal{P}(u)\mathcal{Q}(w_1,w_2|u) \forall (w_1,w_2) \in \mathcal{W}_1\times\mathcal{W}_2$ and $\mathcal{P}_{W_2}(w_2)=\sum_{u}\mathcal{P}(u)\mathcal{Q}(w_2|u) \forall w_2 \in \mathcal{W}_2$. For $\varepsilon>0$, consider the perturbed distributions $\mathcal{Q}^{\varepsilon}_{W_1W_2|U} = (1-\varepsilon)\mathcal{Q}_{W_1W_2|U} +\varepsilon\mathcal{P}_{W_1W_2}$, and $\mathcal{Q}^{\varepsilon}_{W_2|U} = (1-\varepsilon)\mathcal{Q}_{W_2|U} +\varepsilon\mathcal{P}_{W_2}$. As $\varepsilon \longrightarrow 0$, we have $\mathcal{Q}^{\varepsilon}_{W_1W_2|U}\longrightarrow \mathcal{Q}_{W_1W_2|U}$, and $\mathcal{Q}^{\varepsilon}_{W_2|U}\longrightarrow \mathcal{Q}_{W_2|U}$. Therefore,
%\begin{align}
%\begin{cases}
%I_{{\mathbb{Q}^{\varepsilon}}_{W_1W_2|U}}(U;W_1W_2) &\leq (1-\varepsilon)\cdot I_{{\mathbb{Q}}_{W_1W_2|U}}(U;W_1W_2) + \varepsilon \cdot I_{\mathcal{P}_{W_1W_2}}(U;W_1W_2), \\
%I_{{\mathbb{Q}^{\varepsilon}}_{W_2|U}}(U;W_2) &\leq (1-\varepsilon)\cdot I_{{\mathbb{Q}}_{W_2|U}}(U;W_2) + \varepsilon \cdot I_{\mathcal{P}_{W_2}}(U;W_2).
%\end{cases}
%\end{align}
\begin{align}
%\begin{cases}
I_{{\mathcal{Q}^{\varepsilon}}_{W_1W_2|U}}(U;W_1W_2) &\leq (1-\varepsilon)\cdot I_{{\mathcal{Q}}_{W_1W_2|U}}(U;W_1W_2) + \varepsilon \cdot I_{\mathcal{P}_{W_1W_2}}(U;W_1W_2) \label{casesz01}\\ 
&< I_{{\mathcal{Q}}_{W_1W_2|U}}(U;W_1W_2) \label{casesz02}\\ 
&\leq R_1+R_2. \label{casesz03} \end{align}
Similarly,
\begin{align}
I_{{\mathcal{Q}^{\varepsilon}}_{W_2|U}}(U;W_2) &\leq (1-\varepsilon)\cdot I_{{\mathcal{Q}}_{W_2|U}}(U;W_2) + \varepsilon \cdot I_{\mathcal{P}_{W_2}}(U;W_2) \label{casesz1}\\ 
&< I_{{\mathcal{Q}}_{W_2|U}}(U;W_2) \label{casesz2}\\ 
&\leq R_2.
%\end{cases} 
\label{casesz3}
\end{align}
Equations \eqref{casesz01} and \eqref{casesz1} follow from the convexity of the mutual information with respect to $\mathcal{Q}_{W_1W_2|U}$ and $\mathcal{Q}_{W_2|U}$ respectively for fixed $\mathcal{P}_U$. 
The strict inequalities in \eqref{casesz02} and \eqref{casesz2} follow since $I_{\mathcal{P}_{W_1W_2}}(U;W_1W_2) =0$ and $I_{\mathcal{P}_{W_2}}(U;W_2) =0$ and $\varepsilon >0$, and last inequalities in equations \eqref{casesz03} and \eqref{casesz3} come from the definition of the set $\mathbb{Q}_0(R_1,R_2)$. This means that both distributions $\mathcal{Q}^{\varepsilon}_{W_1W_2|U}$ and $\mathcal{Q}^{\varepsilon}_{W_2|U}$ belong to the set $\mathbb{Q}_0(R_1,R_2)$. Hence, the set $\mathbb{Q}_{02}(R_1,R_2)$ is dense in $\mathbb{Q}_0(R_1,R_2)$ which concludes the proof of lemma \ref{lemmaa7}. \endproof{} \\
Since $\mathbb{Q}_{01}$ and $\mathbb{Q}_{02}(R_1,R_2)$ are open and dense, $\mathbb{Q}_{01} \cap \mathbb{Q}_{02}(R_1,R_2)$ is also open and dense in $\mathbb{Q}_{0}(R_1,R_2)$. We now show that 
$D_e^{\star}(R_1,R_2) = \tilde{D}_e(R_1,R_2)$. In fact, the function $$\mathcal{Q}_{W_1W_2|U} \mapsto \underset{\mathcal{Q}_{V_1|W_1W_2}\in \mathbb{Q}_{1}(\mathcal{Q}_{W_1W_2|U}) \atop \mathcal{Q}_{V_2|W_2}\in \mathbb{Q}_{2}(\mathcal{Q}_{W_2|U})}{\max}\mathbb{E}_{\mathcal{P}_{U} \mathcal{Q}_{W_1W_2|U}\atop \mathcal{Q}_{V_1|W_1W_2}\mathcal{Q}_{V_2|W_2}}\Bigg[
    d_e(U,V_1,V_2)\Bigg]$$ is upper semi-continuous (u.s.c) and the infimum of an u.s.c function over a dense set is the infimum over the full set. \\ 
    In this part of the proof, the assumption that each decoder chooses the optimal action that is worst for the encoder plays an important role. In fact, if decoders were to choose the pair of actions that is best for the encoder's distortion, our function becomes $$\mathcal{Q}_{W_1W_2|U} \mapsto \underset{\mathcal{Q}_{V_1|W_1W_2}\in \mathbb{Q}_{1}(\mathcal{Q}_{W_1W_2|U}) \atop \mathcal{Q}_{V_2|W_2}\in \mathbb{Q}_{2}(\mathcal{Q}_{W_2|U})}{\min}\mathbb{E}_{\mathcal{P}_{U} \mathcal{Q}_{W_1W_2|U}\atop \mathcal{Q}_{V_1|W_1W_2}\mathcal{Q}_{V_2|W_2}}\Bigg[
    d_e(U,V_1,V_2)\Bigg]$$ which is lower semi continuous. The infimum of a lower semi continuous (l.s.c) function over a dense subset  $\mathbb{Q}_{01} \cap \mathbb{Q}_{02}(R_1,R_2)$ might be greater than the infimum over the whole set $\mathbb{Q}_{0}(R_1,R_2)$. However, this is only the case whenever the information is constrained, and the information constraint is binding at optimum and all posterior beliefs of each decoder induce actions between which decoder is indifferent. This case in nongeneric in our class of persuasion games: if we slightly perturb the distortion functions of our decoders, we perturb the points of indifference for each decoder, and thus the points of discontinuity in our l.s.c. or u.s.c. This ends the proof of lemma \ref{lemmmaa5}. \endproof{}

\subsubsection{Controlling Distortions }
\begin{comment}
\definition \label{posbel18} Given $\mathcal{Q}_{W_1W_2|U},$ posterior beliefs of decoders $\mathcal{D}_1$ and $\mathcal{D}_2$ denoted by $\mathcal{Q}_{U|W_1W_2} \in \Delta(\mathcal{U}^|\mathcal{W}_1\times\mathcal{W}_2|)$ and $\mathcal{Q}_{U|W_2} \in \Delta(\mathcal{U}^|\mathcal{W}_2|)$ respectively, are given as follows:
\begin{align}
   \mathcal{Q}_{w_1w_2}=& \mathcal{Q}_{U|W_1W_2}(u|w_1,w_2)=\frac{\mathcal{Q}_{UW_1W_2}(u,w_1,w_2)}{\mathcal{Q}_{W_1W_2}(w_1,w_2)}=\frac{\mathcal{P}_U(u)\mathcal{Q}_{W_1W_2|U}(w_1,w_2|u)}{\sum_{u'}\mathcal{P}_{U'}(u')\mathcal{Q}_{W_1W_2|U'}(w_1,w_2|u')} \ \forall u,w_1,w_2, \\
    \mathcal{Q}_{w_2}=& \mathcal{Q}_{U|W_2}(u|w_2)=\frac{\mathcal{Q}_{UW_2}(u,w_2)}{\mathcal{Q}_{W_2}(w_2)}=\frac{\mathcal{P}_U(u)\mathcal{Q}_{W_2|U}(w_2|u)}{\sum_{u'}\mathcal{P}_{U'}(u')\mathcal{Q}_{W_2|U'}(w_2|u')} \ \hspace{1cm} \forall u,w_2.
\end{align}
\end{comment}
\definition \label{belpp} Fix $(R_1,R_2) \in \mathbb{R}^2_{+}$, $n\in\mathbb{N}$, a triplet $(\sigma,\tau_1,\tau_2) \in \mathcal{S}(n,R_1,R_2)$, $t\in\{1,...,n\}$ and a message pair $(m_1,m_2) \in \{1,2,..2^{\lfloor nR_1 \rfloor}\}\times\{1,2,..2^{\lfloor nR_2\rfloor}\}$. We denote by $\mathcal{P}^{m_1,m_2}_{t} \in \Delta(\mathcal{U})$, and $\mathcal{P}^{m_2}_{t} \in \Delta(\mathcal{U})$ the beliefs on $u_t$ conditional to $m_1,m_2$ and $m_2$ respectively defined as follows: 
\begin{align}
    \mathcal{P}^{m_1,m_2}_{t}(u) =& \mathcal{P}(U_t=u|M_1=m_1,M_2=m_2) \ \forall u\in \mathcal{U}, \\
    \mathcal{P}^{m_2}_{t}(u) =& \mathcal{P}(U_t=u|M_2=m_2) \  \forall u\in \mathcal{U}.
\end{align}

\begin{comment}
\definition Given $\mathcal{Q}_{W_1W_2|U}$, the conditional distributions $\mathcal{P}_{U|W_1W_2}$ and $P_{U|W_2}$ are computed as follows
\begin{align}
    \mathcal{P}_{U|W_1W_2}=&\frac{\mathcal{P}_{UW_1W_2}}{\mathcal{P}_{W_1W_2}}=\frac{\mathcal{P}_U\mathcal{Q}_{W_1W_2|U}}{\sum_{U'}\mathcal{P}_{U'}\mathcal{Q}_{W_1W_2|U'}}, \\
     \mathcal{P}_{U|W_2}=&\frac{\mathcal{P}_{UW_2}}{\mathcal{P}_{W_2}}=\frac{\mathcal{P}_U\mathcal{Q}_{W_2|U}}{\sum_{U'}\mathcal{P}_{U'}\mathcal{Q}_{W_2|U'}}. 
\end{align}
\end{comment}

\definition Let $\Tilde{A}_t(\mathcal{P}^{\sigma}_{M_1M_2|U^n},m_1,m_2)$ the set of action pairs $(v_1,v_2)$ that are optimal for the decoders but worst for the encoder for respective beliefs $\mathcal{P}^{m_1,m_2}_{t}(\cdot|m_1m_2)$ and $\mathcal{P}^{m_2}_{t}(\cdot|m_2)$:
\begin{align}
    \Tilde{A}_t(\mathcal{P}^{\sigma}_{M_1M_2|U^n},m_1,m_2)= \underset{v_1 \in V_1^{\star}(\mathcal{P}^{m_1,m_2}_{t}(\cdot|m_1m_2)) \atop v_2 \in V_2^{\star}(\mathcal{P}^{m_2}_{t}(\cdot|m_2))}{\argmax}\Big\{\sum_{u}\mathcal{P}^{m_1,m_2}_{t}(u)d_e(u,v_1,v_2)\Big\}
\end{align}

 \definition Fix $(R_1,R_2) \in \mathbb{R}^2_{+}$, $n\in\mathbb{N}$, a triplet $(\sigma,\tau_1,\tau_2) \in \mathcal{S}(n,R_1,R_2)$, $t \in \{1,...,n\}$ and a message pair $(m_1,m_2) \in \{1,2,..2^{\lfloor nR_1 \rfloor}\}\times\{1,2,..2^{\lfloor nR_2\rfloor}\}$. For a sequence $(m_1,m_2,w_1^n,w_2^n)$, and $\alpha>0$, we define the set of indices for which posterior belief $\mathcal{P}^{m_1,m_2}_{t}$ given in Definition \ref{belpp}, and theoretical belief $\mathcal{Q}_{U|W_{1}W_{2}}$ given in Definition \ref{posbel18}, are close as follows:
 \begin{align}
     T_{\alpha}(m_1,m_2,w_1^n,w_2^n)=\bigg\{ t \in \{1,...,n\} \ : \ \max\big(D(\mathcal{P}^{m_1,m_2}_{t} ||\mathcal{Q}_U^{w_{1,t},w_{2,t}}%\mathcal{Q}_{U|W_{1}W_{2}}(\cdot | W_{1t},W_{2t})
     ) %\leq \frac{\alpha^2}{2\ln2}
     , D(\mathcal{P}^{m_2}_{t} ||\mathcal{Q}_U^{w_{2,t}}%\mathcal{Q}_{U|W_{2}}(\cdot | W_{2t})
     )\big)\leq \frac{\alpha^2}{2\ln2}\bigg\}.
 \end{align}
%Similarly, for a sequence $(m_2,w_2^n)$, and $\alpha>0$, 
 %\begin{align}
  %   T_{\alpha}(m_2,w_2^n)=\bigg\{ t \in \{1,...,n\} \ : \ D(\mathcal{P}^{m_2}_{t} ||\mathcal{Q}_{U|W_{2}}(\cdot | W_{2t})) \leq \frac{\alpha^2}{2\ln2}\bigg\}.
 %\end{align}. 
 \definition For a sequence $(w_1^n,w_2^n)$ and a pair $(w_1,w_2) \in \mathcal{W}_1 \times \mathcal{W}_2$, the empirical frequency of $(w_1,w_2)$ in $(w_1^n,w_2^n)$ is given by:
 \begin{align}
     \mathrm{freq}_{w_1,w_2}(w_1^n,w_2^n)=\frac{1}{n}\bigg|t=\{1,...,n\} \ : \ (w_{1,t},w_{2,t})=(w_1,w_2)\bigg|.
 \end{align}
 For $\alpha,\gamma,\delta >0$, let 
 \begin{align}
  B_{\alpha,\gamma,\delta}=\bigg\{ (m_1,m_2,w_1^n,w_2^n) : \  &\frac{|T_{\alpha}(m_1,m_2,w_1^n,w_2^n)|}{n} \geq 1-\gamma,\  \nonumber \\ &\sum_{(w_1,w_2)}|\mathcal{P}(w_1,w_2) -  \mathrm{freq}_{w_1,w_2}(w_1^n,w_2^n)| \leq \delta \bigg\}. \end{align}
where $\forall w_1,w_2 \ \mathcal{P}(w_1,w_2)=\sum_{u}\mathcal{P}(u)\mathcal{Q}(w_1,w_2|u)$.

\begin{comment}
 Similarly, for a sequence $(u^n,w_2^n)$ and  $(w_2) \in \mathcal{W}_2$, the empirical frequency of $(w_2)$ in $w_2^n$ is given by:
 \begin{align}
     \mathrm{freq}_{w_2}(u^n,w_2^n)=\frac{1}{n}|t=1,...,n \ : \ w_{2t}=w_2|.
 \end{align}
 For $\alpha,\gamma,\delta >0$, let 
 \begin{align}
  B_{\alpha,\gamma,\delta}=\{ (u^n,w_2^n) : \frac{|T_{\alpha}(m_2,w_2^n)|}{n} \geq 1-\gamma \And \sum_{w_2}|\lambda_{w_2} -  \mathrm{freq}_{w_2}(u^n,w_2^n)| \leq \delta \}. \end{align}

\end{comment}

\definition 
Let $n \in \mathbb{N}^{\star}$, and $(R_1,R_2) \in \mathbb{R}^2_{+}$. Given a strategy $\sigma$ of the encoder, the induced expected distortion is given as follows:
\begin{align}
    D^{\sigma}_e(R_1,R_2)=\max_{\tau_1\in BR_1(\sigma) \atop \tau_2\in BR_2(\sigma)} \mathbb{E}_{\mathcal{P}^{\sigma,\tau_1,\tau_2}}[d^n_e(\sigma,\tau_1,\tau_2)].
\end{align}
 \definition Given an encoding strategy $\sigma$ and a pair of messages $(m_1,m_2)$, the encoder's expected distortion is given as follows:
 \begin{align}
     D^{t}_{e}(\mathcal{P}^{\sigma}_{M_1M_2|U^n},m_1,m_2)=\underset{v_1 \in V_1^{\star}(\mathcal{P}^{m_1,m_2}_{t}) \atop v_2 \in V_2^{\star}(\mathcal{P}^{m_2}_{t})}{\max}\sum_{u}\mathcal{P}^{m_1m_2}_t(u)d_e(u,v_{1},v_{2}).
 \end{align}
 
 \definition We denote by $D^w_{e}(\mathcal{Q}_{W_1W_2|U},w_1,w_2)$ the encoder's distortion defined as a function of the beliefs of the decoders as follows:
\begin{align}
    D^w_{e}(\mathcal{Q}_{W_1W_2|U},w_1,w_2)=\underset{(v_1,v_2) \in V_1^{\star}({Q}^{w_1,w_2}_U) \times \atop V_2^{\star}({Q}^{w_2}_U)}{\max}\sum_{u}\mathcal{Q}_U^{w_1,w_2}(u)d_e(u,v_1,v_2).
\end{align}
 \lemma \label{lem9} Given $(n,R_1,R_2)$, for all $\sigma_{M_1M_2W_1^nW_2^n|U^n}$ we have, \begin{align}
     | D^{\sigma}_e(R_1,R_2)- \tilde{D}_e(R_1,R_2)| \leq (\alpha+2\gamma+\delta)||D|| + (1-\mathcal{P}^{\sigma}(B_{\alpha,\gamma,\delta}))||D||.
 \end{align}
 where $||D|| =\max_{u,v_1,v_2}|d_e(u,v_1,v_2)|$ is the greatest absolute value of the encoder's distortion. \\
 \proof{(of lemma \ref{lem9})} The strategy $\sigma$ induces a joint probability distribution $\mathcal{P}^{\sigma}$ over $\mathcal{U}^n \times\{1,2,..2^{\lfloor nR_1 \rfloor}\}\times\{1,2,..2^{\lfloor nR_2\rfloor}\} \times \mathcal{W}^n_1\times\mathcal{W}^n_2$ such that for all $u^n,m_1,m_2,w_1^n,w_2^n$,
 \begin{align}
     \mathcal{P}^{\sigma}(u^n,m_1,m_2,w_1^n,w_2^n)=  \prod_{t=1}^n\mathcal{P}(u^n)\mathcal{P}^{\sigma}(m_1,m_2,w_1^n,w_2^n|u^n).
 \end{align}
     Let $\mathcal{P}^{\sigma}_{W_1^nW_2^n}$ the marginal distribution of $\mathcal{P}^{\sigma}$ over $\mathcal{W}_1^n\times\mathcal{W}_2^n$. For each $t$, and for  each pair $(w_{1,t},w_{2,t})$, decoder $\mathcal{D}_1$ chooses an optimal action $v_{1,t} \in V^{\star}_1(\mathcal{Q}^{w_{1,t},w_{2,t}}_U)$, and decoder $\mathcal{D}_2$ chooses an optimal action $v_{2,t} \in V^{\star}_2(\mathcal{Q}^{w_{2,t}}_U)$. 
     If the action pair $(v_{1,t},v_{2,t})$ belongs to $\tilde{A}(\mathcal{Q}_{W_1W_2|U},w_{1,t},w_{2,t})$, then it's the worst pair for the encoder. It follows that \begin{align}
         D^{\sigma}_e(R_1,R_2)=\sum_{m_1,m_2}\mathcal{P}^{\sigma}(m_1,m_2)\frac{1}{n}\sum_{t=1}^n D^t_{e}(\mathcal{P}^{\sigma}_{M_1M_2|U^n},m_1,m_2).
     \end{align}
 Since the set of belief pairs such that $ |\tilde{A}(\mathcal{Q}_{W_1W_2|U},w_{1,t},w_{2,t})
 |=1$ is open, there exists $\alpha_0>0$, such that for all $m_1,m_2$ and for all $t$, we have:   
 \begin{align}
    \max\bigg\{ D(\mathcal{P}^{m_1,m_2}_{t} || \mathcal{Q}^{w_{1,t},w_{2,t}}_U),D(\mathcal{P}^{m_2}_{t} || \mathcal{Q}^{w_{2,t}}_U)\bigg\} \leq \alpha_0 \implies  \Tilde{A}(\mathcal{Q}_{W_1W_2|U},w_{1,t},w_{2,t})=\Tilde{A}_t(\mathcal{P}^{\sigma}_{M_1M_2|U^n},m_1,m_2). \label{atildez}
 \end{align}
  Whenever $\tilde{A}(\mathcal{Q}_{W_1W_2|U},w_{1,t},w_{2,t})
 $ is a singleton, denote $(v_1(\mathcal{Q}^{w_{1,t},w_{2,t}}_U),v_2(\mathcal{Q}^{w_{2,t}}_U))$ the unique (worst) optimal action pair for the encoder's distortion.  
 From now on, we assume that $\alpha \in (0,\alpha_0)$.
Equation \eqref{atildez} implies that 
for each $t \in  T_{\alpha}(m_1,m_2,w_1^n,w_2^n)$, the action pair chosen by the decoders for problem t is $(v_1(\mathcal{Q}^{w_{1,t},w_{2,t}}_U),v_2(\mathcal{Q}^{w_{2,t}}_U))$. This means that the set $T_{\alpha}(m_1,m_2,w_1^n,w_2^n)$ is the set of indices $t$ %such that the action pair $(v_1(\mathcal{Q}^{w_{1,t},w_{2,t}}_U),v_2(\mathcal{Q}^{w_{2,t}}_U))$ that is played corresponds to the pair $(w_{1,t},w_{2,t})$. Hence, this is the set of indices 
for which the information transmission is successful. 
\lemma \label{lem10} Let $(R_1,R_2) \in \mathbb{R}^2_{+}$. For each $(m_1,m_2,w_1^n,w_2^n) \in B_{\alpha,\gamma,\delta}$, 
\begin{align}
    %\bigg|  D^{\sigma}_e(R_1,R_2)- \tilde{D}_e(R_1,R_2)\bigg| 
    \bigg|  \frac{1}{n}\sum^n_{t=1} D^{t}_{e}(\mathcal{P}^{\sigma}_{M_1M_2|U^n},m_1,m_2)- \tilde{D}_e(R_1,R_2)\bigg| \leq (\alpha+2\gamma+\delta)||D||
\end{align}
 where $||D|| =\max_{u,v_1,v_2}|d_e(u,v_1,v_2)|$ is the greatest absolute value of the encoder's distortion. \\
\proof{(of lemma \ref{lem10})} We have :
\begin{align}
    %\bigg|  D^{\sigma}_e(R_1,R_2)- \tilde{D}_e(R_1,R_2)\bigg| \leq&
    \bigg|  \frac{1}{n}\sum^n_{t=1} D^{t}_{e}(\mathcal{P}^{\sigma}_{M_1M_2|U^n},m_1,m_2)- \tilde{D}_e(R_1,R_2)\bigg| 
    \leq& \bigg|  \frac{1}{n}\sum^n_{t\in T_{\alpha}(m_1,m_2,w_1^n,w_2^n)} D^{t}_{e}(\mathcal{P}^{\sigma}_{M_1M_2|U^n},m_1,m_2)- \tilde{D}_e(R_1,R_2)\bigg| \nonumber \\ +& \bigg|  \frac{1}{n}\sum^n_{t\notin T_{\alpha}(m_1,m_2,w_1^n,w_2^n)} D^{t}_{e}(\mathcal{P}^{\sigma}_{M_1M_2|U^n},m_1,m_2)- \tilde{D}_e(R_1,R_2)\bigg| \\
    \leq& \bigg|  \frac{1}{n}\sum^n_{t\in T_{\alpha}(m_1,m_2,w_1^n,w_2^n)} D^{t}_{e}(\mathcal{P}^{\sigma}_{M_1M_2|U^n},m_1,m_2)- \tilde{D}_e(R_1,R_2)\bigg| + \gamma||D||.
\end{align}
Then,
\begin{align}
    &\bigg|  \frac{1}{n}\sum^n_{t\in T_{\alpha}(m_1,m_2,w_1^n,w_2^n)} D^{t}_{e}(\mathcal{P}^{\sigma}_{M_1M_2|U^n},m_1,m_2)- \tilde{D}_e(R_1,R_2)|\bigg| \\ \leq &\bigg|  \frac{1}{n}\sum^n_{t\in T_{\alpha}(m_1,m_2,w_1^n,w_2^n)} \big[D^{t}_{e}(\mathcal{P}^{\sigma}_{M_1M_2|U^n},m_1,m_2)- D^{w}_{e}(\mathcal{Q}_{W_1W_2|U},w_{1,t},w_{2,t})\big]\bigg| \nonumber \\ + &\bigg|  \frac{1}{n}\sum^n_{t\in T_{\alpha}(m_1,m_2,w_1^n,w_2^n)}\big[D^w_{e}(\mathcal{Q}_{W_1W_2|U},w_{1,t},w_{2,t})- \tilde{D}_e(R_1,R_2)\big]\bigg| 
\end{align}
Since $\alpha\leq \alpha_0$, for each $t\in T_{\alpha}(m_1,m_2,w_1^n,w_2^n)$, $\Tilde{A}(\mathcal{Q}_{W_1W_2|U},w_{1,t},w_{2,t})=\Tilde{A}_t(\mathcal{P}_{M_1M_2|U},m_1,m_2)$, therefore, %for $t \in T_{\alpha}(m_1,m_2,w_1^n,w_2^n)$
\begin{align}
    \bigg| D^{t}_{e}(\mathcal{P}^{\sigma}_{M_1M_2|U^n},m_1,m_2)- D^w_{e}(\mathcal{Q}_{W_1W_2|U},w_{1,t},w_{2,t})\bigg| \leq& \sum_u \bigg| \mathcal{P}_t^{m_1,m_2}(u) - \mathcal{Q}^{w_1w_2}_U(u)\bigg|\cdot ||D|| \\ \leq& \big|\big|\mathcal{P}_t^{m_1,m_2} - \mathcal{Q}^{w_1w_2}_U\big|\big| \cdot \big|\big|D\big|\big| \leq \alpha \big|\big|D\big|\big|,
\end{align} where the second inequality comes from Pinsker's inequality: $||p-q||\leq\sqrt{2\ln2D(p||q)}$ and the definition of $\mathcal{T}_{\alpha}(m_1,m_2,w_1^n,w_2^n)$. It follows:
  \begin{align}
    &\bigg|  \frac{1}{n}\sum^n_{t\in T_{\alpha}(m_1,m_2,w_1^n,w_2^n)} D^t_{e}(\mathcal{P}^{\sigma}_{M_1M_2|U^n},m_1,m_2)- \tilde{D}_e(R_1,R_2)|\bigg| \\ \leq  &\alpha ||D|| + \bigg|  \frac{1}{n}\sum^n_{t\in T_{\alpha}(m_1,m_2,w_1^n,w_2^n)}D^w_{e}(\mathcal{Q}_{W_1W_2|U},w_{1,t},w_{2,t})- \tilde{D}_e(R_1,R_2)\bigg|. 
\end{align}
Now from $\frac{|T_{\alpha}(m_1,m_2,w_1^n,w_2^n)|}{n} \geq 1-\gamma$, we have:
\begin{align}
    \bigg|  \frac{1}{n}\sum_{t\in T_{\alpha}(m_1,m_2,w_1^n,w_2^n)}D^w_{e}(\mathcal{Q}_{W_1W_2|U},w_{1,t},w_{2,t})- \tilde{D}_e(R_1,R_2)\bigg| \leq \bigg|  \frac{1}{n}\sum^n_{t=1}D^w_{e}(\mathcal{Q}_{W_1W_2|U},w_{1,t},w_{2,t})- \tilde{D}_e(R_1,R_2)\bigg| +\gamma||D||.
\end{align}
 We have $\forall w_1,w_2, \ \mathcal{P}(w_1,w_2)=\sum_{u}\mathcal{P}(u)\mathcal{Q}(w_1,w_2|u)$, then
\begin{align}
    \bigg|  \frac{1}{n}\sum^n_{t=1}D^w_{e}(\mathcal{Q}_{W_1W_2|U},w_{1,t},w_{2,t})- \tilde{D}_e(R_1,R_2)\bigg| %&\leq \bigg| \sum_{w_1,w_2}(\textrm{freq}_{w_1,w_2}(w_1^n,w_2^n)-\mathcal{P}(w_1w_2))\cdot ||D|| \bigg| \\ 
    &\leq \sum_{w_1,w_2}\bigg|(\textrm{freq}_{w_1,w_2}(w_1^n,w_2^n)-\mathcal{P}(w_1w_2))\bigg|\cdot  ||D||  \\ &\leq \delta||D||.
\end{align}

\endproof{}

\subsubsection{Special Cases}
We begin by investigating some particular cases where at least one of the rates equals zero. Then we will prove our result for the general case and we control the beliefs of our decoders. \\
\remark If both decoders have the same distrotion functions, then they can be considered as one, and the persuasion game will be reduced to the point-to-point case as in \cite{jet}.

Let $Q^n_{\sigma,\tau_1,\tau_2}$ denote a distribution over $\mathcal{U} \times \mathcal{V}_1 \times \mathcal{V}_2$ that averages the probability of occurrence of $(u,v_1,v_2)$ in a triplet of sequences $(U^n,V_1^n,V_2^n)$ with respect to coding pair $\sigma,\tau_1,\tau_2$ defined as follows: 
\begin{align}
        Q^n_{\sigma,\tau_1,\tau_2}(u,v_1,v_2)=\frac{1}{n} \sum_{t=1}^n \mathcal{P}\{(U_t,V_{1,t},V_{2,t})=(u,v_1,v_2)\}, \ \ \ \forall (u,v_1,v_2).
    \end{align}
      Since the source is memory-less we have
${Q^n_{\sigma,\tau_1,\tau_2}}_{U}=\mathcal{P}_U.$ 
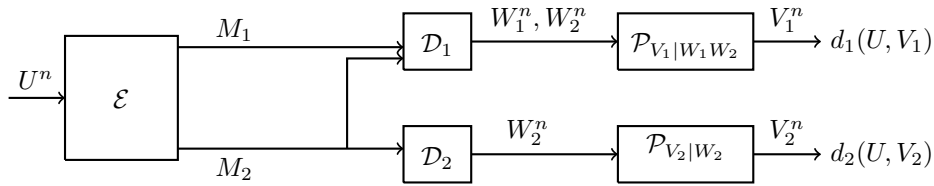
\begin{figure}[h!]
    \centering
    \begin{tikzpicture}[xscale=3,yscale=1.5]
\draw[thick,->](0.5,0.25)--(0.75,0.25);
\draw[thick,-](0.75,-0.3)--(1.25,-0.3)--(1.25,0.8)--(0.75,0.8)--(0.75,-0.3);
\node[below, black] at (1,0.4) {$\mathcal{E}$};
\draw[thick, ->](1.25,0.7)--(2.25,0.7);
\draw[thick, ->](2,-0.2)--(2,0.6)--(2.25,0.6);
\draw[thick, ->](1.25,-0.2)--(2.25,-0.2);
\draw[thick,-](2.25,0.5)--(2.55,0.5)--(2.55,1)--(2.25,1)--(2.25,0.5);
\node[below, black] at (2.4,0.9) { $\mathcal{D}_{1}$};
\draw[thick,-](2.25,-0.5)--(2.55,-0.5)--(2.55,0)--(2.25,0)--(2.25,-0.5);
\node[below, black] at (2.4,-0.1) { $\mathcal{D}_{2}$};
%\draw[thick,->](2.55,0.75)--(2.75,0.75);
%\draw[thick,->](2.55,-0.25)--(2.75,-0.25);
%\node[right, black] at (2.75,-0.25) {$W^n_2$};
%\node[right, black] at (2.75,0.75) {$W^n_1,W^n_2$};
\node[above, black] at (0.62,0.25) {$U^n$};
%\node[above, black] at (1.5,0.25) {$X^n$};
%\node[right, black] at (2.4,0.4) {$Z_1^n$};
%\node[right, black] at (2.4,-1.3) {$Z_2^n$};
\node[above, black] at (1.5,0.67){$M_1$};
\node[below, black] at (1.5,-0.2){$M_2$};
  \draw[thick,-](3.2,-0.5)--(3.8,-0.5)--(3.8,0)--(3.2,0)--(3.2,-0.5);
\draw[thick,-](3.2,0.5)--(3.8,0.5)--(3.8,1)--(3.2,1)--(3.2,0.5);
\draw[thick,->](2.55,0.75)--(3.2,0.75);
%\draw[thick,->](3.25,0.75)--(3.8,0.75);
%\draw[thick,->](3.25,-0.25)--(3.8,-0.25);
\node[above, black] at (3.95,0.75) {$V_1^n$};
\node[above, black] at (3.95,-0.25) {$V_2^n$};
\draw[thick,->](3.8,0.75)--(4.1,0.75);
\draw[thick,->](3.8,-0.25)--(4.1,-0.25);
\node[above, black] at (3.5,0.5) {$\mathcal{P}_{V_1|W_1W_2}$};
\node[above, black] at (3.5,-0.4) {$\mathcal{P}_{V_2|W_2}$};
\draw[thick,->](2.55,-0.25)--(3.2,-0.25);
\node[above, black] at (2.85,0.75) {$W_1^n,W_2^n$};
\node[above, black] at (2.8,-0.25) {$W_2^n$};
\node[right, black] at (4.1,-0.25) {$d_2(U,V_2)$};
\node[right, black] at (4.1,0.75) {$d_1(U,V_1)$};
    \end{tikzpicture}
    \caption{Achievability of Successive Refinement Source Coding Setup.}
    \label{fig:successive}
\end{figure}

\subsubsection{$R_1=R_2=0$} 
Assume the prior belief $\mathcal{P}_U$ is fixed and shared by both decoders at the beginning of the game. 
Since $R_1=R_2=0$, message sets are singletons, and the only possible encoding strategy $\sigma_0$ is given by $\sigma_0:\mathcal{U}^n\longrightarrow\{1\}\times\{1\}$. The codebook consists of two sequences $W_2^n(1)$ and $W_1^n(1,1)$ only. Let $(v_{1,0},v_{2,0})$ denote the action pair that corresponds to the decoders' prior $\mathcal{P}_U$ and maximizes the encoder's long run distortion. This action pair will be played at each repetition of the game, i.e $v^n_{1,0}=(v_{1,0},v_{1,0},...,v_{1,0})$ and $v^n_{2,0}=(v_{2,0},v_{2,0},...,v_{2,0})$. The corresponding pair of decoding strategies is denoted by $(\tau_{1,0},\tau_{2,0}) \in BR_1(\sigma_0)\times BR_2(\sigma_0)$.
The set of target distributions is given by $\mathbb{Q}_0(0,0)=\{\mathcal{Q}_{W_1W_2|U} \ \mathrm{ s.t. } \ I(U;W_2)=I(U;W_1,W_2)=0\}$. This means that random variables $W_1$ and $W_2$ are independent from $U$, i.e $\mathcal{Q}_{W_1W_2|U}=\mathcal{Q}_{W_1W_2}$ and no information can be communicated to the decoders. 
Therefore, the following result holds:
\lemma  \label{zerorates}  $D_e^{\star}(0,0)=D_e^n(0,0) \ \ \forall n \in \mathbb{N}^{\star}$. \\
\proof%{(of lemma \ref{zerorates})}
\begin{align}
D_e^n(0,0) =& \underset{\sigma_0}{\inf}\underset{\tau_{1,0} \in BR_{d_1}(\sigma_0), \atop \tau_{2,0} \in BR_{d_2}(\sigma_0) }{\max}d_e^n(\sigma,\tau_{1},\tau_{2}) \\
=& d_e^n(\sigma_0,\tau_{1,0},\tau_{2,0})\\
        =& \sum_{u^n,v_{1,0}^n,v_{2,0}^n}\mathcal{P}^{\sigma,\tau_1,\tau_2}(u^n,v_{1,0}^n,v_{2,0}^n)\frac{1}{n}\sum_{t=1}^n d_e(u_t,v_{1,0},v_{2,0}) \\
        =& \frac{1}{n}\sum_{t=1}^n\sum_{u_t,v_{1,0},v_{2,0}}\mathcal{P}^{\sigma_0,\tau_{1,0},\tau_{2,0}}(u_t,v_{1,0},v_{2,0}) d_e(u_t,v_{1,0},v_{2,0}) \\
        =& \sum_{u,v_{1,0},v_{2,0}}\sum_{t=1}^n\frac{1}{n}\mathcal{P}^{\sigma_0,\tau_{1,0},\tau_{2,0}}(U_t=u,V_{1,t}=v_{1,0},V_{2,t}=v_{2,0}) d_e(u,v_{1,0},v_{2,0}) \\
        =&\sum_{u,v_{1,0},v_{2,0}}{Q^n_{\sigma_0,\tau_{1,0},\tau_{2,0}}}_{UV_1V_2}(u,v_{1,0},v_{2,0})d_e(u,v_{1,0},v_{2,0}) \\
        =&  \mathbb{E}_{\mathcal{P}_U%{Q^n_{\sigma_0,\tau_{1,0},\tau_{2,0}}}_{U}
        }[d_e(U,v_{1,0},v_{2,0})] \\
       %=& \inf_{\sigma_0}\max_{\tau_{1,0},\tau_{2,0}}  \mathbb{E}_{{Q^n_{\sigma,\tau_1,\tau_2}}_{UV_1V_2}}[d_e(U,V_1,V_2)] \label{labb1w} \\
        %=&\mathbb{E}_{\mathcal{P}_{U}}\Big[ d_e(U,v_{1,0},v_{2,0})\Big] \\
       =&\underset{\mathcal{Q}_{W_1W_2}\in\mathbb{Q}_0(0,0)}{\inf}\underset{\mathcal{Q}_{V_1|W_1W_2}\in \mathbb{Q}_{1}(\mathcal{Q}_{W_1W_2}) \atop \mathcal{Q}_{V_2|W_2}\in \mathbb{Q}_{2}(\mathcal{Q}_{W_2})}{\max}\mathbb{E}_{\mathcal{P}_{U} \mathcal{Q}_{W_1W_2}\atop \mathcal{Q}_{V_1|W_1W_2}\mathcal{Q}_{V_2|W_2}}\Big[
    d_e(U,V_1,V_2)\Big]\\
        =&D^{\star}_e(0,0).
      % \geq& \inf_{\sigma}\max_{\tau_1,\tau_2} \underset{\mathcal{Q}_{W_2|U}\in\mathbb{Q}_0}{\inf}\underset{\mathcal{Q}_{V_1|W_2}\in \mathbb{Q}_{1}(\mathcal{Q}_{W_2|U}) \atop \mathcal{Q}_{V_2|W_2}\in \mathbb{Q}_{2}(\mathcal{Q}_{W_2|U})}{\max}\mathbb{E}_{\mathcal{P}_U \times \atop \mathcal{Q}_{V_1V_2|W_2}}\Bigg[
%    d_e(U,V_1,V_2)\Bigg] \\ %=&\min_{\mathcal{P}(V_1,V_2|U) | I(U;V_1,V_2)\leq C}\mathbb{E}_{\mathcal{Q}(u,w)\times \atop \mathcal{Q}(v_1,v_2|w)}[d_e(U,V_1,V_2)]  \\ 
    \end{align}
 \endproof{}
 
 \subsubsection{$R_1>0 \And R_2=0$}

 Random variables $W_2$ and $U$ are independent for $R_1>0$ and $R_2=0$, i.e. $\mathcal{Q}_{W_1W_2|U}=\mathcal{Q}_{W_2}\mathcal{Q}_{W_1|W_2U}$. This means that decoder $\mathcal{D}_2$ will repeatedly chose the action $v_{2,0} \in V^{\star}(\mathcal{P}_U)$ that corresponds to its prior belief $\mathcal{P}_U$ and maximizes the encoder's distortion. The persuasion game is thus reduced to the point-to-point problem with one decoder $\mathcal{D}_1$. %The single-letter distortions of the encoder and the decoder $d_e, d_1$ are defined as follows
 %\begin{align}
  %    &d_e:\mathcal{U}\times\mathcal{V}_1 \longrightarrow \mathbb{R},\\
   %   &d_1:\mathcal{U}\times\mathcal{V}_1 \longrightarrow \mathbb{R}, 
 %\end{align} 
 In that case, the coding problem to be solved by the encoder is as follows:
\begin{align}
 D_e^n=\underset{\sigma}{\inf}\underset{\tau_{1} \in BR_{d_1}(\sigma)}{\max}d_e^n(\sigma,\tau_1)
\end{align}
where $BR_{d_1}(\sigma)=\argmin_{\tau_1}d_1^n(\sigma,\tau_1)$.  This problem has been investigated in point to point and JET.
The set of target distributions is given as follows $\mathbb{Q}_0(R_1,0)=\{\mathcal{Q}_{W_1|U} \ \mathrm{ s.t. } \ R_1 \geq I(U;W_1)\}$.
Given $\mathcal{Q}_{UW_1}$, the set of single-letter best responses of decoder $\mathcal{D}_1$ is given by
$$\mathbb{Q}_1(\mathcal{Q}_{W_1|U})=\argmin_{\mathcal{Q}_{V_1|W_1}}\mathbb{E}\big[d_1(U,V_1)\big].$$
%\definition \label{aq}For a belief $q\in\Delta(\mathcal{U})$, we define the set $\tilde{A}(q)$ of action pairs $(v_1,v_2)$ that are optimal for the decoders and worst for the encoder. This set is given by:
%\begin{align}
 %   \tilde{A}(q)=\underset{(v_1,v_2) \in V_1^{\star}(q) \times \atop V_2^{\star}(q)}{\argmax}\Big\{\sum_{u}q(u)d_e(u,v_1,v_2)\Big\} \subset \mathcal{V}_1\times\mathcal{V}_2.
%\end{align}

\definition \label{psieq}We denote by $\Psi_e(q)$ the encoder's expected distortion for belief $q \in \Delta(\mathcal{U})$ i.e,
$$\Psi_e(q)=\underset{(v_1,v_2) \in V_1^{\star}(q) \times \atop V_2^{\star}(q)}{\max}\mathbb{E}_q\bigg[d_e(U,v_1,v_2)\bigg],$$ 

%\definition \label{psieq1}We denote by $\Psi_e(q)$ the encoder's expected distortion for belief $q \in \Delta(\mathcal{U})$ i.e,
%$$\Psi_e(q)=\underset{(v_1,v_2) \in V_1^{\star}(q) \times \atop V_2^{\star}(\mathcal{P}_U)}{\max}\mathbb{E}_q\bigg[d_e(U,v_1,v_2)\bigg],$$ %where $(v_1^{\star},v_2^{\star}) \in \tilde{A}(q)$. Note that in this case $v^{\star}_2 \in V_2^{\star}(\mathcal{P}_U)$.
\definition A family of pairs  $(\lambda_{w_1},\mathcal{Q}^{w_1}_U)_{w_1} \in ([0,1] \times \Delta(\mathcal{U}))^{|\mathcal{W}_1|}$ is a splitting for decoder $\mathcal{D}_1$ if
\begin{align}
\sum_{w_1}\lambda_{w_1}&=1, \label{splittingr1r20} \\
    \sum_{w_1}\lambda_{w_1}\mathcal{Q}^{w_1}_U &=\mathcal{P}_U.  \label{splittingr1r202} 
\end{align}
For every $w_1 \in \mathcal{W}_1$, the weight $\lambda_{w_1}$ is given by $\lambda_{w_1}=\mathcal{P}(w_1)=\sum_u\mathcal{P}(u)\mathcal{Q}(w_1|u)$.  
The encoder's optimal distortion can be reformulated as a convexification of its expected distortion as follows: 
\begin{align}
 %\color{red}
 D^{\star}_e(R_1,0)=&\underset{\mathcal{Q}_{W_1|U}\in\mathbb{Q}_0(R_1,0)}{\inf}\underset{\mathcal{Q}_{V_1|W_1}\in \mathbb{Q}_{1}(\mathcal{Q}_{W_1W_2|U})}{\max}\mathbb{E}_{\mathcal{P}_{U} \mathcal{Q}_{W_1|U}\atop \mathcal{Q}_{V_1|W_1}}\Bigg[
   d_e(U,V_1)\Bigg]. \label{optdistor1r20} \nonumber \\
   =& \underset{(\lambda_{w_1},\mathcal{Q}_U^{w_1})_{w_1}}{\inf}\sum_{w_1}\lambda_{w_1}\Psi_e(\mathcal{Q}_U^{w_1}).
\end{align}
\remark The auxiliary random variable $W_1 \in \mathcal{W}_1$ satisfies $|\mathcal{W}_1|=\min\{|\mathcal{U}|+1,|\mathcal{V}_1|\}$. 
\theorem{Encoder Commitment, theorem 3.1 in \cite{jet}}
\label{mainresultr1r20}
\begin{align} 
\forall \ \varepsilon>0, \ \exists \hat{n} \in \mathbb{N}, \  \forall n \geq \hat{n}, \ \ D^n_e(R_1,0) &\leq D_e^{\star}(R_1,0) + \varepsilon. \hspace{1cm}\mathrm{(achievability) }  \\
 \hspace{3.13cm} \forall n \in \mathbb{N}, \ \  D_e^n(R_1,0) &\geq D_e^{\star}(R_1,0). \hspace{1.58cm}\mathrm{(converse) } 
\end{align}

\subsubsection{$R_1=0 \And R_2>0$ } 

If $R_1=0$ and $R_2>0$, random variables $W_1$ and $U$ are independent. Hence, the encoder can transmit information to decoder $\mathcal{D}_{2}$, to which decoder $\mathcal{D}_{1}$ has access. Therefore, both decoders will have the same posterior belief $\mathcal{Q}^{w_2}_U \in \Delta(\mathcal{U}) \  \forall w_2 \in \mathcal{W}_2$. Actions $V^n_1$ and $V^n_2$ are drawn according to $\mathcal{Q}_{V^n_1|W^n_2}$ and $\mathcal{Q}_{V^n_2|W^n_2}$ respectively. If the objectives of both decoders are aligned, then the persuasion game can be reduced to one decoder as in \cite{jet}. 
%The coding strategies in that case are reduced to the following:
%\begin{align}
 %   \sigma:& U^n \longrightarrow \Delta(\{1,2,..2^{\lfloor nR_2\rfloor}\}),\\
%\tau_1:& \{1,2,..2^{\lfloor nR_2\rfloor}\} %\times \mathcal{Z}_1^n 
%\longrightarrow \Delta(\mathcal{V}_1^n),\\
%\tau_2:& \{1,2,..2^{\lfloor nR_2\rfloor}\} %\times \mathcal{Z}_2^n 
%\longrightarrow \Delta(\mathcal{V}_2^n).
 %\end{align} 
Otherwise, the persuasion game is an extension to the problem investigated in \cite{jet} with two decoders that observe the same information from the encoder and hence have the same belief $q \in \Delta(\mathcal{U})$. 

%The coding problem to be solved in that case is reduced to the following:
%\begin{align}
 %  \underset{\sigma}{\inf}\underset{\tau_{1} \in BR_{d_1}(\sigma), \atop \tau_{2} \in BR_{d_2}(\sigma) }{\max}d_e^n(\sigma,\tau_1,\tau_{2})
%\end{align}
In that case, the set of target distributions is defined as follows: $\mathbb{Q}_0(0,R_2)=\{\mathcal{Q}_{W_2|U} \ \mathrm{ s.t. } \ R_2 \geq I(U;W_2)\}$.

%In that case the coding problem to be solved by the encoder is as follows:
%\begin{align}
 %D_e^n=\underset{\sigma}{\inf}\underset{\tau_{1} \in BR_{d_1}(\sigma)}{\max}d_e^n(\sigma,\tau_1,\tau_2)
%\end{align}
%where $BR_{d_1}(\sigma)=\argmin_{\tau_1}d_1^n(\sigma,\tau_1)$. 

We consider an auxiliary random variable $W_2 \in \mathcal{W}_2$ with $|\mathcal{W}_2|=\min\{|\mathcal{U}|+1,|\mathcal{V}_1|,|\mathcal{V}_2| \}$. 
The set of target distributions is given as follows $\mathbb{Q}_0(R_1,0)=\{\mathcal{Q}_{W_2|U} \ \mathrm{ s.t. } \ R_2 \geq I(U;W_2)\}$.
Given $\mathcal{Q}_{UW_2}$, the set of single-letter best responses of decoders $\mathcal{D}_1$ and $\mathcal{D}_2$ are given by
$$\mathbb{Q}_1(\mathcal{Q}_{W_2|U})=\argmin_{\mathcal{Q}_{V_1|W_2}}\mathbb{E}\big[d_1(U,V_1)\big].$$
$$\mathbb{Q}_2(\mathcal{Q}_{W_2|U})=\argmin_{\mathcal{Q}_{V_2|W_2}}\mathbb{E}\big[d_2(U,V_2)\big].$$

\definition A family of pairs  $(\lambda_{w_2},\mathcal{Q}^{w_2}_U)_{w_2} \in ([0,1] \times \Delta(\mathcal{U}))^{|\mathcal{W}_2|}$ is a splitting if
\begin{align}
\sum_{w_2}\lambda_{w_2}&=1, \label{splittingr2r101} \\
    \sum_{w_2}\lambda_{w_2}\mathcal{Q}^{w_2}_U &=\mathcal{P}_U.  \label{splittingr2r102} 
\end{align}
For $w_2 \in \mathcal{W}_2$, the weights  $\lambda_{w_2}$ are given by $\mathcal{P}(w_2)=\sum_u\mathcal{P}(u)\mathcal{Q}(w_2|u)$.  
The encoder's optimal distortion can be reformulated as a convexification of its expected distortion as follows:
\begin{align}
 %\color{red}
 D^{\star}_e(0,R_2)=&\underset{\mathcal{Q}_{W_2|U}\in\mathbb{Q}_0(0,R_2)}{\inf}\underset{\mathcal{Q}_{V_1|W_2}\in \mathbb{Q}_{1}(\mathcal{Q}_{W_2|U}) \atop \mathcal{Q}_{V_2|W_2}\in \mathbb{Q}_{2}(\mathcal{Q}_{W_2|U})}{\max}\mathbb{E}_{\mathcal{P}_{U} \mathcal{Q}_{W_2|U}\atop \mathcal{Q}_{V_1|W_2}  \mathcal{Q}_{V_2|W_2}}\Bigg[
    d_e(U,V_1,V_2)\Bigg] \label{optdistor2r10} \nonumber \\ 
    =& \underset{(\lambda_{w_2},\mathcal{Q}_U^{w_2})_{w_2}}{\inf}\sum_{w_2}\lambda_{w_2}\Psi_e(\mathcal{Q}_U^{w_2}).
\end{align} where $\Psi_e(q)$ is given by Definition \ref{psieq}.

\theorem{theorem 3.1 in \cite{jet}}

\label{mainresultr2r10}
\begin{align} 
\forall \ \varepsilon>0, \ \exists \hat{n} \in \mathbb{N}, \  \forall n \geq \hat{n}, \ \ D^n_e(0,R_2) &\leq D_e^{\star}(0,R_2) + \varepsilon. \hspace{1cm}\mathrm{(achievability) } \\
 \hspace{3.13cm} \forall n \in \mathbb{N}, \ \  D_e^n(0,R_2) &\geq D_e^{\star}(0,R_2). \hspace{1.58cm}\mathrm{(converse) } 
\end{align}

\subsubsection{$(R_1,R_2) \in ]0,+\infty[^2$}
%Rate splitting: Divide the index $M_2$ into two independent indices $M_{02},M_{22}$ with respective rates $R_{02}$ and $R_{22}$, such that $R_2=R_{02}+R_{22}$. \\
Fix a conditional probability distribution $\mathcal{Q}_{W_1,W_2|U}$. There exists $\eta>0$ such that \begin{align}
    R_2 =& I(U;W_2) + \eta, \label{eqwwe}\\
    R_1 =& I(U;W_1|W_2) + \eta. \label{eqwwee}
\end{align} 
%$R_{22}=I(U;W_2)+\eta$ and $R_{02}=R_2-I(U;W_2)$ and $R_1+R_2 > I(U;W_1,W_2) + \eta$. \\

Codebook generation: Randomly and independently generate $2^{\lfloor nR_{2}\rfloor}$ sequences $w_2^n(m_{2})$ for $m_{2} \in [1:2^{\lfloor{nR_{2}}\rfloor}]$, according to the i.i.d distribution $\mathcal{P}_{W_2^n}=\Pi_{t=1}^n\mathcal{P}_{W_2}(w_{2t})$. For each $(m_{1},m_{2}) \in [1:2^{\lfloor nR_{1} \rfloor}] \times [1:2^{\lfloor nR_{2} \rfloor}]$ generate a sequence 
$w_1^n(m_{1},m_{2})$ randomly and conditionally independently according to the i.i.d conditional distribution  $\mathcal{P}_{W_1^n|M_1W_2^n}=\Pi_{t=1}^n\mathcal{P}_{W_1|M_1W_2}(w_{1t}|m_1,w_{2t}(m_{2}))$. 
%Sequences $w_2^n(m_{2})$ are drawn randomly and independently following the i.i.d. $\Pi_{t=1}^n\mathcal{P}_{W_2}(w_{2t})$, and $2^{\lfloor nR_1 \rfloor}$ sequences $w^n_1(m_{1},m_{2})$ following $\Pi_{t=1}^n\mathcal{P}(W_{1t}|W_{2t}(m_{2}))$. 
\\  
Coding algorithm: Encoder $\mathcal{E}$ observes $u^n$ and looks in the codebook for a pair $(m_{1},m_{2})$ such that \\ $(u^n, w_1^n(m_{1},m_{2}),w^n_2(m_{2})) \in \mathcal{T}_{\delta}^n(\mathcal{P}_U\mathcal{P}_{W_1W_2|U})$. If such a jointly typical tuple doesn't exist, the source encoder sets $(m_{1},m_{2})$ to $(1,1)$.
Then, it sends  $m_{2}$ to decoder $\mathcal{D}_{2}$, and $(m_{1},m_{2})$ to decoder $\mathcal{D}_{1}$.
Decoder $\mathcal{D}_{2}$ declares $w_2^n(m_{2})$ and 
decoder $\mathcal{D}_{1}$ declares $w_1^n(m_{1},m_{2})$.\\

 %However, another model would not necessarily requires that decoder $\mathcal{D}_{1s}$ recovers $W_2^n$ using the same decoding function as decoder $\mathcal{D}_{2s}$. 
Consider two auxiliary decoding functions $g_1$ and $g_2$ given as follows:
\begin{align}
        g_1:& \{1,2,..2^{\lfloor nR_1 \rfloor}\}\times\{1,2,..2^{\lfloor nR_2 \rfloor}\} \longrightarrow \mathcal{W}_1^n
        \\ g_2:&\{1,2,..2^{\lfloor nR_2 \rfloor}\} \longrightarrow \mathcal{W}_2^n
    \end{align}
  We assume that decoder $\mathcal{D}_{1}$ applies both decoding functions $g_1$ and $g_2$ in order to declare $(W_1^n,W_2^n)$ i.e for $M_1,M_2 \in \{1,2,..2^{\lfloor nR_1 \rfloor}\}\times\{1,2,..2^{\lfloor nR_2 \rfloor}\}$,  $\tau_1(M_1,M_2)=(g_1(M_1,M_2),g_2(M_2))$. However, for $M_2 \in \{1,2,..2^{\lfloor nR_2 \rfloor}\}$, decoder $\mathcal{D}_{2}$'s strategy $\tau_2(M_2)=g_2(M_2) \in \mathcal{W}_2^n$.

 %For $j=0,1$ the rate splitting for decoder 1 is $R_1=R_{01}+R_{11}$ where $R_{01}$ is the rate at which message indexed $M_{01}$ is transmitted and the rate $R_{11}$ is the rate of transmission of message indexed $M_{11}$ where the message index $M_1$ of decoder 1 is divided into 2 independent indices $M_{01}$ and $M_{11}$.\\
%Codebook generation: Fix a pmf $\mathcal{P}(v_0,v_1|u)$ such that $\mathbb{E}(d_j(U,v_j)) \leq \frac{D_j}{1-\delta}; \ j=0,1.$ Sequences $v_j^n(m_{jj})$ are drawn following the i.i.d. $\Pi_{t=1}^n\mathcal{P}_{V_j}(V_{jt})$ and $v_0(m_{11},m_{0})$ following $\Pi_{t=1}^n\mathcal{P}(V_{0t}|V_{1t}(m_{1t}))$. Encoder observes $x^n$; finds $(m_{11},m_0)$ such that $(x^n,\hat{x}_{1}^n(m_{11}),\hat{x}_{0}^n(m_{11},m_{0}))\in \mathcal{T}_{\delta}^{(n)}$. If such a jointly typical tuple doesn't exist, encoder sets $(m_{11},m_0)$ to $(1,1)$ and sends $(m_{11},m_{01})$ to decoder 1 and $(m_{11},m_0)$ to decoder 0. \\
%Decoding: $(m_{11},m_{01})$ observed by decoder 1 who declares $\hat{x}_1^n(m_{11})$. Similarly, decoder 0 declares 
%$\hat{x}_0^n(m_{11},m_0)$.\\
Error Event: The error event is given by $\mathcal{E}=\{(U^n,W^n_2(m_{2}),W_1^n(m_{2},m_1) \notin \mathcal{T}_{\delta}^n \}$. We have by the union of events bound $\mathcal{P}(\mathcal{E})\leq \mathcal{P}(\mathcal{E}_{1}) + \mathcal{P}(\mathcal{E}_{2}(M_2) \cap \mathcal{E}_{1}^c)$, where \begin{align}
    \mathcal{E}_{1} =& \{(U^n,W_2^n(m_{2})) \notin \mathcal{T}_{\delta}^n \  \forall m_{2}\} \\
     \mathcal{E}_{2}(m_2) =& \{(U^n,W^n_2(m_{2}),W_1^n(m_{2},m_1)) \notin \mathcal{T}_{\delta}^n \  \forall m_{1}\} 
\end{align}
%The first term goes to 0 by the covering lemma if $R_{11}>I(U,V_1^n) +\eta$ and the second term goes to zero by Fourier-Motzkin procedure? if conditions of the theorem are satisfied. Thus $\mathcal{P}(\delta)$ and hence $(R_1,R_2)$ achievable by typical average lemma
%$\limsup\mathbb{E}(d_j(U^n,U^n_j)\leq D_j,\ \  j=1,0$ 
 By the covering lemma,  $\mathcal{P}(\mathcal{E}_{1})$ tends to zero as $n \longrightarrow \infty$ if \begin{align}
        R_{2} >& \  I(U;W_2) + \eta. \label{ee1}
    \end{align}
     $\mathcal{P}(\mathcal{E}_{1}^c \cap \mathcal{E}_{2}(M_2))$ goes to zero by the covering lemma if \begin{align}
           R_1 + R_{2} > \ I(U;W_1,W_2)  +\eta. \label{ee2}
       \end{align} 
       % Adding $R_{22}$ to equation $\eqref{ee2}$, we get \begin{align}
        %    R_1 + R_{02}+R_{22} > \ I(U;W_1|W_2)+I(U;W_2)  +\eta.
       %\end{align} 
       %Therefore equations $\eqref{ee1}$ and $\eqref{ee2}$ are equivalent to the following equations
       %\begin{align}
        %   R_2 >& \ I(U;W_2) +\eta, \label{esucc1}\\
         %   R_1+R_{2} >& \  I(U;W_1,W_2) + \eta. \label{esucc2}
    %   \end{align}
       
       The expected probability of error over the codebook being small means that for all $\varepsilon_2 >0$, for all $\eta>0$, there exists $\Bar{\delta}>0$, for all $\delta\leq\Bar{\delta}$, there exists $\Bar{n} \in \mathbb{N}$ such that for all $n \geq \Bar{n}$ we have: \begin{align}
           &\mathbb{E}\big[\mathcal{P}(\mathcal{E}_{1})\big] \leq \varepsilon_2, \label{eqarr} \\
           &\mathbb{E}\big[\mathcal{P}(\mathcal{E}_{2}(m_2))\big] \leq \varepsilon_2.  \label{eqarrr}
       \end{align} 

\subsubsection{Control of Beliefs} %remains to show that $\mathcal{P}_{U^n|Y_1^n} \sim \mathcal{P}_{U|W1W_2}$ and $\mathcal{P}_{u^n|Y_2^n} \sim \mathcal{P}_{U|W_2}$. 
\begin{comment}

Let $\mathcal{P}^{m_1,m_2}_{t} \in \Delta(\mathcal{U})$ denote the belief on $u_t$ conditional to $m_1,m_2$ and $\mathcal{P}^{m_2}_{t} \in \Delta(\mathcal{U})$ denote the belief on $u_t$ conditional to $m_2$ defined by 
\begin{align}
    \mathcal{P}^{m_1,m_2}_{t}(u) =& \mathcal{P}^{f}(U_t=u|M_1=m_1,M_2=m_2) \ \forall u\in \mathcal{U}, \\
    \mathcal{P}^{m_2}_{t}(u) =& \mathcal{P}^{f}(U_t=u|M_2=m_2) \  \forall u\in \mathcal{U}.
\end{align}

Given $\mathcal{Q}_{W_1W_2|U}$, the conditional distributions $\mathcal{P}_{U|W_1W_2}$ and $P_{U|W_2}$ are computed as follows
\begin{align}
    \mathcal{P}_{U|W_1W_2}=&\frac{\mathcal{P}_{UW_1W_2}}{\mathcal{P}_{W_1W_2}}=\frac{\mathcal{P}_U\mathcal{Q}_{W_1W_2|U}}{\sum_{U'}\mathcal{P}_{U'}\mathcal{Q}_{W_1W_2|U'}}, \\
     \mathcal{P}_{U|W_2}=&\frac{\mathcal{P}_{UW_2}}{\mathcal{P}_{W_2}}=\frac{\mathcal{P}_U\mathcal{Q}_{W_2|U}}{\sum_{U'}\mathcal{P}_{U'}\mathcal{Q}_{W_2|U'}}. 
\end{align}
%The Kullback-Leiber (KL) Divergence for distributions $P$ and $Q$ on $\Delta(U)$ with respective supports $\mathrm{supp} P$ and $\mathrm{supp} Q$ is given by
%\begin{align}
  %  D(P || Q) = \begin{cases} \sum_{u \in \mathrm{supp} P}P(u)\log_2\frac{P(u)}{Q(u)}&, \text{ if $\mathrm{supp} Q \ \subset \  \mathrm{supp} P$}. \\
   %+ \infty &, \text{ otherwise.} 
    %  \end{cases}
%\end{align} 
\end{comment}
We introduce the indicator of error events $E^1_{\delta} \in \{0,1\}$ for decoder $\mathcal{D}_1$, and  $E^2_{\delta} \in \{0,1\}$ for decoder $\mathcal{D}_2$ defined as follows
\begin{align}
  E^1_{\delta}=&\begin{cases}
    1, & \text{if $(u^n,w_1^n,w_2^n) \notin \mathcal{T}_{\delta}^n(\mathcal{P}_{U}\mathcal{Q}_{W_1W_2|U}) $}.\\
    0, & \text{otherwise}.
  \end{cases} \\
  E^2_{\delta}=&\begin{cases}
    1, & \text{if $(u^n,w_2^n) \notin \mathcal{T}_{\delta}^n(\mathcal{P}_{U}\mathcal{Q}_{W_2|U}) $}.\\
    0, & \text{otherwise}.
  \end{cases}
\end{align}
\remark Note that $E_{\delta}^1=0 \iff (u^n,w_1^n,w_2^n) \in \mathcal{T}_{\delta}^n(\mathcal{P}_{U}\mathcal{Q}_{W_1W_2|U})  \implies (u^n,w_2^n) \in \mathcal{T}_{\delta}^n(\mathcal{P}_{U}\mathcal{Q}_{W_2|U}) \iff E^2_{\delta}=0$. Conversely, $E_{\delta}^2=1 \iff (u^n,w_2^n) \notin \mathcal{T}_{\delta}^n(\mathcal{P}_{U}\mathcal{Q}_{W_2|U})  \implies    (u^n,w_1^n,w_2^n) \notin \mathcal{T}_{\delta}^n(\mathcal{P}_{U}\mathcal{Q}_{W_1W_2|U}) \iff E_{\delta}^1=1$ Moreover, $\mathcal{P}(E_{\delta}^1=0)\leq \mathcal{P}(E_{\delta}^2=0)$ and $\mathcal{P}(E_{\delta}^1=1)\geq \mathcal{P}(E_{\delta}^2=1)$
Assuming the distribution $\mathcal{P}_{U|W_1W_2}$ is fully supported, the beliefs of decoder $\mathcal{D}_{1}$ are controlled as follows
  \begin{align}
       &\mathbb{E} \Big[\frac{1}{n}\sum_{t=1}^n D(\mathcal{P}^{m_1,m_2}_{t} ||\mathcal{P}_{U|W_{1}W_{2}}(\cdot | W_{1t},W_{2t})) \Big|E^1_{\delta}=0\Big] \\
    =& \sum_{m_1,m_2,w_1^n,w_2^n}\mathcal{P}^{\sigma,\tau_1,\tau_2}(m_1,m_2,w_1^n,w_2^n\Big| E^1_{\delta}=0)\cdot \frac{1}{n}\sum_{t=1}^n D(\mathcal{P}^{m_1,m_2}_{t} ||\mathcal{P}_{U|W_{1}W_{2}}(\cdot | W_{1t},W_{2t})) \label{b2}\\ 
     =& \sum_{m_1,m_2,w_1^n,w_2^n}\mathcal{P}^{\sigma,\tau_1,\tau_2}(m_1,m_2,w_1^n,w_2^n\Big| E^1_{\delta}=0)\cdot \frac{1}{n}\sum_{t=1}^n\sum_{u}\mathcal{P}_{t}^{m_1m_2}(u)\log_2\frac{\mathcal{P}_{t}^{m_1m_2}(u)}{\mathcal{P}_{U|W_{1}W_{2}}(u| w_{1t},w_{2t})} \label{zbb2}\\
     =& \sum_{m_1,m_2,w_1^n,w_2^n}\mathcal{P}^{\sigma,\tau_1,\tau_2}(m_1,m_2,w_1^n,w_2^n\Big| E^1_{\delta}=0)\cdot \frac{1}{n}\sum_{t=1}^n\sum_{u}\mathcal{P}_{t}^{m_1m_2}(u)\log_2 \frac{1}{\mathcal{P}_{U|W_{1}W_{2}}(u| w_{1t},w_{2t})} \nonumber \\
     &\hspace{1cm} - \sum_{m_1,m_2,w_1^n,w_2^n}\mathcal{P}^{\sigma,\tau_1,\tau_2}(m_1,m_2,w_1^n,w_2^n\Big| E^1_{\delta}=0)\cdot \frac{1}{n}\sum_{t=1}^n\sum_{u}\mathcal{P}_{t}^{m_1m_2}(u)\log_2\frac{1}{\mathcal{P}_{t}^{m_1m_2}(u)} \label{zbbb2}\\
    =& \frac{1}{n}  \sum_{m_1,m_2,w_1^n,w_2^n}\mathcal{P}^{\sigma,\tau_1,\tau_2}(m_1,m_2,w_1^n,w_2^n\Big| E^1_{\delta}=0)\cdot \frac{1}{n}\sum_{t=1}^n\sum_{u}\mathcal{P}_{t}^{m_1m_2}(u)\log_2\frac{1}{\mathcal{P}_{U|W_{1}W_{2}}(u| w_{1t},w_{2t})} \nonumber \\ &\hspace{1cm} -\frac{1}{n}\sum_{t=1}^n H(U_t|M_1,M_2, E^1_{\delta}=0) \label{b3} \\
    =& \frac{1}{n}\sum_{u^n,w_1^n,w_2^n}\mathcal{P}^{\sigma,\tau_1,\tau_2}(u^n,w_1^n,w_2^n\Big| E^1_{\delta}=0)\cdot \log_2\frac{1}{\Pi_{t=1}^n\mathcal{P}_{U|W_{1}W_{2}}(u_t| w_{1t},w_{2t})} -\frac{1}{n}\sum_{t=1}^n H(U_t|M_1,M_2, E^1_{\delta}=0) \label{b4}\\
    =& \frac{1}{n}\sum_{u^n,w_1^n,w_2^n \in \mathcal{T}_{\delta}^n}\mathcal{P}^{\sigma,\tau_1,\tau_2}(u^n,w_1^n,w_2^n\Big| E^1_{\delta}=0)\cdot \log_2\frac{1}{\Pi_{t=1}^n\mathcal{P}_{U|W_{1}W_{2}}(u_t| w_{1t},w_{2t})} -\frac{1}{n}\sum_{t=1}^n H(U_t|M_1,M_2, E^1_{\delta}=0) \label{b4444}\\
      \leq& \frac{1}{n}\sum_{u^n,w_1^n,w_2^n\in \mathcal{T}_{\delta}^n}\mathcal{P}^{\sigma,\tau_1,\tau_2}(u^n,w_1^n,w_2^n\Big| E^1_{\delta}=0)\cdot n \cdot \big(H(U|W_1,W_2) + \delta \big) -\frac{1}{n}H(U^n|M_1,M_2, E^1_{\delta}=0)  \label{b5}\\
    \leq& \frac{1}{n}I(U^n;M_1,M_2\Big| E^1_{\delta}=0)-I(U;W_1,W_2)+\delta +\frac{1}{n}+\log_2|\mathcal{U}|\cdot\mathcal{P}^{\sigma,\tau_1,\tau_2}(E^1_{\delta}=1) \label{b6}\\
   \leq& \eta + \delta +\frac{1}{n}+\log_2|\mathcal{U}|\cdot\mathcal{P}^{\sigma,\tau_1,\tau_2}(E^1_{\delta}=1)\label{b7}. 
\end{align}
\begin{itemize}
    \item Equation \eqref{b2} comes from the definition of expected K-L divergence. 
     \item Equation \eqref{zbb2} comes from the definition of K-L divergence. 
     \item Equation \eqref{zbbb2} comes from splitting the logarithm.  
    \item Equation \eqref{b3} follows since: 
    \begin{align}
     &\sum_{m_1,m_2,w_1^n,w_2^n}\mathcal{P}^{f,g_1,g_2}(m_1,m_2,w_1^n,w_2^n\Big| E^1_{\delta}=0)\cdot \frac{1}{n}\sum_{t=1}^n\sum_{u}\mathcal{P}_{t}^{m_1m_2}(u)\log_2\frac{1}{\mathcal{P}_{t}^{m_1m_2}(u)} \label{bb2} \\
     =& \sum_{m_1,m_2,w_1^n,w_2^n}\mathcal{P}^{f,g_1,g_2}(m_1,m_2,w_1^n,w_2^n\Big| E^1_{\delta}=0)\cdot \frac{1}{n}\sum_{t=1}^n H(U_t|M_1=m_1,M_2=m_2) \\ 
     =&\frac{1}{n}\sum_{t=1}^n\sum_{m_1,m_2,w_1^n,w_2^n}\mathcal{P}^{f,g_1,g_2}(m_1,m_2,w_1^n,w_2^n\Big| E^1_{\delta}=0)\cdot \ H(U_t|M_1=m_1,M_2=m_2) \\ 
     =&\frac{1}{n}\sum_{t=1}^n\sum_{m_1,m_2}\mathcal{P}^{f,g_1,g_2}(m_1,m_2\Big| E^1_{\delta}=0)\cdot \ H(U_t|M_1=m_1,M_2=m_2) \\ 
     =& \frac{1}{n}\sum_{t=1}^n H(U_t|M_1,M_2,E^1_{\delta}=0).
\end{align}

    \item Equation \eqref{b4} follows since: \begin{align}
    &\sum_{m_1,m_2,w_1^n,w_2^n}\mathcal{P}^{\sigma,\tau_1,\tau_2}(m_1,m_2,w_1^n,w_2^n\Big|E^1_{\delta}=0)\cdot \frac{1}{n}\sum_{t=1}^n\sum_{u}\mathcal{P}_{t}^{m_1m_2}(u)\log_2\frac{1}{\mathcal{P}_{U|W_{1}W_{2}}(u| w_{1t},w_{2t})} \\
    =&\frac{1}{n}\sum_{t=1}^n\sum_{u_t,m_1,m_2,w_1^n,w_2^n}\mathcal{P}^{\sigma,\tau_1,\tau_2}(u_t,m_1,m_2,w_1^n,w_2^n\Big|E^1_{\delta}=0)\cdot \log_2\frac{1}{\mathcal{P}_{U|W_{1}W_{2}}(u_t| w_{1t},w_{2t})} \\
        =&\frac{1}{n}\sum_{t=1}^n\sum_{u^n,m_1,m_2,w_1^n,w_2^n}\mathcal{P}^{\sigma,\tau_1,\tau_2}(u^n,m_1,m_2,w_1^n,w_2^n\Big|E^1_{\delta}=0)\cdot \log_2\frac{1}{\mathcal{P}_{U|W_{1}W_{2}}(u_t| w_{1t},w_{2t})} \\
        =&\frac{1}{n}\sum_{u^n,m_1,m_2,w_1^n,w_2^n}\mathcal{P}^{\sigma,\tau_1,\tau_2}(u^n,m_1,m_2,w_1^n,w_2^n\Big|E^1_{\delta}=0)\cdot \log_2\frac{1}{\Pi_{t=1}^n\mathcal{P}_{U|W_{1}W_{2}}(u_t| w_{1t},w_{2t})} \\
        =&\frac{1}{n}\sum_{u^n,w_1^n,w_2^n}\mathcal{P}^{\sigma,\tau_1,\tau_2}(u^n,w_1^n,w_2^n\Big|E^1_{\delta}=0)\cdot \log_2\frac{1}{\Pi_{t=1}^n\mathcal{P}_{U|W_{1}W_{2}}(u_t| w_{1t},w_{2t})}.
 % =&\frac{1}{n}  \sum_{(u^n,m_1,m_2,w^n_1,w^n_2) \in \mathcal{T}^n_{\delta}}\mathcal{P}^{f,g_1,g_2}(u^n,m_1,m_2,w^n_1,w^n_2)\cdot \log_2\frac{1}{\mathcal{P}_{U_t|W_{1}W_{2}}}.
\end{align}

\item Equation \eqref{b4444} follows since the support of $\mathcal{P}^{\sigma,\tau_1,\tau_2}(u^n,w_1^n,w_2^n|E^1_{\delta})=\mathbb{P}\{(u^n,w_1^n,w_2^n) \in \mathcal{T}_{\delta}^n \}$ is included in $\mathcal{T}_{\delta}^n$. 
    \item Equation \eqref{b5} follows from the typical average lemma property (Property 1 pp.26 in \cite{elgamal}) given in lemma \ref{a.20jet}, and the chain rule of entropy: $H(U^n|M_1,M_2,W^n_1,W^n_2) \leq \sum_{t=1}^n H(U_t|M_1,M_2,W_1,W_2)$. 
   \item Equation \eqref{b6} comes from the conditional entropy property and the fact that $H(U^n)=nH(U)$ for an i.i.d random variable $U$ and lemma \ref{a.2222jet}.
    \item Equation \eqref{b7} follows since $I(U^n;M_1,M_2) \leq H(M_1,M_2) \leq \log_2 |J| = n \cdot (R_1+R_2) = n \cdot (I(U;W_1,W_2)+\eta)$ and lemma \ref{a.2222jet}.
%$I(U^n;W^n_1,W^n_2,M_1,M_2) \leq I(U^n;W^n_1,W^n_2)+1+n\cdot\log_2|\mathcal{U}| = 1+n\cdot \log_2|\mathcal{U}|.$ 
%\item Equation \eqref{b8} follows from the cardinality of the codebook: $I(U^n;W^n_1,W^n_2) \leq H(W^n_1,W^n_2) \leq  \log_2 |J| =n\cdot(I(U;W_1,W_2)+\eta).$
\end{itemize}

Similarly for decoder $\mathcal{D}_{2}$ we have

  \begin{align}
       &\mathbb{E} \Big[\frac{1}{n}\sum_{t=1}^n D(\mathcal{P}^{m_2}_{t} ||\Pi_{t=1}^n\mathcal{P}_{U_t|W_{2t}})\Big|E^2_{\delta}=0\Big] \\
    =& \sum_{m_2,w_2^n}\mathcal{P}^{\sigma,\tau_2}(m_2,w_2^n\Big|E^2_{\delta}=0)\cdot \frac{1}{n}\sum_{t=1}^n D(\mathcal{P}^{m_2}_{t} ||\Pi_{t=1}^n\mathcal{P}_{U_t|W_{2t}}) \label{x2b2}\\ 
    =& \frac{1}{n}  \sum_{(u^n,m_2,w^n_2) \in \mathcal{T}^n_{\delta}}\mathcal{P}^{\sigma,\tau_2}(m_2,w_2^n\Big|E^2_{\delta}=0)\cdot \log_2\frac{1}{\Pi_{t=1}^n\mathcal{P}_{U_t|W_{2t}}} -\frac{1}{n}\sum_{t=1}^n H(U_t|M_2,E^2_{\delta}=0) \label{x2b3} \\
    \leq& \frac{1}{n}  \sum_{(u^n,m_2,w^n_2) \in \mathcal{T}^n_{\delta}}\mathcal{P}^{\sigma,\tau_2}(u^n,m_2,w^n_2\Big|E^2_{\delta}=0)\cdot \log_2\frac{1}{\Pi_{t=1}^n\mathcal{P}_{U_t|W_{2t}}} -\frac{1}{n}\sum_{t=1}^n H(U_t|M_2,E^2_{\delta}=0) \label{x2b4}\\
      \leq& \frac{1}{n}  \sum_{(u^n,m_2,w^n_2) \in \mathcal{T}^n_{\delta}}\mathcal{P}^{\sigma,\tau_2}(u^n,m_2,w^n_2 \Big|E^2_{\delta}=0)\cdot n \cdot \big(H(U|W_2) + \delta \big) -\frac{1}{n}H(U^n|M_2,E^2_{\delta}=0) \label{x2b5}\\
    \leq& \frac{1}{n}I(U^n;M_2,E^2_{\delta}=0)-I(U;W_2)+\delta +\frac{1}{n}+\log_2|\mathcal{U}|\cdot\mathcal{P}^{\sigma,\tau_2}(E^2_{\delta}=1) \label{x2b6}\\
    %\leq& \frac{1}{n}I(U^n;W^n_2)-I(U;W_2)+\delta+\frac{2}{n} +2\log_2|\mathcal{U}| \label{2b7} \\
    \leq& \eta + \delta +\frac{1}{n}+\log_2|\mathcal{U}|\cdot\mathcal{P}^{\sigma,\tau_2}(E^2_{\delta}=1) \label{x2b8} \\
    \leq& \eta + \delta +\frac{1}{n}+\log_2|\mathcal{U}|\cdot\mathcal{P}^{\sigma,\tau_1,\tau_2}(E_{\delta}^1=1).
\end{align}

       % \begin{align}
       %&\mathbb{E} \Big[\frac{1}{n}\sum_{t=1}^n D(\mathcal{P}^1_{t} ||\Pi_{t=1}^n\mathcal{P}_{U_t|W_1W_2})\Big] \\
    %=& \sum_{u^n,w_1,w_2}\mathcal{P}^{f}(u^n,w_1,w_2)\cdot \frac{1}{n}\sum_{t=1}^n D(\mathcal{P}^1_{t} ||\Pi_{t=1}^n\mathcal{P}_{U_t|W1W_2}) \label{b2}\\ 
    %=& \frac{1}{n} \sum_{(u^n,w_1,w_2) \in \mathcal{T}^n_{\delta}}\mathcal{P}^{f}(u^n,w_1,w_2)\cdot \log_2\frac{1}{\Pi_{t=1}^n\mathcal{P}_{U_t|W_1W_2})} -\frac{1}{n}\sum_{t=1}^n H(U_t|M_1,M_2) \label{b3} \\
    %\leq& \frac{1}{n} \sum_{(u^n,w_1,w_2,y_1^n) \in \mathcal{T}^n_{\delta}}\mathcal{P}^{\sigma\tau_1\tau_2}(u^n,w_1,w_2,y_1^n)\cdot \log_2\frac{1}{\Pi_{t=1}^n\mathcal{P}_{U_t|W1W_2})} -\frac{1}{n}\sum_{t=1}^n H(u_t|W_1,W_2,Y_1^n) \label{b4}\\
     % \leq& \frac{1}{n} \sum_{(u^n,w_1,w_2,y_1^n) \in \mathcal{T}^n_{\delta}}\mathcal{P}^{\sigma\tau_1\tau_2}(u^n,w_1,w_2,y_1^n)\cdot n \cdot \big(H(U|W_1,W_2) + \delta \big) -\frac{1}{n}\sum_{t=1}^n H(U_t|W_1,W_2,Y_1^n) \label{b5}\\
    %\leq& \frac{1}{n}I(U^n;W_1,W_2,Y_1^n)-I(U;W_1,W_2)+\delta+\frac{1}{n} +\log_2|\mathcal{U}| \label{b6}\\
    %\leq& \frac{1}{n}I(U^n;W_1,W_2)-I(U;W_1,W_2)+\delta+\frac{2}{n} +2\log_2|\mathcal{U}| \label{b7} \\
    %\leq& \eta + \delta +\frac{1}{n} +\log_2|\mathcal{U}| \label{b8}. 
%\end{align}

If the expected probability of error is small over the codebooks, then it has to be small over at least one codebook. Therefore, equations  \eqref{eqarr} and \eqref{eqarrr} imply that: \begin{align}
    \forall \epsilon_2>0, \forall \eta>0, \exists \Bar{\delta}>0,\forall \delta \leq \Bar{\delta}, \exists \Bar{n} \in \mathbb{N}, \forall n\geq \Bar{n}, \exists b^{\star}, \ \mathrm{s.t.} \ \mathcal{P}_{b^{\star}}(E^2_{\delta}=1)\leq \varepsilon_2. \label{eqwaa} 
\end{align}     
       The strategy $\sigma$ of the encoder consists of using $b^{\star}$ in order to transmit the pair $(m_1,m_2)$ such that $(U^n,W^n_2(m_{2}),W_1^n(m_{2},m_1)$ is a jointly typical sequence. By construction, this satisfies equation \eqref{eqwaa}. 
       
\lemma \label{lemmmaa10}Let $\mathcal{Q}_{W_1W_2|U} \in \tilde{Q}_{0}(R_1,R_2)$, then $\forall \varepsilon>0$, $\forall \alpha>0,\gamma>0$, there exists $\Bar{\delta}$, $\forall \delta \leq\Bar{\delta}$, $\exists \Bar{n}$, $\forall n \geq \Bar{n}$, $\exists \sigma$, such that   $1-\mathcal{P}^{\sigma}(B_{\alpha,\gamma,\delta})\leq \varepsilon$. \\
      \proof{of lemma \ref{lemmmaa10}}
      We have:  \begin{align}
          &1-\mathcal{P}_{\sigma}(B_{\alpha,\gamma,\delta}):=\mathcal{P}_{\sigma}(B^c_{\alpha,\gamma,\delta}) \\
          &= \mathcal{P}_{\sigma}(E^2_{\delta}=1)\mathcal{P}_{\sigma}(B^c_{\alpha,\gamma,\delta}|E^2_{\delta}=1) +  \mathcal{P}_{\sigma}(E^1_{\delta}=0)\mathcal{P}_{\sigma}(B^c_{\alpha,\gamma,\delta}|E^1_{\delta}=0) \\
          &\leq \mathcal{P}_{\sigma}(E^2_{\delta}=1) + \mathcal{P}_{\sigma}(B^c_{\alpha,\gamma,\delta}|E^1_{\delta}=0) \\
          &\leq \varepsilon_{2} + \mathcal{P}_{\sigma}(B^c_{\alpha,\gamma,\delta}|E^2_{\delta}=1). \label{eqw222}
      \end{align}
      Moreover, \begin{align}
          \mathcal{P}_{\sigma}(B^c_{\alpha,\gamma,\delta}|E^1_{\delta}=0) &= \sum_{w_1^n,w_2^n,m_1,m_2}\mathcal{P}_{\sigma}\bigg((w_1^n,w_2^n,m_1,m_2)\in B^c_{\alpha,\gamma,\delta}\Bigg|E^1_{\delta}=0\bigg) \label{eqa1}\\
          &= \sum_{w_1^n,w_2^n,m_1,m_2}\mathcal{P}_{\sigma}\bigg((w_1^n,w_2^n,m_1,m_2) \ \mathrm{ s.t. } \ \frac{|T_{\alpha}(w_1^n,w_2^n,m_1,m_2)|}{n}\leq 1-\gamma \Bigg|E^1_{\delta}=0\bigg) \\
          &= \mathcal{P}_{\sigma}\bigg(\frac{\#}{n}\bigg\{t, D\bigg(\mathcal{P}^{m_1,m_2}_{t} \bigg|\bigg|\mathcal{Q}_{U|W_{1}W_{2}}(\cdot | W_{1t},W_{2t})\bigg) \leq \frac{\alpha^2}{2\ln2}<1-\gamma\Bigg| E^1_{\delta}=0\bigg\} \\
          &= \mathcal{P}_{\sigma}\bigg(\frac{\#}{n}\bigg\{t, D\bigg(\mathcal{P}^{m_1,m_2}_{t} \bigg|\bigg|\mathcal{Q}_{U|W_{1}W_{2}}(\cdot | W_{1t},W_{2t})\bigg) > \frac{\alpha^2}{2\ln2} \geq \gamma\Bigg| E^1_{\delta}=0\bigg\} \label{eqa4}\\
          &\leq \frac{2\ln{2}}{\alpha^2\gamma}\cdot\mathbb{E}_{\sigma}\bigg[\frac{1}{n}\sum_{t=1}^n D\bigg(\mathcal{P}^{m_1,m_2}_{t} \bigg|\bigg|\mathcal{Q}_{U|W_{1}W_{2}}(\cdot | W_{1t},W_{2t})\bigg)\bigg] \label{eqa5}\\
          &\leq \frac{2\ln{2}}{\alpha^2\gamma}\cdot \bigg(\eta +\delta+\frac{2}{n}+2\log_2|\mathcal{U}|\cdot\mathcal{P}_{\sigma}(E^2_{\delta}=1)\bigg) \label{eqa6}
      \end{align}
      \begin{itemize}
          \item Equations \eqref{eqa1} to \eqref{eqa4} are simple reformulations. 
          \item Equation \eqref{eqa5} comes from using Markov's inequality given in lemma \ref{eqa0}.  
          \item Equation \eqref{eqa6} comes from equations \eqref{b7} and \eqref{x2b8}.
      \end{itemize}
\lemma (Markov's Inequality)\label{eqa0}. For all $\varepsilon_1>0$ , $\varepsilon_2>0$ we have:
\begin{align}
    &\mathbb{E}_{\sigma}\bigg[\frac{1}{n}\sum_{t=1}^n D\bigg(\mathcal{P}^{m_1,m_2}_{t} \bigg|\bigg|\mathcal{Q}_{U|W_{1}W_{2}}(\cdot | W_{1t},W_{2t})\bigg)\bigg] \leq \varepsilon_0 \\
    &\implies \mathcal{P}_{w_1^n,w_2^n,m_1,m_2}\bigg(\frac{\#}{n}\bigg\{t, D\bigg(\mathcal{P}^{m_1,m_2}_{t} \bigg|\bigg|\mathcal{Q}_{U|W_{1}W_{2}}(\cdot | W_{1t},W_{2t})\bigg)>\varepsilon_1\bigg\}>\varepsilon_2\bigg)\leq \frac{\varepsilon_0}{\varepsilon_1\cdot\varepsilon_2}.
\end{align} 
      \proof{of lemma \ref{eqa0}} We denote by $D_t=D(\mathcal{P}^{m_1,m_2}_{t} ||\mathcal{Q}_{U|W_{1}W_{2}}(\cdot | W_{1t},W_{2t})$ and $D^n=\{D_t\}_t$ the K-L divergence. We have that:
      \begin{align}
          \mathcal{P}\bigg(\frac{\#}{n}\bigg\{t, \mathrm{s.t.} D_t>\varepsilon_1\bigg\}>\varepsilon_2\bigg) =&\mathcal{P}\bigg(\frac{1}{n}\cdot\sum^n_{t=1}\mathbbm{1}\bigg\{D_t>\varepsilon_1\bigg\}>\varepsilon_2 \bigg) \label{markovchain1}\\
          \leq& \frac{\mathbb{E}\bigg[\frac{1}{n}\cdot\sum_{t=1}^n\mathbbm{1}\bigg\{D_t>\varepsilon_1\bigg\}\bigg]}{\varepsilon_2} \label{markovchain2}\\
          =& \frac{\frac{1}{n}\sum_{t=1}^n\mathbb{E}\bigg[\mathbbm{1}\bigg\{D_t>\varepsilon_1\bigg\}\bigg]}{\varepsilon_2} \label{markovchain3}\\
          =& \frac{\frac{1}{n}\sum_{t=1}^n\mathcal{P}\bigg(D_t>\varepsilon_1\bigg)}{\varepsilon_2} \label{markovchain4} \\
          \leq& \frac{\frac{1}{n}\sum_{t=1}^n\frac{\mathbb{E}\bigg[D_t\bigg]}{\varepsilon_1}}{\varepsilon_2} \label{markovchain5}\\
          =& \frac{1}{\varepsilon_1\cdot\varepsilon_2}\cdot \mathbb{E}\bigg[\frac{1}{n}\sum^n_{t=1}D_t\bigg] \leq \frac{\varepsilon_0}{\varepsilon_1\cdot\varepsilon_2}. \label{markovchain6}
      \end{align}
       \begin{itemize}
           \item Equations \eqref{markovchain1},  \eqref{markovchain3}, \eqref{markovchain4} and \eqref{markovchain6} are reformulations of probabilities and expectations.
           \item Equations \eqref{markovchain2} and \eqref{markovchain5}, come from Markov's inequality $\mathcal{P}(X\geq \alpha)\leq\frac{\mathbb{E}[X]}{\alpha}, \ \forall \alpha>0$.
       \end{itemize}

Combining equations  \eqref{eqwaa}, \eqref{eqw222}, and \eqref{markovchain6} we get the following statement:
\begin{align}
    \forall \epsilon_3>0, \forall \alpha>0, \forall \gamma >0, \exists \Bar{\eta}, \forall\eta\leq \Bar{\eta}, \exists \Bar{\delta}>0,\forall \delta \leq \Bar{\delta}, \exists \Bar{n} \in \mathbb{N}, \forall n\geq \Bar{n}, \exists \sigma \end{align} such that 
    \begin{align}
      \mathcal{P}_{\sigma}(B^c_{\alpha,\gamma,\delta}) \leq 2\cdot\mathcal{P}_{\sigma}(E^2_{\delta}=1) + \frac{2\ln2}{\alpha^2\gamma}\cdot\bigg(\eta+\delta+\frac{2}{n}+2\log_2|\mathcal{U}|\cdot\mathcal{P}_{\sigma}(E^2_{\delta}=1)\bigg)\leq \varepsilon_3. 
       \end{align}
       By choosing appropriately the rates $(R_1,R_2)$ in \eqref{eqwwe} and \eqref{eqwwee} such as to make $\eta >0$ small, we obtain the desired result: 
       \begin{align}
           \forall \varepsilon>0, \forall \alpha>0,\gamma>0, \exists \Bar{\delta}, \forall \delta \leq \Bar{\delta}, \exists \Bar{n}, \forall n \geq \Bar{n}, \exists \sigma, \ \mathrm{s.t}  \   1-\mathcal{P}^{\sigma}(B_{\alpha,\gamma,\delta})\leq \varepsilon.
       \end{align}
This completes the proof of achievability.
\endproof{}

\section{More Lemmas}
%\appendices{C}

\lemma{(Typical Sequences Property 1, pp.26 in \cite{elgamal})}.
The typical  sequences $(u^n,w_1^n,w_2^n) \in \mathcal{T}_{\delta}^n$ satisfy: \begin{align}
    \forall \varepsilon>0, \  \exists \Bar{\delta}>0, \  \forall \delta \leq \Bar{\delta}, \  \forall n, \  \forall (u^n,w_1^n,w_2^n) \in \mathcal{T}_{\delta}^n, \nonumber \\ \Bigg|\frac{1}{n}\cdot\log_2\frac{1}{\Pi_{t=1}^n\mathcal{P}(u|w_{1t},w_{2t})}-H(U|W_1,W_2) \Bigg|\leq \varepsilon, \end{align}   where  $\Bar{\delta}=\varepsilon\cdot H(U|W_1,W_2)$.

\label{a.20jet} 
%\lemma Let $U^n$ an i.i.d random variable. For all $\varepsilon>0$, there exists $\Bar{n}$ such that for all $n\geq\Bar{n}$ we have:
%\begin{equation}
 %   H(U^n) = n\cdot H(U). 
%\end{equation}\label{a.22jet}
%\proof{ Follows from definition of entropy and $\mathcal{P}_{U^n}=\Pi_{t=1}^n\mathcal{P}_{U_t}$.}

\lemma \label{a.2222jet} Let $U^n$ an i.i.d random variable and $M$ a random variable. For all $\varepsilon>0$, there exists $\Bar{n}\in \mathbb{N}$, such that for all $n \geq \Bar{n}$, we have  \begin{align}
    H(U^n|E_{\delta}=0) \geq n\cdot\Big(H(U)-\varepsilon).
\end{align}
\begin{proof} 
\begin{align}
    H(U^n|E_{\delta}=0)=&\frac{1}{\mathcal{P}(E_{\delta}=0)}\cdot\Big(H(U^n|E_{\delta}=1)-\mathcal{P}(E_{\delta}=1)\cdot H(U^n|E_{\delta}=1)\Big) \label{rq1} \\
    \geq& H(U^n|E_{\delta})-\mathcal{P}(E_{\delta}=1)\cdot H(U^n|E_{\delta}=1)\Big) \label{rq2} \\
    \geq& H(U^n)-H(E_{\delta})-\mathcal{P}(E_{\delta}=1)\cdot H(U^n|E_{\delta}=1)\Big) \label{rq3} \\
    \geq& H(U^n)-n\cdot \varepsilon \label{rq4} .
\end{align} 
\end{proof}

\begin{itemize}
    \item Equation \eqref{rq1} follows from the conditional entropy definition.
    \item Equation \eqref{rq2} follows since $\mathcal{P}(E_{\delta}=0) \leq 1$.
    \item Equation \eqref{rq3} comes from the property $H(U^n|M,E_{\delta})=H(U^n,M,E_{\delta})-H(M) - H(E_{\delta}) \geq H(U^n) - -H(M) - H(E_{\delta})$. 
    \item Equation \eqref{rq4} follows since $U$ is i.i.d and the definition of $E_{\delta}=1$.Hence, for all $\varepsilon$, there exists an $\Bar{n}\in \mathbb{N}$ such that for all $n \geq \Bar{n}$ we have $H(\mathcal{P}(E_{\delta}=1))+H(M)+\mathcal{P}(E_{\delta}=1)\cdot \log_2|\mathcal{U}| \leq \varepsilon.$ 
\end{itemize}

\section*{Acknowledgment}

The authors thank Tristan Tomala for fruitful discussions regarding the equation \eqref{eq:T} in 
the converse proof.

%%%%%%
%% References:
%% We recommend the usage of BibTeX:
%%
\bibliographystyle{IEEEtran}
\bibliography{sample}
%%
%% where we here have assume the existence of the files
%% definitions.bib and bibliofile.bib.
%% BibTeX documentation can be obtained at:
%% http://www.ctan.org/tex-archive/biblio/bibtex/contrib/doc/
%%%%%%

\end{document}